\newtheorem*{rep@theorem}{\rep@title}
\newcommand{\newreptheorem}[2]{%
    \newenvironment{rep#1}[1]{%
        \def\rep@title{#2 \ref{##1}}%
        \begin{rep@theorem}}%
        {\end{rep@theorem}}}
\newcommand{\ignore}[1]{}
\newtheorem{theorem}{Theorem}
\newtheorem{lemma}{Lemma}
\newtheorem{definition}{Definition}
\newtheorem{corollary}{Corollary}
\newtheorem{claim}{Claim}
\begin{document}
    \title{Coalitional Permutation Manipulations in the Gale-Shapley Algorithm}
    \author{
        Weiran Shen\thanks{To whom correspondence should be addressed.}\\
        Gaoling School of Artificial Intelligence\\
        Renmin University of China\\
        \texttt{shenweiran@ruc.edu.cn}
        \and
        Yuan Deng\\
        Google Research\\
        \texttt{yuandeng0124@gmail.com}
        \and
        Pingzhong Tang\\
        Institute for Interdisciplinary Information Sciences\\
        Tsinghua University\\
        \texttt{kenshinping@gmail.com}
    }
    \date{}
    \maketitle

\begin{abstract}
    In this paper, we consider permutation manipulations by any subset of women in the men-proposing version of the Gale-Shapley algorithm. This paper is motivated by the college admissions process in China. Our results also answer an open problem on what can be achieved by permutation manipulations. We present an efficient algorithm to find a strategy profile such that the induced matching is stable and Pareto-optimal (in the set of all achievable stable matchings) while the strategy profile itself is inconspicuous. Surprisingly, we show that such a strategy profile actually forms a Nash equilibrium of the manipulation game. In the end, we show that it is NP-complete to find a manipulation that is strictly better for all members of the coalition. This result demonstrates a sharp contrast between weakly better off outcomes and strictly better-off outcomes.
\end{abstract}

\section{Introduction}
\label{sec:intro}

Stability has been a central concept in economic design, ever since the seminal work by \citet{gale1962college}. Intensive research has been done over the years. A variety of applications of this problem have also been developed, ranging from college admissions and school choice \citep{abdulkadiroglu2003school, abdulkadirouglu2005new, gale1962college} to centralized kidney exchange programs \citep{abraham2007clearing, roth2004kidney, roth2005kidney,Liu14} to hospitals-residents matchings \citep{irving2009finding, irving2000hospitals, roth1996national} to recently proposed water right trading \cite{Liu16,Zhan17}.

% ???add hospital-resident application and its reference???

In the standard stable matching model, there is a set of men and a set of women. Each agent has a preference list over a subset of the opposite sex. A matching between men and women is stable if no two agents prefer to match with each other than their designated partners. \citet{gale1962college} put forward an algorithm, aka the Gale-Shapley algorithm, that computes a stable matching in $O(n^2)$ time. The algorithm (men-proposing version) proceeds in multiple rounds. At each round, each man proposes to his favorite woman that has not rejected him yet; and each woman keeps her favorite proposal, if any, and rejects all others. The algorithm iterates until no further proposal can be made. 
% \footnote{We use the men-proposing version throughout the paper.}

The algorithm enjoys many desirable properties. It is well-known that the matching returned by the algorithm is preferred by every man to any other stable matching, hence called the M-optimal (for men-optimal) matching. It is also known that all stable matchings form a lattice defined by such a preference relation and the M-optimal matching is the greatest element in the lattice \citep{MarriageStables}. Furthermore, men and women have strictly opposite preferences over two stable matchings: every man prefers stable matching $\mu_1$ to stable matching $\mu_2$ if and only if every woman prefers $\mu_2$ to $\mu_1$. As a result, the M-optimal matching is the W-pessimal (for women-pessimal) matching \citep{mcvitie1971stable}. The smallest element in the lattice, the W-optimal (M-pessimal) matching, can be obtained by swapping the roles of men and women.

\subsection{Motivations}
This work is motivated by the college admissions process in China~\citep{chen2017chinese}, where the stable matching model is adopted. The admissions process consists of two phases: the examination phase and the application phase. 
In the examination phase, all students are required to take the National College Entrance Examination (NCEE, aka the National Higher Education Entrance Examination), which is held nation-wide annually. %The NCEE first started in 1952, only several years after the establishment of the People's Republic of China. 
Millions of students take the NCEE every year, and the number peaked at 10.5 millions 2008.
The NCEE contains a series of exams on different subjects. After the examination, each student receives a total score which is the sum of the scores of the subjects. The total score uniquely determines an ordering of all students, which is also the preference ordering adopted by all colleges and universities, i.e., all colleges and universities share a master preference ordering.
In the application phase, each student submits an ordered list of about 4 to 6 
%(depending on the Province the student is in) 
intended colleges or universities. %At the end of the application phase, 
In the end, the Ministry of Education 
%collects the preference lists of both the students and the colleges, and 
settles the applications using the student-proposing version of the Gale-Shapley algorithm.
%-optimal version is used, i.e. the students are the ``men'' and the colleges and universities are the ``women'').

However, a major concern of the Gale-Shapley algorithm is its non-truthfulness. While it is known that the algorithm is group strategy-proof\footnote{Precisely, {\em group strategy-proof} means no coalition manipulation can make all men in the coalition strictly better off, in this context. If considering the case where no man is worse off and at least one man is strictly better off, the Gale-Shapley algorithm is not group strategy-proof~\cite{huang2006cheating}.} for all men \citep{dubins1981machiavelli}, it is not truthful for women. In fact, \citet{roth1982economics} shows that there is no stable matching algorithm that is strategy-proof for all agents. 

Such an undesirable property gives rise to the so-called ``manipulation'' problem for the women. In China's college admissions process, besides the NCEE, some top universities are also allowed to conduct the so-called {\em independent recruitment} exams. 
The independent recruitment programs date back to 2003 and is part of the college admissions reform of China. %, which aims to test students' special abilities in creativity, imagination and learning skills. 
%These universities also hold nation-wide initiative exams among selected applicants, aiming to collect interested students and test them in ways that the universities want to. 
The exams are extremely competitive since only a very small fraction of applicants are selected to take such exams. 
These universities promise to the students who perform well in these exams that, when applying to these universities, a certain amount of extra scores 
%(typically 30 to 60, depending on the student's performance) 
will be added to the their NCEE total scores. In other words, these universities can change their ordering of students by moving some students to a higher rank.

It is worth mentioning that such independent recruitment programs are initially designed so that the universities can test the students in their own ways and thus increase the admission quality. However, such programs may also give the universities the ability to change the admissions result by changing the ordering of the students in their preference ordering.
%In other words, if one thinks of the entrance exam ordering as the true preference ordering of colleges, the initiative exam is an instance of manipulation by promoting students in their preference lists.

Starting from 2010, several leagues of such universities emerged, with the two most influential leagues represented by China's two major universities, the Tsinghua University (the Tsinghua league) and the Peking University (the Peking league). Each league contains universities of similar types and tiers. Thus universities of the same league attract about the same set of students, 
and they conduct the independent recruitment programs together~\citep{pku_news_2011}.
%, i.e. the same set of exams are shared among these universities. 
The benefits of such leagues are obvious: (1) the costs of organizing such exams are greatly reduced since they are shared by the universities; (2) the students only need to participate in one such exam instead of many. However, such leagues are widely conjectured to be beneficial to universities inside the leagues when it comes to the quality of finally admitted students, since they can jointly manipulate the admissions result to benefit them all. Besides cooperations, the universities in the same league are also faced with the problem of competition because they share a similar set of candidate students. In 2012, two top universities (Fudan University and Nankai University) quit the Peking league, both claiming that they were not able to recruit their desired students. Such leagues were urged to dissolve in 2015 by the Ministry of Education for the belief that it is unfair for universities that are not in any of the leagues.

\subsection{Results}% and Techniques}

We study the problem where a coalition of women (universities) can manipulate the Gale-Shapley algorithm. Most existing works consider the general case where women can report any preference list (potentially incomplete) without ties. In contrast, we focus on the setting where all women must report a complete list, which indicates that women can only permute their true preferences. This type of manipulation comes directly from the independent recruitment programs in China, where the universities can only permute the ordering of the students by adding scores to some of them, but are not allowed to remove any student from the lists.
	
	We model the coalition manipulation problem as a game among the members of the coalition (called the manipulation game hereafter). We first show that a coalition of women could get worse off if they perform their optimal single-agent manipulation separately (see Table~\ref{hardexample} for details). This result confirms that there are conflicts between different universities in the same league so that they need to find a way to manipulate jointly to achieve a better outcome. %We present an efficient algorithm to find a Pareto-optimal strategy profile, whose induced matching is Pareto-optimal and stable with respect to the true preference lists (Theorem~\ref{theo:main_permutate} and Algorithm~\ref{algorithm}). From the algorithm, we give an algorithmic characterization of Pareto-optimal and stable matchings: a matching is Pareto-optimal and stable with respect to the true preference lists if and only if it can be found by our algorithm (Theorem~\ref{theo:FindAllPareto}). Moreover, we provide a result regarding inconspicuous manipulations, where each woman of the coalition only moves one man to a higher rank: any matching, that is stable with respect to the true preference lists and can be achieved by permutation manipulations, can also be obtained by an inconspicuous manipulation (Theorem~\ref{TinyChange}). We show that a strategy profile, that is Pareto-optimal and inconspicuous, forms a Nash equilibrium (Theorem~\ref{thm:charOfNE}).
	
	We present an efficient algorithm to find a strategy profile such that (1) the induced matching is stable with respect to the true preference, (2) the induced matching is Pareto-optimal among all stable matchings that can be achieved by coalitional permutation manipulations (We say such a matching is $S_L$-Pareto-optimal. See Definition \ref{def:pareto}), and (3) the strategy profile is inconspicuous, where inconspicuous manipulations are those in which each woman of the coalition only moves one man to a higher rank (Algorithm~\ref{algorithm} and Algorithm~\ref{algorithm2}). Surprisingly, we show that such a strategy profile actually forms a Nash equilibrium of the manipulation game (Theorem~\ref{thm:charOfNE}). Therefore, the strategy profile found by our algorithm captures both the cooperation and the competition among the universities in the same league. This result implies that it is computationally easy to find a ``profitable'' manipulation that is weakly better off and $S_L$-Pareto-optimal for all members of the coalition, supporting the wide conjecture that such leagues of universities can benefit from forming coalitions. 
	
	All these results confirm the belief of the Ministry of Education that such leagues of universities are unfair for other universities. In the end, we show that it is NP-complete to find a manipulation that is strictly better off for all members of the coalition (Theorem~\ref{theo:hardness_better}). This result demonstrates a sharp contrast between weakly better off outcomes and strictly better off outcomes: if a manipulation is costly so that every manipulator must be strictly better off to ensure nonnegative payoff, a coalition manipulation is unlikely to happen due to computational burdens. 
	
	Our results also give answers to the open problem raised by \citet{Gusfield:1989} on what can be produced by permutation manipulations (see also \citep{kobayashi2010cheating} and \citep{sukegawa2012preference} for more of the problem).  

\subsection{Additional Related Works}

%@Yuan and Weiran, cite the EC best paper last year and Yuan's AAAI-17 work?

%The above impossibility theorem by Roth initiates an interesting literature of finding manipulations for women by fixing men preferences in the Gale-Shapley algorithm. \citet{knuth1990stable} show that any woman may have at least $\left(\frac{1}{2}-\epsilon\right)\ln n$ and at most $\left(1+\epsilon\right)\ln n$ different partners in all possible stable matchings, where $n$ is the number of men and $\epsilon$ is a positive constant. 
There is a large body of literature that focuses on finding manipulations for women when fixing men's preferences in the Gale-Shapley algorithm. \citet{gale1985ms} show that it is possible for all women to strategically truncate their preference lists so that each of them is matched with their partner in the W-optimal matching, and \citet{teo2001gale} provide a polynomial time algorithm to find the optimal single-agent truncation manipulation. 

%\citet{jaramillo2014exhaustiveness} study dropping strategies in a many-to-many setting where agents can remove some entries of their true preference lists but no shuffling is allowed. They show that dropping strategies are exhaustive in this setting, i.e., any stable matching can be replicated or improved using some dropping strategies. \citet{gonczarowski2013manipulation} also looks into dropping strategies but focuses on manipulations by all women and gives a tight upper bound on the number of men that must be deleted to reach the W-optimal matching in the worst case. 

\citet{teo2001gale} study permutation manipulations, where a woman can report any permutation of her true preference list.
%\footnote{This type of manipulations is widely studied in the social choice domain, e.g. \citet{gibbard1973manipulation}.}. 
Their work is motivated by the primary student assignment process in Singapore. They give an efficient algorithm to compute the best manipulation for a single manipulator. Recently, \citet{ijcai2017-62} follows an early version of this paper and shows that the resulting matching from optimal singleton permutation manipulation is stable with respect to true preference lists and there exists an inconspicuous singleton manipulation which is optimal. 
\citet{gupta2015stable} extends the algorithm from \citet{teo2001gale} to the so-called $P$-stable (stable with respect to preferences $P$) Nash equilibrium setting. \citet{aziz2015susceptibility} also study permutation manipulations in a many-to-one setting, but focus on a single manipulator with quota more than one. \citet{pini2009manipulation} create a stable matching mechanism and show that for a single agent, it is computationally hard to manipulate the matching result. All the results, except for the last, do not apply to cases where a coalition of women jointly manipulate. 

There is also a line of works that consider the controlled school choice problem with diversity constraints, where the students fall into different categories and each school has minimum quotas for different quotas \citep{abdulkadiroglu2003school,echenique2015control,gonczarowski2019matching,baswana2019centralized}. \citet{ehlers2014school} show that a solution may not exist if the diversity constraints are hard. Therefore, recent works on this topic mostly focus on the setting where the diversity constraints are soft constraints \citep{hafalir2013effective,kurata2017controlled}.

\section{Preliminaries}
\label{sec:pre}   

We consider a stable matching model with a set of men $M$ and a set of women $W$, where only complete and strict preference lists are allowed.\footnote{We consider the case where men also report complete preference lists for simplicity. Our result can be generalized to the case where men may report incomplete preference lists.}  
The preference list of a man $m$, denoted by $P(m)$, in a preference profile $P$, is a strict total order $\succ^P_m$ over the set of women $W$. Let $w_1 \succ^P_m w_2$ denote that $m$ prefers $w_1$ to $w_2$ in profile $P$. Similarly, the preference list $P(w)$ of a woman $w$ is a strict total order over $M$. For simplicity, we sometimes use $\succ_w$ to denote the true preference list when it is clear from the context. 

A matching is a function $\mu : M \cup W \mapsto M \cup W$. We write $\mu(m)=w$ if a man $m$ is matched to a woman $w$. Similarly, $\mu(w)=m$ if $w$ is matched to $m$. Also, $\mu(m)=w$ if and only if $\mu(w)=m$. We will also write $\mu(m)=m$ (or $\mu(w)=w$) if $m$ (or $w$) is left unmatched. For two matchings $\mu_1$ and $\mu_2$, if for all $w \in W$, $\mu_1(w) \succeq_w \mu_2(w)$, we say $\mu_1 \succeq_W \mu_2$. 
%A matching is \emph{individually rational} if no one is matched to someone who is not in his or her preference list. 
If in a matching $\mu$, a man $m$ and a woman $w$ are not matched together, yet prefer each other to their partners in $\mu$, then $(m, w)$ is called a \emph{blocking pair}. A matching is \emph{stable} if and only if it contains no blocking pair. 

The Gale-Shapley algorithm is not truthful for women~\cite{dubins1981machiavelli}. Given a set of women manipulators, the algorithm can be thought of as a game (henceforth, the {\em manipulation game}), between them. % Let $L\subseteq W$ be the set of manipulators and $N = W \setminus L$ be the set of non-manipulators. 

\begin{definition}[Manipulation game]
Given a true preference profile $P$,  a manipulation game is a tuple $(L,\mathcal A)$, where:
%\vspace{-0.3em}
\begin{enumerate}%\setlength\itemsep{-0.1em}
	\item $L\subseteq W$ is the set of manipulators;
	\item $\mathcal A = \prod_{i\in L} A_i$ is the set of all possible reported preference profiles. 
\end{enumerate} 

\end{definition}

The outcome of the manipulation game (also called induced matching in this paper) is the matching resulted from the Gale-Shapley algorithm with respect to the reported preference profiles.  A manipulator's preference in this game is her true preference in $P$.

Motivated by the NCEE in China, we focus on the setting where all women must report a complete list of men, which indicates that women can only permute their true preferences in the manipulation.

\begin{definition}[Permutation manipulation]\label{permutation}
Let $\mathbb{O}$ be the set of strict total orders over $M$. In permutation manipulations, $A_i=\mathbb{O}$, $\forall i \in L$.
\end{definition}

%\begin{remark}
%	In a manipulation game, only manipulators $L$ are considered as players, i.e., the reported preference profiles for all men and non-manipulators are their true preference profiles.  The set of all possible preferences $A_i$ depends on different manipulation types and we only consider the case where all manipulators use the same type of manipulations.
%\end{remark}

%We now define three types of manipulations~\citep{gonczarowski2013manipulation, TwoSidedMatching, kobayashi2010cheating}, which determines the elements in $A^L$.

%\begin{definition}[General manipulation]\label{blacklist}
%    %When agents are allowed to report incomplete preference lists, 
%    Let $\mathbb{O}_b$ be the set of strict total orders over any subset of $M$. %A woman is using a general manipulation if she reports a false preference list within $\mathbb{O}_p$.
%    In general manipulations, $A_i=\mathbb O_b, \forall i\in L$.
%    %Let $\mathbb{O}_p$ be the set of all strict partial orders defined over $M$. If a woman reports a false preference list which is in $\mathbb{O}_p$, we say she is using a general manipulation.
%\end{definition}
%
%\begin{definition}[Truncation manipulation]\label{truncation}
%Let $(m_1, m_2, \dots, m_k)$ be a woman $i$'s true preference list. In truncation manipulations, $A_i=\{(m_1, m_2, \dots, m_j) ~|~ \forall j \leqslant k\}$, $\forall i \in L$.
%\end{definition}

%Clearly, the truncation manipulation is a special case of the general manipulation. In permutation manipulations, each woman takes interest in all men and a manipulation is a total order on $M$.

Let $P(M)=(P(m) : m \in M)$ be the true preference profile of all men. Similarly, denote the true preference profiles for all women, all manipulators and all non-manipulators by $P(W)$, $P(L)$ and $P(N)$, respectively, where $N = W \setminus L$ is the set of non-manipulators. Thus the overall true preference profile is $P = (P(M), P(N), P(L))$. Denote by $S(P(M), P(W))$ the set of all stable matchings under profile $(P(M), P(W))$. Let $S_L(P(M), P(W)) \subseteq S(P(M), P(W))$ be the set of all stable matchings that can be achieved by a coalition manipulation of $L$. Note that all matchings in $S_L(P(M), P(W))$ are stable with respect to the true preference profile, since $S_L(P(M), P(W))$ is a subset of $S(P(M), P(W))$ We sometimes write $S_L$ for short when $(P(M), P(W))$ is clear from the context. We define Pareto-optimality within the set $S_L$.

\begin{definition}[$S_L$-Pareto-optimal matching]
    \label{def:pareto}
    A matching $\mu$ is $S_L$-Pareto-optimal if $\mu\in S_L$ and there is no $\mu'\in S_L$ such that all manipulators are weakly better off and at least one is strictly better off.  
\end{definition}

    We say a strategy profile $P(L)$ of a manipulation game is $S_L$-Pareto-optimal if its induced matching is $S_L$-Pareto-optimal. In a manipulation game, the solution concept we are interested in is Nash equilibrium.

\begin{definition}[Nash equilibrium]
	A preference profile $P(L)= \bigcup_{l\in L} P(l)$ of a manipulation game is a Nash equilibrium, if $~\forall l\in L$, $l$ cannot get a strictly better partner with respect to the true preference list by reporting any other preference list.
	%$P(l)$ is a best response to $P(L) \setminus \{P(l)\}$.
\end{definition}

Our algorithm is enabled by two special structures, the {\em rotation} \citep{gusfield1987three} and the {\em modified suitor graph}.

\subsection{Rotation}\label{subsec:rotation}

The concept of rotation was first introduced by \citet{irving1985efficient} when solving the stable roommate problem, which is a natural generalization of the stable marriage problem.

In the Gale-Shapley algorithm, if a woman $w_i$ rejects a man $m_j$, then $w_i$ must have a better partner than $m_j$ in the W-pessimal matching. Thus in any stable matching, $w_i$ cannot be matched with any man ranked below $m_j$ in $w_i$'s list. As a result, we can safely remove all impossible partners from each man or woman's preference list after each iteration of the algorithm. We call each man or woman's preference list after the removal a {\em reduced list}, and the set of all reduced lists a {\em reduced table}.

\begin{definition}[Rotation \citep{irving1985efficient}]
	A rotation $R=(m_1, m_2, \dots, m_r)$ is a sequence of men, where the first woman in the reduced list of $m_{i+1}$ is the second in that of $m_i$ ($i+1$ is taken modulo r).
\end{definition}

Note that rotations are known as {\em improvement cycles} in some literature and is useful in converting the M-optimal matching to the W-optimal matching~\citep{Ashlagi:2013:URM:2492002.2482590, gonczarowski2013manipulation, immorlica2005marriage}.

We also use $R=(\mathcal{M},\mathcal{W},\mathcal{W}')$ to represent a rotation, where $\mathcal{M}$ is the sequence of men and $\mathcal{W}$ and $\mathcal{W}'$ are the sequences of the first and the second women in $\mathcal{M}$'s reduced lists. Since $\mathcal{W}_{i+1}=\mathcal{W}'_i$ according to the definition of rotations, we write $\mathcal{W}^r=\mathcal{W}'$, where $\mathcal{W}^r$ is the sequence $\mathcal{W}$ with each woman shifted left by one position. We may sometimes use $m_i$ and $w_i$ to mean the $i$-th agent in $\mathcal{M}$ and $\mathcal{W}$ when the order is important. 

After the termination of the Gale-Shapley algorithm, one can still change the matching by eliminating rotations. The elimination of a rotation $R$ is to force each woman $w_i$ in $\mathcal{W}$ to reject her current proposer $m_i$ and let $m_i$ propose to $w_{i+1}$. It is clear that after the elimination, each woman still holds a proposal, i.e., there is still a matching between men and women. More importantly, it can be shown that the matching is stable with respect to the true preference. We say a rotation $R=(\mathcal{M},\mathcal{W},\mathcal{W}^r)$ moves $m_i$ from $w_i$ to $w_{i+1}$ and moves $w_i$ from $m_i$ to $m_{i-1}$ since after eliminating the rotation, the corresponding matching matches $m_i$ and $w_{i+1}$. 
It is known that each stable matching corresponds to a set of rotations, and there exists an order of elimination that produces the matching, which we do not discuss in detail here, but refer readers to \cite{Gusfield:1989}.

\subsection{Modified Suitor Graph}\label{subsec:suitor}

Suitor graph is another important structure in our analysis. It is initially proposed by \citet{kobayashi2009successful,kobayashi2010cheating} when considering the following problem: given a matching $\mu$ and a preference profile for all men $P(M)$, is there a profile $P(W)$ for the women, such that the M-optimal matching of the combined preference profile is $\mu$?

We change the definition to suit our setting and propose the \emph{modified suitor graph} as follows:

\begin{definition}[Modified Suitor Graph] \label{SuitorGraph}
Given a matching $\mu$, a preference profile for all men $P(M)$ and a preference profile for all non-manipulators $P(N)$, the modified suitor graph $G(P(M), P(N), \mu)$ is a directed graph $(V, E)$, which can be constructed using the following steps:
\begin{enumerate}%\setlength\itemsep{-0.2em}
	\item $V=M \cup W \cup \{s\}$, where $s$ is a virtual vertex;
	\item $\forall w \in W$, add edges $(w, \mu(w))$ and $(\mu(w), w)$;
	\item Let $\delta(w) = \{m ~|~ w \succ_m \mu(m)\}$. $\forall w \in L$ and for each $m$ in $\delta(w)$, add edges $(m, w)$;
	\item $\forall w \in N$, if $\delta(w)$ is nonempty, add the edge $(m, w)$, where $m$ is $w$'s favorite in $\delta(w)$;
	\item $\forall w \in W$, if $\delta(w) = \emptyset$, add an edge $(s, w)$ to the graph;
\end{enumerate}
\end{definition}

Our definition of suitor graph is slightly different from the original ones. A detailed discussion on the differences is provided in \ref{app:suitor_graph}.

\citet{kobayashi2009successful,kobayashi2010cheating} give a characterization of the existence of a profile $P(W)$ and an $O(n^2)$ time algorithm that can be derived directly from their constructive proof. Here, we provide a similar result tailored for our setting and will be useful for later analyses.

\begin{lemma}\label{CharOfProfile}
    Given a matching $\mu$, a preference profile with $P(M)$ for all men and $P(N)$ for all non-manipulators, there exists a profile for the manipulators $P(L)$ such that $\mu$ is the M-optimal stable matching for the total preference profile $(P(M), P(N), P(L))$, if and only if for every vertex $v$ in the corresponding suitor graph $G(P(M), P(N), \mu)$, there exists a directed path from $s$ to $v$ ($s$ is the virtual vertex in the graph). Moreover, if such a $P(L)$ exists, it can be constructed in $O(n^2)$.
\end{lemma}
The proof of the above lemma is similar to the arguments by \citet{kobayashi2009successful}, and thus is omitted in this paper.

%\subsection{Outline of the paper}
%
%We first develop our algorithm to find a $S_L$-Pareto-optimal profile and give an algorithmic characterization of $S_L$-Pareto-optimal stable matchings in Section~\ref{sec:pareto}. We show inconspicuous manipulations and the equivalence between $S_L$-Pareto-optimal stable matchings and matchings induced by Nash equilibrium profiles in Section~\ref{sec:incentive}. In Section~\ref{sec:ssne}, we discuss strong and super-strong Nash equilibrium.

\section{$S_L$-Pareto-optimal Strategy Profiles}\label{sec:pareto}
%\label{sec:permute}
We analyze the manipulation problem in the independent recruitment program of China's universities. In fact, this is also an open problem raised by \citet{Gusfield:1989} on what can be achieved by permutation manipulations. 
%We are interested in algorithmic characterizations of solution concepts such as $S_L$-Pareto-optimal strategy profiles and super-strong Nash equilibria. 
Formally, we have the following results in this section.
\begin{theorem}\label{theo:main_permutate}
	There exists a polynomial time algorithm (Algorithm~\ref{algorithm}) that, given any complete preference profile $P$ and any set of manipulators $L\subseteq W$ as input, computes a strategy profile $P'(L)$ such that when $L$ reports $P'(L)$, the induced matching $\mu'$ is $S_L$-Pareto-optimal.\footnote{It is weakly better off for all manipulators to follow the strategy $P'(L)$ rather than $P$, since $\mu'$ is stable under $P$, which is preferred by each manipulator to the W-pessimal matching under $P$.}
\end{theorem}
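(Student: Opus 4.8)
The plan is to build the manipulators' preferences from a carefully chosen feasible matching, certify feasibility via the suitor graph, and then push that matching as far down the women's lattice as possible by iteratively eliminating rotations, stopping exactly when no further improvement is possible — at which point Pareto-optimality follows essentially by construction. Concretely, I would start from the W-pessimal (= M-optimal) matching $\mu_0$ under the true profile $(P(M),P(W))$, which is always feasible (it is stable under $P$, and the manipulators can report their true lists, or a permutation that fixes $\mu_0$). The guiding principle is that a matching $\mu$ stable under $(P(M),P(W))$ can be induced by some permutation manipulation of $L$ if and only if a suitable suitor graph — constructed from $\mu$ and the fixed preferences $P(M)$, $P(N)$ — has the right connectivity property, namely that every man $\mu(l)$ assigned to a manipulator $l$ is reachable along proposal-chains in a way that lets us reconstruct a permutation preference list for $l$ making the Gale–Shapley run terminate at $\mu$. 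The first main step is therefore to state and prove this feasibility characterization in terms of the suitor graph (following \citet{kobayashi2010cheating}), specialized to the case where only $L$ may deviate and only permutations are allowed.

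**Next I would** set up the rotation machinery. Rotations partition the set of stable matchings under $(P(M),P(W))$ into a distributive lattice: eliminating an exposed rotation moves from one stable matching to a strictly W-better one, and every stable matching corresponds to a closed set of rotations in the rotation poset. The key subtlety the introduction flags is that eliminating a rotation is only \emph{useful} to the manipulators if that rotation involves a man currently matched to a manipulator — otherwise we change the matching on non-manipulators without helping $L$, and may even destroy feasibility. So I would define, for a closed set $\rho$ of rotations, what it means for $\rho$ to be \emph{feasible} (the resulting stable matching $\mu_\rho$ admits a permutation manipulation by $L$, i.e. passes the suitor-graph test), and prove the structural lemma that a closed set is feasible iff its maximal rotations all ``touch'' $L$ — this is where the new notions of \emph{maximal rotation} and \emph{principle set} enter. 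Because the number of closed sets is exponential, Algorithm~\ref{algorithm} instead works with principle sets: polynomially many candidate closed sets such that the feasible closed sets attainable by the manipulators are exactly those reachable by iteratively eliminating feasible principle sets. I would prove (i) each elimination step preserves feasibility and strictly improves at least one manipulator (hence the process terminates in polynomially many steps, each polynomial-time), and (ii) when no feasible principle set remains, the current matching $\mu'$ is a lattice-minimal feasible matching in the appropriate sense.

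**The final step** is to convert ``no further rotation elimination helps'' into Pareto-optimality over \emph{all} women, not just manipulators, and over \emph{all} feasible matchings, not just those reachable by our greedy process. Suppose for contradiction some feasible matching $\nu$ weakly dominates $\mu'$ for every woman with at least one strict improvement. Since both $\mu'$ and $\nu$ are stable under $(P(M),P(W))$ (feasibility), and $\nu \succeq_W \mu'$, in the rotation lattice $\nu$'s closed set is contained in $\mu'$'s, so $\mu'$ is obtained from $\nu$ by eliminating some nonempty closed set of rotations $R$; the strict-improvement condition forces $R$ to contain at least one rotation moving a woman who is strictly better off under $\nu$, and — using the opposite-preferences property of the lattice — the men on that rotation move in the opposite direction, which I would argue implies $R$ (or a maximal sub-rotation of it) touches $L$ and hence was a feasible elimination available at $\mu'$, contradicting termination. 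The delicate point, and the step I expect to be the main obstacle, is exactly this last implication: showing that any rotation in $R$ whose elimination (from $\nu$'s side) we reversed to reach $\mu'$ must be \emph{captured} by a feasible principle set, so that our algorithm would not have stopped. This requires the careful analysis of how the suitor graph changes under rotation elimination and a precise accounting of which manipulators are ``active'' on the boundary of a principle set — the technical heart of the section, and the reason the paper needs the original machinery rather than a direct extension of \citet{teo2001gale}.
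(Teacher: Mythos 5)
Your overall architecture matches the paper's (suitor-graph feasibility test, rotation lattice, greedy elimination of principle sets, contradiction at termination), but the two places where your argument must actually close are exactly where it does not. First, your proposed structural lemma, ``a closed set is feasible iff its maximal rotations all touch $L$,'' is only correct in one direction. The paper proves necessity (Lemma~\ref{EliContainLiar}: if a closed set can be eliminated, every maximal rotation contains a manipulator), but sufficiency is false as a characterization: feasibility is governed by reachability from the virtual vertex $s$ in the suitor graph after elimination (Lemma~\ref{Eliminatable}), and a maximal rotation that does contain a manipulator can still end up as a strongly connected component with no incoming edge from outside, hence unreachable from $s$. Your final contradiction step leans on the false direction: from ``$R$ touches $L$'' you conclude ``hence was a feasible elimination available at $\mu'$,'' which does not follow.

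Second, the step you yourself call the main obstacle is precisely what the paper supplies and what is missing from your sketch. The paper proves (a) Theorem~\ref{CloSetEliminatable}: any closed set that can be eliminated contains a principle set $CloSet(R)$ that can itself be eliminated --- shown by taking a path from $s$ to a vertex of the closed set in the post-elimination suitor graph, observing that the first vertex of the path lying in the closed set must be a manipulator and that the path prefix is untouched by eliminating $CloSet(R)$; and (b) Lemma~\ref{SubProblem}: a construction of a modified preference profile whose initial reduced table coincides with the reduced table at the current matching $\mu'$, so that (a) can be invoked at $\mu'$ (not just at the W-pessimal starting point) and the eliminable principle set found in the subproblem transfers back to the original instance. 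Without (a) and (b), ``no feasible principle set remains at $\mu'$'' does not imply that no dominating feasible matching exists, so Pareto-optimality is not established. A smaller slip: since eliminating rotations makes women weakly better off, $\nu \succeq_W \mu'$ means $\nu$'s closed set \emph{contains} $\mu'$'s (not the reverse); the relevant object is the difference set, which must still be eliminable starting from $\mu'$, and that is exactly what (a) and (b) certify.
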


We will provide our algorithms and proof ideas behind Theorem \ref{theo:main_permutate} in Section~\ref{sec:algo}. Moreover, our algorithm provides an algorithmic characterization of $S_L$-Pareto-optimal matchings.
\begin{theorem}\label{theo:FindAllPareto}
	A matching is $S_L$-Pareto-optimal if and only if it is an induced matching of a strategy profile found by Algorithm \ref{algorithm}.
\end{theorem}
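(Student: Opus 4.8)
The plan is to prove the two implications separately. The direction ``outcome of Algorithm~\ref{algorithm} $\Rightarrow$ Pareto-optimal'' is precisely Theorem~\ref{theo:main_permutate}, applied to an arbitrary run: Algorithm~\ref{algorithm} is the nondeterministic process that, while some feasible principle set can be eliminated from the current configuration, eliminates one, and whose outcomes are the terminal configurations together with the reported profiles attached to them. So only the converse needs new work, and I would first record two reductions. First, unwinding the definitions, a feasible matching is Pareto-optimal exactly when it is $\succeq_W$-maximal in $S_A$: ``all women weakly better off'' is by definition $\succeq_W$, and by Assumption~\ref{assump} every feasible matching is stable under the true profile, so any feasible $\mu'$ with $\mu'\succeq_W\mu$ and $\mu'\ne\mu$ already makes some woman strictly better off. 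Second, a strategy profile is Pareto-optimal iff its induced matching is, so (modulo a routine reconciliation between a reachable matching and the reported profiles that induce it through the suitor-graph construction) it suffices to show that the matchings arising as terminal configurations of Algorithm~\ref{algorithm} are exactly the $\succeq_W$-maximal elements of $S_A$.

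For the setup I would use the rotation poset $\Pi$ of $S(P(M),P(W))$ together with the feasibility analysis developed for Algorithm~\ref{algorithm}. By Assumption~\ref{assump}, $S_A\subseteq S(P(M),P(W))$; the M-optimal (equivalently W-pessimal) matching $\mu_M$ is the $\succeq_W$-least stable matching, is feasible (truthful reporting induces it), and is the start configuration of Algorithm~\ref{algorithm}. Stable matchings correspond bijectively to closed sets of $\Pi$: writing $\mu_C$ for the matching of a closed set $C$, we have $\mu_C\succeq_W\mu_{C'}\iff C\supseteq C'$, and the passage $\mu_{C'}\to\mu_C$ for $C'\subseteq C$ is realized by eliminating the rotations of $C\setminus C'$ in any $\Pi$-consistent order. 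I would invoke the feasibility characterization of closed sets (a suitor-graph connectivity condition, whose necessary part is that every $\Pi$-maximal rotation of $C$ contains a manipulator), the fact that eliminating a feasible principle set carries one feasible matching to another, and the fact that a nonempty principle set strictly improves at least one woman (each eliminated rotation promotes some woman).

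Now the converse. Let $\mu^*$ be a Pareto-optimal feasible matching and $C^*$ its closed set. I would establish (a) every run that reaches configuration $C^*$ halts there, and (b) some run reaches $C^*$. Part (a) is immediate: if a feasible principle set were eliminable at $C^*$, the result would be a feasible matching $\mu''$ with $\mu''\succeq_W\mu^*$ and $\mu''\ne\mu^*$, contradicting the $\succeq_W$-maximality of $\mu^*$. Part (b) is the heart of the matter and rests on what I will call the \emph{completeness lemma}: every feasible closed set $C^*$ is the top of a chain $\emptyset=C_0\subsetneq C_1\subsetneq\cdots\subsetneq C_k=C^*$ in which each $C_i$ is feasible and each $C_{i+1}$ is obtained from $C_i$ by eliminating a feasible principle set. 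Such a chain is literally a legal run of Algorithm~\ref{algorithm} ending at $C^*$, and by (a) it is terminal; hence $\mu^*$ is a terminal-configuration matching. I would prove the completeness lemma by induction on $|C^*|$: when $C^*\ne\emptyset$, exhibit a feasible principle set $D$ with $\langle D\rangle\subseteq C^*$ (where $\langle\cdot\rangle$ is $\Pi$-closure) and with $\langle D\rangle$ feasible, then apply the induction hypothesis to the residual problem of reaching $C^*$ from $\langle D\rangle$ — the interval $[\langle D\rangle,C^*]$ in the closed-set lattice is the closed-set lattice of $\Pi$ restricted to $C^*\setminus\langle D\rangle$, and the feasibility notions relativize to it.

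The main obstacle is the completeness lemma, and within it the step ``$C^*$ feasible and nonempty $\Rightarrow$ there is a feasible principle set $D$ with $\langle D\rangle\subseteq C^*$ and $\langle D\rangle$ feasible.'' This requires a careful study of how the suitor graph and its connectivity change under rotation elimination, and it is inseparable from pinning down the definition of \emph{principle set} so that four requirements hold simultaneously: there are only polynomially many of them (needed already for Theorem~\ref{theo:main_permutate}); eliminating one from a feasible configuration preserves feasibility; they are abundant enough to build every feasible closed set (the completeness lemma); and the impossibility of eliminating any of them characterizes $\succeq_W$-maximality in $S_A$ (part (a)). Getting all four to hold at once is where essentially all the work lies; granted these structural facts about principle sets, the argument above is bookkeeping plus an appeal to Theorem~\ref{theo:main_permutate}.
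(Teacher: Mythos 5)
Your proposal is correct and follows essentially the same route as the paper: the forward direction is delegated to Theorem~\ref{theo:main_permutate}, and the converse is established by constructing a legal run of Algorithm~\ref{algorithm} that always eliminates a principle set contained in the target closed set $\mathcal{R}_\mu$, halting there by Pareto-optimality. The ``completeness lemma'' step you single out as the main obstacle is exactly what the paper already supplies before this theorem, namely Theorem~\ref{CloSetEliminatable} (any eliminable closed set contains an eliminable principle set) combined with Lemma~\ref{SubProblem} (after eliminating a feasible closed set the residual instance is again a manipulation problem with the remaining rotations, so feasibility relativizes to the interval $[\langle D\rangle, C^*]$), so no new structural work is required.
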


The full proofs of this section are deferred to \ref{app:proof_algo}.

\subsection{Conflicts between Manipulators}
Before we develop our algorithms, we first show an example to demonstrate that a coalition of women could get worse off if they perform their optimal single-agent manipulation separately.

\begin{table}[!htbp]
    \centering
	\begin{tabular}{c|cccccc|cccc}
		\cline{1-5} \cline{7-11}
		$m_1$ & $w_1$ & $w_4$ & $w_2$ & $w_3$  & \hspace{5em} & $w_1$ & $m_3$ & $m_2$ & $m_1$ & $m_4$  \\ \cline{1-5} \cline{7-11} 
		$m_2$ & $w_1$ & $w_3$ & $w_2$ & $w_4$  & \hspace{5em} & $w_2$ &  $m_1$ & $m_4$ & $m_3$ & $m_2$  \\ \cline{1-5} \cline{7-11} 
		$m_3$ & $w_2$ & $w_3$ & $w_1$ & $w_4$  & \hspace{5em} & $w_3$ &  $m_2$ & $m_3$ & $m_1$ & $m_4$  \\ \cline{1-5} \cline{7-11} 
		$m_4$ & $w_2$ & $w_4$ & $w_1$ & $w_3$  & \hspace{5em} & $w_4$ &  $m_4$ & $m_1$ & $m_3$ & $m_2$  \\ \cline{1-5} \cline{7-11}
		\multicolumn{5}{c}{Men's preferences} &  & \multicolumn{5}{c}{Women's preferences}          
	\end{tabular}

	\caption{Example of non-cooperativeness.}
	\label{hardexample}
\end{table}
		
Consider the preference lists in Table~\ref{hardexample}. The M-optimal matching is $\{(m_1, w_4),$ $(m_2, w_1), (m_3, w_3), (m_4, w_2)\}$. Suppose $L = \{w_1, w_2\}$ and consider individual manipulations by $w_1$ and $w_2$.
%\vspace{-0.3em}
\begin{enumerate}%\setlength\itemsep{-0.2em}
	\item $w_1$ exchanges $m_1$ and $m_2$ and get
	$\{(m_1, w_4), (m_2, w_3), (m_3, w_1), (m_4, w_2)\}$;
	\item $w_2$ exchanges $m_3$ and $m_4$ and get
	$\{(m_1, w_2), (m_2, w_1), (m_3, w_3), (m_4, w_4)\}$.
\end{enumerate} 
In both cases, $w_1$ and $w_2$ can manipulate to get their W-optimal partner and these manipulations are their optimal single-agent manipulation. However, if they jointly perform their optimal single-agent manipulations, the induced matching is $(m_1, w_1), (m_2, w_3), (m_3, w_2),$ $(m_4, w_4)$. It is surprising that they both get worse off than the matching resulted from their true preference lists. 

This example shows a sharp contrast between permutation manipulations and general manipulations, where removing men from the preference lists is allowed. In general manipulations, women can jointly perform their optimal single-agent manipulations to be matched with their W-optimal partner \citep{teo2001gale, gale1985ms}.

\subsection{Our Algorithm}
\label{sec:algo}
To develop our algorithm, we extensively use two structures, rotations and modified suitor graphs, introduced in Section~\ref{subsec:rotation} and~\ref{subsec:suitor}, respectively. We further develop several new structures such as maximal rotations and principle sets to derive connections between the modified suitor graphs and permutation manipulations.

Notice that eliminating more rotations results in 
%weakly worse matchings for all men, which are also 
weakly better matchings for all women.
%since all men in the rotation are rejected and have to make proposals to the next woman in his preference list. 
Thus, the manipulators' objective is to eliminate as many rotations as possible by permuting their preference lists.
Since there is no direct rotation elimination in the Gale-Shapley algorithm, we try to figure out what kind of rotations can be eliminated, i.e., after eliminating these rotations, the corresponding matching is in $S_L$. 

We first analyze the structure of the sets of rotations. Rotations are not always exposed in a reduced table. Some rotations become exposed only after other rotations are eliminated. Thus, we define the {\em precedence relation} between rotations and based on that, we incorporate notions from \cite{irving1986complexity} (closed set, maximal rotations), and introduce the concept {\em principle sets} to analyze the problem.

The high-level idea behind our algorithm is that, with our theoretical analysis, we can reduce the search space from the set of all closed sets to the set of all principle sets, which enables our algorithm to run in polynomial time.

\begin{definition}[Precedence] \label{Precedence}
	A rotation $R_1=(\mathcal{M}_1, \mathcal{W}_1, \mathcal{W}_1^r)$ is said to explicitly precede another $R_2=(\mathcal{M}_2, \mathcal{W}_2, \mathcal{W}_2^r)$ if $R_1$ and $R_2$ share a common man $m$ such that $R_1$ moves $m$ from some woman to $w$ and $R_2$ moves $m$ from $w$ to some other woman. Let the relation precede be the transitive closure of the explicit precedence relation, denoted by $\prec$. Also, $R_1 \sim R_2$ if neither $R_1 \prec R_2$ nor $R_2 \prec R_1$.
\end{definition}

\begin{definition}[Closed set] \label{ClosedSets}
	A set of rotations $\mathcal{R}$ is closed if for each $R \in \mathcal{R}$, any rotation $R'$ with $R' \prec R$ is also in $\mathcal{R}$. A closed set $\mathcal C$ is minimal in a family of closed sets $\mathscr{C}$, if there is no other closed set in $\mathscr{C}$ that is a subset of $\mathcal C$. Moreover, define $CloSet(\mathcal{R})$ to be the minimal closed set that contains $\mathcal{R}$. 
\end{definition}

\begin{definition}[Maximal rotation \& Principle set] \label{MaximalRotation}
	Given a closed set of rotations $\mathcal{R}$, $R$ is a \emph{maximal rotation} of $\mathcal{R}$ if no rotation $ R' \in \mathcal{R}$ satisfies $R \prec R'$. Let $Max(\mathcal{R})$ be the set of all the maximal rotations in $\mathcal{R}$. Furthermore, $\mathcal{R}$ is a \emph{principle set} if $Max(\mathcal{R})$ contains only one rotation. We will slightly abuse notations and write $CloSet(R)$ to mean the principle set $CloSet(\mathcal{R})$ if $Max(\mathcal{R})=\{R\}$.
\end{definition}

Henceforth, $R_1$ precedes $R_2$ if $R_2$ can only be exposed after $R_1$ is eliminated. A rotation $R$ can only be exposed after all rotations preceding $R$ are eliminated. Thus only closed sets can be validly eliminated. Also, a closed set of rotations $\mathcal{R}$ is uniquely determined by $Max(\mathcal{R})$. Therefore, given a closed set $\mathcal{R}$, the corresponding matching after eliminating rotations in $\mathcal{R}$ is determined by $Max(\mathcal{R})$. %We slightly abuse notations and say an agent is in a set of rotations if he/she is in some rotation of the set. 

The following theorem shows that closed sets of rotations are all that we need to consider.

\begin{theorem}[\citet{irving1986complexity}] \label{Correspondence}
	Let $S$ be the set of all stable matchings for a given preference profile, there is a one-to-one correspondence between $S$ and the family of all closed sets.
\end{theorem}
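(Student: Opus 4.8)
**Proof proposal for Theorem \ref{Correspondence} (one-to-one correspondence between stable matchings and closed sets of rotations).**

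The plan is to construct the correspondence in both directions and show they are mutual inverses. First I would recall that the men-proposing Gale--Shapley algorithm applied to $(P(M),P(W))$ produces the M-optimal matching $\mu_M$, and that the reduced table obtained after removing all impossible partners encodes exactly the interval of partners each agent can have across $S$. For the forward direction, given a stable matching $\mu$, I would show that the set of rotations ``needed'' to reach $\mu$ from $\mu_M$ is well-defined and closed under $\prec$: starting from $\mu_M$, one repeatedly eliminates exposed rotations that move some man toward his $\mu$-partner, never overshooting, and this process terminates at $\mu$ by a potential argument (each elimination strictly improves at least one woman and strictly worsens at least one man, and all partners stay within the reduced table). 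The collection $\mathcal{R}(\mu)$ of rotations eliminated along the way is independent of the elimination order — this is the key structural fact — and it is closed because a rotation $R'$ with $R'\prec R$ must be eliminated before $R$ can be exposed, so if $R\in\mathcal{R}(\mu)$ then $R'\in\mathcal{R}(\mu)$.

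For the reverse direction, given a closed set $\mathcal{C}$, I would eliminate its rotations in any order consistent with $\prec$ (such an order exists precisely because $\mathcal{C}$ is closed), starting from $\mu_M$; each elimination keeps the matching stable w.r.t. the true preferences (as noted in Section~\ref{subsec:rotation}), so we obtain a stable matching $\mu(\mathcal{C})\in S$. The main work is confluence: the resulting matching does not depend on the chosen elimination order. I would prove this by a local-exchange / diamond-lemma argument — if $R_1\sim R_2$ are both exposed, eliminating them in either order yields the same intermediate table — and then lift local confluence to global confluence since the elimination relation is terminating. This makes $\mu:\mathscr{C}\to S$ a well-defined map.

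Finally I would verify the two maps are inverse to each other. The composition $\mathcal{C}\mapsto\mu(\mathcal{C})\mapsto\mathcal{R}(\mu(\mathcal{C}))$ equals $\mathcal{C}$ because the elimination process that defines $\mathcal{R}(\mu(\mathcal{C}))$ can be taken to be exactly the one used to build $\mu(\mathcal{C})$, and no rotation outside $\mathcal{C}$ can be eliminated on the way (any such rotation would move some man past his $\mu(\mathcal{C})$-partner). Conversely $\mu\mapsto\mathcal{R}(\mu)\mapsto\mu(\mathcal{R}(\mu))$ equals $\mu$ by construction of $\mathcal{R}(\mu)$. Injectivity of $\mu\mapsto\mathcal{R}(\mu)$ then follows, and surjectivity follows because every $\mathcal{C}$ is hit; together these give the bijection.

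The main obstacle I expect is the order-independence (confluence) of rotation elimination — both the fact that $\mathcal{R}(\mu)$ is order-independent and that $\mu(\mathcal{C})$ is. This requires a careful case analysis of how two incomparable exposed rotations interact in the reduced table (they share no ``active'' man in the sense of Definition~\ref{Precedence}, so their eliminations commute), plus a Newman's-lemma-style argument that local confluence plus termination yields global confluence. Everything else — closedness, stability preservation, termination via the ``improves a woman / worsens a man'' potential — is comparatively routine given the machinery already set up in Sections~\ref{subsec:rotation} and~\ref{subsec:pareto}, and I would cite \cite{Gusfield:1989} for the standard facts about the rotation poset where convenient.
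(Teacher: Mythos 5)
The paper does not prove Theorem~\ref{Correspondence} at all: it is quoted as a known result of \citet{irving1986complexity} (see also \cite{Gusfield:1989}), so there is no in-paper argument to compare yours against; I can only assess your outline on its own terms, and it has a genuine gap at exactly the point where the real content of the Irving--Leather theorem lies. In your reverse direction you say: given a closed set $\mathcal{C}$, ``eliminate its rotations in any order consistent with $\prec$ (such an order exists precisely because $\mathcal{C}$ is closed).'' A linear extension exists, but what you actually need is that whenever all $\prec$-predecessors of a rotation $R\in\mathcal{C}$ have been eliminated, $R$ is in fact \emph{exposed} in the current reduced table, so that the elimination can be carried out. Nothing in your proposal proves this, and it is not automatic; worse, it is false for the precedence relation of Definition~\ref{Precedence}, which only records dependencies through a shared man. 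A rotation $R$ can also be blocked through a woman: eliminating $R$ may require some man $m$ to be rejected by (i.e., to move past) a woman $w$ who is not in $R$, and this can happen only after the rotation giving $w$ a partner she prefers to $m$ has been eliminated. The Irving--Leather rotation poset includes these second-type edges; with shared-man edges only, a ``closed'' set need not be eliminable at all, so your map $\mathcal{C}\mapsto\mu(\mathcal{C})$ is not well-defined. Your Newman's-lemma argument does not repair this, because the obstruction is not order-dependence but the possibility that no rotation of $\mathcal{C}$ is exposed even though $\mathcal{C}$ is $\prec$-closed and not yet fully eliminated.

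The forward direction has a related soft spot. Your potential argument gives termination, but not that the process terminates \emph{at} $\mu$: you need the lemma that whenever the current stable matching $\lambda\neq\mu$ satisfies $\lambda\succeq_M\mu$, there is a rotation exposed in $\lambda$ whose elimination moves no man below his $\mu$-partner. This is a genuine lemma (its proof uses the stability of $\mu$), not a routine consequence of the machinery in Sections~\ref{subsec:rotation} and~\ref{subsec:pareto}, and it is precisely what makes $\mathcal{R}(\mu)$ reach $\mu$ without ``overshooting.'' A minor further imprecision: two distinct rotations exposed in the same reduced table are automatically vertex-disjoint (each man's reduced list determines at most one exposed rotation through him), so disjointness, not incomparability under $\prec$, is what makes their eliminations commute. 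To complete the proof you would need to (i) work with the full precedence relation (both edge types) and prove that a rotation is exposed as soon as all its predecessors are eliminated, and (ii) prove the no-overshooting lemma; with those two ingredients your bijection argument goes through and is essentially the standard one.
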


Therefore, we need to understand the changes made to the modified suitor graph when a rotation $R$ is eliminated. We keep track of every proposal made by men in $R$ and modify the graph accordingly. We first assume that the virtual vertex $s$ is comparable with each man and for every $w \in W$ and every $m \in M$, $m \succ_w s$. When eliminating a rotation, we follow the steps below to modify the graph:
\begin{enumerate}
    \item Let all women in $R$ reject their current partners, i.e., delete the edge $(w_i, m_i)$ involved in $R$ for each $i$;
    \item Arbitrarily choose a man $m_i$ in $R$ with no incoming edge from any woman and let him propose to the next woman $w$ in his preference list:
    \begin{enumerate}
        \item If $w$ is a manipulator, add an edge from $m_i$ to $w$ and delete edge $(s, w)$ if it exists;
        \item If $w$ is not a manipulator, then compare $m_i$ with the two men (one is possibly $s$) in $V'=\{v ~|~ (v, w) \in E\}$. If $m_i$ is not the worst choice, add an edge from $m_i$ to $w$ and delete the worst edge, and we say $w$ is \emph{overtaken} by $m_i$;
        \item If $w$ accepts $m_i$, add an edge from $w$ to $m_i$;
    \end{enumerate}
    \item Repeat step 2 until all men in $R$ are accepted.
\end{enumerate}

Let $G$ and $G'$ be the modified suitor graphs corresponding to the reduced tables before and after the elimination of $R$. It is easy to check that after changing $G$ using the steps defined above, the resulting graph is exactly $G'$. From Lemma \ref{CharOfProfile}, the most important property of the graph is the existence of a path from $s$ to any other vertex. Therefore, we focus on the change of {\em strongly connected components} and their connectivity in the modified suitor graph before and after the elimination of rotations.

\begin{definition} \label{StrongComponent}
	A sub-graph $G'$ is strongly connected if for any two vertices $u, v$ in $G'$, there is a path from $u$ to $v$ in $G'$. A strongly connected component is a maximal strongly connected sub-graph.
\end{definition}

The following lemma provides connectivity properties of the modified suitor graph after eliminating a rotation.

\begin{lemma} \label{RotationReachability}
    After eliminating a rotation $R$,
    \begin{enumerate}
        \item all agents in $R$ are in the same strongly connected component;
        \item vertices formerly reachable from a vertex in $R$ remain reachable from $R$;
        \item vertices overtaken during the elimination of $R$ are reachable from $R$.
    \end{enumerate}
\end{lemma}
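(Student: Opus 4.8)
The plan is to track the suitor graph modifications step by step as described in the three-step procedure above, maintaining the three reachability claims as invariants throughout the elimination of the rotation $R = (m_1, \dots, m_r)$. The key observation is that a rotation elimination is nothing but a finite sequence of ``proposal'' moves, and each individual move changes only a bounded piece of the graph, so the argument should be an induction on the number of proposal steps performed so far.

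**First I would** establish claim (1). Before any modification, each $m_i$ is matched to $w_i$ and, by the definition of a rotation, the first woman on $m_{i+1}$'s reduced list equals the second on $m_i$'s reduced list, which is $w_{i+1}$; so when we force $w_i$ to reject $m_i$, the man $m_i$ next proposes to $w_{i+1}$ and is accepted (the reduced-list structure guarantees $w_{i+1}$ prefers $m_i$ to her current partner $m_{i+1}$, or will once $m_{i+1}$ is in turn displaced — here one has to be a little careful about the order in which the $r$ proposals are processed, but since step 2 lets us pick the men in any order, it suffices to argue that \emph{some} valid completion exists and that in it each $w_i$ ends up holding $m_{i-1}$). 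Once the dust settles we have, for every $i$, edges $(w_{i+1}, m_i)$ and $(m_i, w_{i+1})$ in $G'$ (the matched-pair edges of step 2(c) and the $\mu$-edges), plus — crucially — when $m_i$ proposed to $w_{i+1}$ and displaced $m_{i+1}$, the man $m_{i+1}$ is still on $w_{i+1}$'s reduced list, giving an edge $(m_{i+1}, w_{i+1})$ by the $\delta(\cdot)$ rule (since $m_{i+1}$ now strictly prefers $w_{i+1}$ to whoever he ends up with). Chaining $m_{i+1} \to w_{i+1} \to m_i \to w_{i+2} \to m_{i+1} \to \cdots$ around the cycle shows all $2r$ agents of $R$ lie in one strongly connected component.

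**Next** I would prove claims (2) and (3) together by induction on the proposal steps. The only edges ever \emph{deleted} are: (a) the edges $(w_i, m_i)$ removed in step 1, and (b) a ``worst incoming edge'' $(v, w)$ deleted in step 2(b) when some $w \in N$ is overtaken. For (a): the vertex $m_i$ was reachable from $w_i$ only via the edge we delete, but after the elimination $w_i$ holds $m_{i-1} \in R$, so $w_i$ is in the SCC of $R$ by claim (1), hence everything formerly reachable from $m_i$ via $w_i$ is now reachable from $R$; and any path that used $(w_i,m_i)$ as an internal edge is rerouted through this SCC. For (b): if $w$ is overtaken by $m_i$, we add $(m_i, w)$ with $m_i \in R$, so $w$ becomes reachable from $R$ (this is exactly claim (3)); and the endpoint of the deleted edge, the displaced worst man $v$, is either $s$ (no incoming edge to worry about) or a man who then must re-propose — I would argue that $v$'s onward reachability is preserved because $v$ gets a new outgoing edge to whomever he next proposes to, and inductively that woman is reachable from $R$ or from wherever $v$ was reachable from. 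Added edges only increase reachability, so the only danger is deletions, and the above handles every deletion.

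**The hard part will be** the bookkeeping for the cascade of overtakings in step 2(b)–(c): when a non-manipulator $w$ is overtaken, the evicted man re-enters the proposal queue and may overtake yet another woman, and one must be sure this process terminates and that at no intermediate stage does some vertex $u$ lose its last route out to a vertex it previously reached. I expect the clean way to handle this is a potential-function / monotonicity argument (each re-proposal strictly advances a man down his preference list, as in the correctness proof of Gale–Shapley itself, bounding the number of steps by $O(n^2)$), combined with the invariant ``at all times, every vertex reachable-from-$R$-or-overtaken stays so, and every old reachability out of $R$ is preserved.'' One subtlety worth flagging: a man displaced in step 2(b) could in principle be one of the $m_j \in R$ again, but the rotation structure rules this out — once $m_j$ settles on $w_{j+1}$ he is accepted and never displaced within this elimination — so the cascade only ever touches vertices outside $\mathcal{M}$, which keeps claim (1) intact. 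Assembling these pieces gives all three conclusions of Lemma~\ref{RotationReachability}.
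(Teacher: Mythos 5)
Your overall route is the same as the paper's: for claim (1) a cycle alternating ``man to his former partner'' edges with the new matching edges, and for claims (2)--(3) the observation that only deletions can hurt, that deleted matching edges have both endpoints inside the strongly connected component of $R$, and that an overtaken woman gains an incoming edge from a man of $R$, so broken paths can be rerouted through $R$. However, there are two genuine gaps. First, in claim (1) you justify the edge from the displaced man $m_{i+1}$ back into his former partner $w_{i+1}$ ``by the $\delta(\cdot)$ rule.'' That rule puts in an edge from \emph{every} suitor only when $w_{i+1}$ is a manipulator; when $w_{i+1}\in N$, Definition~\ref{SuitorGraph} keeps only the edge from her favorite man in $\delta(w_{i+1})$ (equivalently, in the modification procedure only her two best incoming edges survive). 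So you must prove that after the elimination her former partner is her best \emph{rejected} suitor, i.e.\ that no other man of $R$ who passes her while moving down his list is ranked by her above $m_{i+1}$. This is precisely where the paper does its real work: it first proves Lemma~\ref{FirstAcceptance} (each man of $R$ is first accepted by his rotation successor, and each woman of $R$ accepts exactly one proposal) and then rules out a better suitor by letting him propose first, forcing a second acceptance. Your proposal only flags order-of-proposal care for the matching outcome, not for which suitor edges survive, so the cycle you chain is not yet justified. (Also a small index slip: the chain should be $m_{i+1}\to w_{i+1}\to m_i\to w_i\to m_{i-1}\to\cdots$; there is in general no edge $m_i\to w_{i+2}$.)

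Second, your treatment of overtaking mis-models the procedure. When a woman $w\notin R$ is overtaken, the deleted ``worst edge'' is just the recorded edge from her previously listed second-best suitor (or from $s$); that man is not displaced from any matching, does not ``re-enter the proposal queue,'' and no cascade of further overtakings is triggered---during the elimination of $R$ only the $r$ men of $R$ propose, and each stops at his rotation successor (again Lemma~\ref{FirstAcceptance}). Hence the potential-function/termination machinery you single out as ``the hard part'' addresses a process that does not occur, and the sub-argument about preserving ``the evicted man's onward reachability'' targets a non-issue: the deleted edge is an \emph{outgoing} edge of that man, so his own reachability is untouched; what must be restored is reachability into $w$ and downstream of $w$, which your rerouting-through-$R$ idea (like the paper's ``last vertex on the old path that lies in $R$ or is overtaken'' argument) does handle. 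Once you replace the cascade picture with Lemma~\ref{FirstAcceptance} and supply the best-rejected-suitor argument for non-manipulator women of $R$, your outline coincides with the paper's proof.
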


With Lemma~\ref{RotationReachability}, we do not need to worry about vertices that are reachable from vertices in $R$, for they will remain reachable after the elimination. Also, vertices that are overtaken and the other vertices reachable from overtaken vertices can be reached from vertices in $R$ after the elimination. 

In fact, every vertex is reachable from $s$ in the initial graph. Therefore, if a vertex becomes unreachable from $s$ after eliminating a rotation, there must exist some edge that is deleted during the elimination, which only happens when some woman is overtaken if she is a non-manipulator. 
The next lemma extends Lemma~\ref{RotationReachability} to a closed set of rotations.

\begin{lemma} \label{ClosedSetsReachability}
	After eliminating a closed set of rotations $\mathcal{R}$, each $v$ in $\mathcal{R}$ is reachable from at least one vertex in $Max(\mathcal{R})$, i.e., there exists a path to $v$ from a vertex in $Max(\mathcal{R})$.
\end{lemma}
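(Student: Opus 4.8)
The plan is to prove the statement by induction on $|\mathcal R|$, peeling off one $\prec$-maximal rotation at a time, and to carry along a strengthened hypothesis: after eliminating any closed set $\mathcal C$, (a) every agent appearing in $\mathcal C$ is reachable in the suitor graph from some agent of a rotation in $Max(\mathcal C)$, and (b) the agents of each rotation in $Max(\mathcal C)$ lie in one strongly connected component. The base case $|\mathcal C|=1$ is immediate from Lemma~\ref{RotationReachability}(1). Two structural facts set up the step. First, if $R^\ast\in Max(\mathcal R)$ then, because $\mathcal R$ is closed and $R^\ast$ is $\prec$-maximal in it, $\mathcal R^-:=\mathcal R\setminus\{R^\ast\}$ is again closed and $Max(\mathcal R)=\{R^\ast\}\cup\{R\in Max(\mathcal R^-):R\not\prec R^\ast\}$. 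Second, any $R_0\in Max(\mathcal R^-)$ with $R_0\prec R^\ast$ must \emph{explicitly} precede $R^\ast$: in a precedence chain from $R_0$ to $R^\ast$, every intermediate rotation lies strictly above $R_0$ and strictly below $R^\ast$, hence by closedness belongs to $\mathcal R^-$, contradicting the maximality of $R_0$ unless the chain has length one; in particular $R_0$ and $R^\ast$ then share a man.

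Before the induction step I would establish a single-elimination rerouting claim. When a rotation $R$ is eliminated, the only edges deleted from the suitor graph are (i) the ``rejection'' edges $(w_i,m_i)$ from the women $w_i$ of $R$ to their current partners, (ii) edges $(s,w)$ into manipulators receiving a new proposal, and (iii) the worst incoming edge of each non-manipulator overtaken during the elimination. Consequently, for any vertex $u\neq s$, if $v$ is reachable from $u$ before the elimination, then afterwards $v$ is reachable either from $u$ or from an agent of $R$. To see this, walk along a pre-elimination $u$--$v$ path and look at the \emph{last} deleted edge on it: its head is either an agent of $R$ (for a type-(i) edge) or an overtaken vertex (type (iii)), while type (ii) cannot occur on such a path since $s$ has no incoming edge; in the first two cases Lemma~\ref{RotationReachability}(1) and (3) make the head reachable from $R$, and the tail of the path after this edge is intact, so concatenating finishes the claim.

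The induction step then goes as follows. Having eliminated $\mathcal R^-$ (so (a) and (b) hold for $\mathcal R^-$), eliminate $R^\ast$. For an agent of $R^\ast$, Lemma~\ref{RotationReachability}(1) gives (a) and (b) immediately. For an agent $v$ of $\mathcal R^-$, hypothesis (a) makes $v$ reachable from some agent $z$ of a rotation $R_0\in Max(\mathcal R^-)$ before the elimination of $R^\ast$, hence, by the rerouting claim, from $z$ or from $R^\ast$ afterwards. If it is from $R^\ast$, done, since $R^\ast\in Max(\mathcal R)$; if it is from $z$ and $R_0\not\prec R^\ast$, then $R_0\in Max(\mathcal R)$, done. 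In the remaining case $R_0\prec R^\ast$, so by the second structural fact $R_0$ shares a man $a$ with $R^\ast$; hypothesis (b) for $\mathcal R^-$ says $a$ reaches all of $R_0$ before eliminating $R^\ast$, so the rerouting claim makes $a$ reach $z$ afterwards (the ``or from $R^\ast$'' alternative folds back through $R^\ast$'s strongly connected component, which contains $a$ by Lemma~\ref{RotationReachability}(1)), and since $a\in R^\ast\subseteq Max(\mathcal R)$ this re-establishes (a). Invariant (b) for the rotations of $Max(\mathcal R)$ that already belonged to $Max(\mathcal R^-)$ is argued in the same spirit.

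The main obstacle is the last point: propagating invariant (b), i.e., showing that eliminating $R^\ast$ does not tear apart the strong connectivity of a rotation $R'$ that stays maximal (so $R'\sim R^\ast$). One needs that an internal $R'$-component path in the old graph can be routed around every woman of $R^\ast$ and every overtaken vertex; morally this should hold because $\prec$-incomparable rotations ``do not interact'' — they share no man, and one expects their strongly connected components in the suitor graph to be disjoint — but making this rigorous requires a finer analysis of how the suitor-graph component structure mirrors the rotation poset, and that is where most of the effort will go.
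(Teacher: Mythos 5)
Your strategy is, at bottom, the paper's: eliminate $\mathcal{R}$ one rotation at a time in a precedence-respecting order (peeling off a $\prec$-maximal $R^\ast$ and inducting on $|\mathcal{R}|$ is the same induction, read backwards), observe that any rotation of $Max(\mathcal{R}\setminus\{R^\ast\})$ that loses maximality must \emph{explicitly} precede $R^\ast$ and hence share a man with it, and reroute an old path through the last vertex on it that lies in $R^\ast$ or is overtaken, invoking Lemma~\ref{RotationReachability}. Your ``rerouting claim'' is exactly the path argument in the paper's inductive step, and your chain argument showing that a demoted maximal rotation explicitly precedes $R^\ast$ is a more carefully justified version of the paper's one-line assertion.

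The difficulty is that your proof is not finished, and the unfinished part is the load-bearing step. You strengthen the inductive hypothesis with invariant (b) --- that after eliminating a closed set, the agents of each rotation in its $Max$ still lie in one strongly connected component --- and you use it crucially to pass from ``some vertex $z$ of $R_0$ reaches $v$'' to ``the shared man $a$ reaches $v$''; but you then concede that you cannot propagate (b), precisely because a later elimination may overtake a non-manipulator woman of $R_0$ and delete an edge of the cycle formed when $R_0$ was eliminated. Without (b) (or some substitute), your case $R_0\prec R^\ast$ does not close, so the induction is incomplete. For comparison, the paper does not maintain a standing strong-connectivity invariant: in its inductive step it asserts directly that every vertex reachable from a demoted maximal rotation is reachable from the man it shares with the newly eliminated rotation, and then applies Lemma~\ref{RotationReachability} to that last elimination. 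In other words, the mutual-reachability property you are trying to carry along as an invariant is exactly the fact the paper invokes on the spot (via Lemma~\ref{RotationReachability}(1) at the time the demoted rotation was eliminated). So you have correctly isolated the delicate point, but you have neither proved it nor supplied an argument that dispenses with it; until that step is established, the proposal falls short of a proof of Lemma~\ref{ClosedSetsReachability}.
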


If $R_2$ explicitly precedes $R_1$, then they must contain a common man. Therefore, after eliminating $R_1$, vertices in $R_1$ can reach any vertex that is previously reachable from $R_2$. The analysis goes on recursively until some rotation has no predecessors.

Given a closed set of rotations $\mathcal{R}$, we say $\mathcal{R}$ can be eliminated for simplicity, if the corresponding matching after eliminating rotations in $\mathcal{R}$ is in $S_L$. The following lemma provides us a simpler way to check whether a closed set of rotations can be eliminated.

\begin{lemma} \label{Eliminatable}
    A closed set of rotations $\mathcal{R}$ can be eliminated if and only if after eliminating $\mathcal{R}$, every vertex in $Max(\mathcal{R})$ can be reached from $s$.
\end{lemma}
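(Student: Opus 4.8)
The plan is to reduce the statement to Kobayashi et al.'s characterization (Theorem~\ref{CharOfProfile}) and then to identify which vertices of the post-elimination suitor graph actually need to be checked. First I would record two easy reductions. (i) The matching $\mu_{\mathcal{R}}$ obtained by eliminating a closed set $\mathcal{R}$ is always stable with respect to the true preferences (the standard property of rotation elimination recalled in Section~\ref{subsec:rotation}); hence $\mathcal{R}$ \emph{can be eliminated}, i.e.\ $\mu_{\mathcal{R}}\in S_A$, if and only if $\mu_{\mathcal{R}}$ is induced by some permutation manipulation of $L$, i.e.\ $\mu_{\mathcal{R}}$ is the M-optimal matching of a profile $(P(M),P(N),P(L))$. (ii) The graph $G'$ obtained from the initial suitor graph $G_0=G(P(M),P(N),\mu_0)$, built on the W-pessimal matching $\mu_0$ (which the true profile realises as its M-optimal matching), by the modification steps of Section~\ref{subsec:pareto}, is exactly $G(P(M),P(N),\mu_{\mathcal{R}})$. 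Combining (i), (ii) and Theorem~\ref{CharOfProfile}, $\mathcal{R}$ can be eliminated if and only if every vertex of $G'$ is reachable from $s$. The lemma therefore amounts to showing that reachability of the vertices of $Max(\mathcal{R})$ already forces reachability of every vertex of $G'$; the converse implication is immediate.

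For the nontrivial direction, assume every vertex of every rotation in $Max(\mathcal{R})$ is reachable from $s$ in $G'$. By Lemma~\ref{ClosedSetsReachability}, every vertex moved by some rotation of $\mathcal{R}$ is reachable in $G'$ from a vertex of $Max(\mathcal{R})$, hence from $s$; call these the \emph{internal} vertices. It remains to handle a vertex $v$ not moved by any rotation of $\mathcal{R}$. Since $G_0$ is a genuine suitor graph of a realisable matching, Theorem~\ref{CharOfProfile} gives a directed path $s=u_0,u_1,\dots,u_\ell=v$ in $G_0$. Let $(u_i,u_{i+1})$ be the last edge of this path that is absent from $G'$ (if none, the whole path survives into $G'$ and we are done). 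Then $u_{i+1}\to\cdots\to u_\ell=v$ is a path in $G'$, so it suffices to show $u_{i+1}$ is reachable from $s$ in $G'$.

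To that end I would run through the edge-deletion rules of the modification procedure and check that every deleted edge $(a,b)$ has head $b$ either internal or reachable in $G'$ from an internal vertex. A matching edge $(w,\mu(w))$ or $(\mu(w),w)$, or a rejection edge, disappears only when one of its endpoints changes partner, and any agent whose partner changes is moved by a rotation of $\mathcal{R}$, hence is internal (and so is the other endpoint, which also changes partner). An edge out of $s$ disappears only when the woman $w$ it points to is proposed to by some man $m$ in a rotation of $\mathcal{R}$; if $w\in L$ then, since a manipulator's incoming suitor edges only grow under rotation elimination ($\delta(w)$ is monotone as men's partners deteriorate), the edge $(m,w)$ with $m$ internal survives into $G'$; if $w\in N$, then $w$ is overtaken by $m$, so Lemma~\ref{RotationReachability}(3) together with Lemma~\ref{ClosedSetsReachability} makes $w$ reachable in $G'$ from the internal vertex $m$. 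Finally, the ``favorite'' edge into a non-manipulator woman $w$ disappears only when $w$ is overtaken by some man in a rotation of $\mathcal{R}$, which is the same case. Applying this to $(u_i,u_{i+1})$ shows $u_{i+1}$ is reachable from $s$ in $G'$, which completes the proof.

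The main obstacle I anticipate is the bookkeeping of the last step: (1) verifying that \emph{no other} edge deletions occur in the modification procedure, so that the case analysis is exhaustive; (2) showing that reachability, once established after eliminating one rotation, is not destroyed by later eliminations — this needs Lemma~\ref{RotationReachability}(2) applied inductively along a valid elimination order of $\mathcal{R}$, with extra care because an overtaken woman may be overtaken again and because edges may be deleted and re-added; and (3) checking that Theorem~\ref{CharOfProfile} may be invoked with \emph{permutation} (complete-list) manipulators, i.e.\ that the manipulator profile produced from a suitor graph all of whose vertices are reachable from $s$ can be taken to be a total order on $M$.
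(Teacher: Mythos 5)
Your proposal is correct and follows essentially the same route as the paper: reduce to Theorem~\ref{CharOfProfile}, note the forward direction is immediate, and for the converse combine Lemma~\ref{ClosedSetsReachability} (vertices reachable from $Max(\mathcal{R})$) with a surviving-path argument from $s$ in the initial suitor graph for the remaining vertices. Your explicit ``last deleted edge'' case analysis just spells out what the paper's terser contradiction argument (``no vertex of $p$ is in $\mathcal{R}$ or overtaken'') leaves implicit.
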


However, we still cannot afford to enumerate all possible closed sets of rotations, whose number is exponential with respect to the number of women.
\begin{theorem} \label{CloSetEliminatable}
    Given a closed set of rotations $\mathcal{R}$, if $\mathcal{R}$ can be eliminated, then there exists a rotation $R \in \mathcal{R}$ such that $CloSet(R)$ can be eliminated.
\end{theorem}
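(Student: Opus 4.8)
The plan is to argue by contradiction: suppose $CloSet(R)$ cannot be eliminated for \emph{every} $R\in\mathcal R$, and contradict the hypothesis. By Lemma~\ref{Eliminatable} together with Lemma~\ref{ClosedSetsReachability}, the hypothesis says that in the suitor graph $G_{\mathcal R}$ obtained after eliminating all of $\mathcal R$, every vertex of every rotation of $\mathcal R$ is reachable from $s$. Write $V(R)$ for the set of men and women appearing in a rotation $R$. I will use the standard structural fact about the rotation poset that two rotations sharing a man are $\prec$-comparable (see \cite{Gusfield:1989}); in particular, for a maximal rotation $R\in Max(\mathcal R)$, no rotation of $\mathcal R\setminus CloSet(R)$ shares an agent with $R$, so every agent of $R$ keeps the same partner in $G_{CloSet(R)}$ and in $G_{\mathcal R}$. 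Also recall that $CloSet(R)$ is a principle set whose unique maximal rotation is $R$.

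\emph{Edge-stability.} For any closed $\mathcal Q\subseteq\mathcal R$, passing from $G_{\mathcal Q}$ to $G_{\mathcal R}$ only moves men appearing in $\mathcal R\setminus\mathcal Q$, each strictly down his list. Hence every set $\delta(w)$ only grows, no virtual edge $(s,w)$ is ever created, and every edge of $G_{\mathcal Q}$ not incident to a man appearing in $\mathcal R\setminus\mathcal Q$ persists into $G_{\mathcal R}$ — the only subtlety being the single in-edge of a non-manipulator $w$ coming from the favorite of $\delta(w)$, which is handled because a favorite of the larger $\delta_{G_{\mathcal R}}(w)$ that still lies in $\delta_{G_{\mathcal Q}}(w)$ remains its favorite there, while the only \emph{new} in-edge of $w$ is incident to a man of $\mathcal R\setminus\mathcal Q$. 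Equivalently: any path from $s$ in $G_{\mathcal R}$ that avoids all men appearing in $\mathcal R\setminus\mathcal Q$ already exists in $G_{\mathcal Q}$. Applying this with $\mathcal Q=CloSet(R)$ for a maximal $R$, together with the fact (Lemma~\ref{Eliminatable} and Lemma~\ref{RotationReachability}(1), there being no elimination after $R$) that $V(R)$ forms one strongly connected component of $G_{CloSet(R)}$ and, by our assumption, is entirely unreachable from $s$ there, we conclude: every $s$-to-$V(R)$ path in $G_{\mathcal R}$ meets a vertex of $V(\mathcal R)$ strictly before its endpoint.

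\emph{Descent.} Take any $R_1\in Max(\mathcal R)$ and a shortest $s$-to-$V(R_1)$ path $\pi$ in $G_{\mathcal R}$; its only vertex in $V(R_1)$ is its endpoint, and by the previous paragraph it hits $V(\mathcal R)$ earlier. Let $a$ be the first vertex of $V(\mathcal R)$ on $\pi$ and let $R_a$ be a rotation with $a\in V(R_a)$; we may assume $CloSet(R_a)$ cannot be eliminated, else it is the witness we want. The prefix of $\pi$ up to $a$ avoids $V(\mathcal R)$, hence avoids every man appearing in $\mathcal R\setminus CloSet(R_a)$, so by edge-stability it survives in $G_{CloSet(R_a)}$; then $a$ — and, via the strongly connected component containing $V(R_a)$, all of $V(R_a)$ — is reachable from $s$ in $G_{CloSet(R_a)}$, so by Lemma~\ref{Eliminatable} $CloSet(R_a)$ can be eliminated, a contradiction. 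I would make this rigorous as an induction on $|Max(\mathcal R)|$, peeling off one maximal rotation at a time so that the ``frozen partner'' hypothesis of the previous paragraph is always available; the base case $|Max(\mathcal R)|=1$ is immediate since then $\mathcal R=CloSet(R)$ for its unique maximal rotation $R$.

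The main obstacle is the bookkeeping behind edge-stability and the choice of $R_a$: eliminating the extra rotations both adds edges (the new match edges and the new ``wanting'' edges of the moved men) and deletes edges (a non-manipulator's former favorite-of-$\delta$ edge upon being overtaken, and $(s,w)$ when $\delta(w)$ becomes nonempty), so one must verify precisely which edges are common to $G_{\mathcal Q}$ and $G_{\mathcal R}$, and one must select $R_a$ — for instance as a $\prec$-maximal rotation through $a$, dealing with the case where $a$ lies on several rotations — so that the argument of the second paragraph genuinely applies to it. Everything else is a direct combination of Theorem~\ref{Correspondence} and Lemmas~\ref{RotationReachability}, \ref{ClosedSetsReachability}, and~\ref{Eliminatable}.
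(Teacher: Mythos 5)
Your proposal is correct and, once the contradiction/induction scaffolding and the unused ``frozen partner'' observation are stripped away, it is essentially the paper's own argument: follow an $s$-path in the suitor graph obtained after eliminating $\mathcal{R}$, stop at the first vertex $a$ lying in some rotation of $\mathcal{R}$, note that the prefix uses no edge created by the elimination (every created edge is incident to a man of the eliminated rotations), and conclude via Lemma~\ref{Eliminatable} and Lemma~\ref{RotationReachability} that $CloSet(R_a)$ can be eliminated. The remaining differences are local: the paper argues directly (no appeal to $Max(\mathcal{R})$) and handles the endpoint by showing the first hit vertex must be a manipulator, whereas you handle it by choosing a $\prec$-maximal rotation through $a$ via comparability of rotations sharing a man; note also that your stated ``persistence'' direction (edges of $G_{\mathcal{Q}}$ not incident to men of $\mathcal{R}\setminus\mathcal{Q}$ surviving into $G_{\mathcal{R}}$) is false as written---an overtaken non-manipulator, or a woman acquiring her first suitor, loses such an edge---but the direction you actually invoke, that paths of $G_{\mathcal{R}}$ avoiding men of $\mathcal{R}\setminus\mathcal{Q}$ already lie in $G_{\mathcal{Q}}$, is the correct one and matches the paper's ``the sub-path is not affected'' step.
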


The above theorem reduces the search space from {\em the set of closed sets} to {\em the set of principles sets}. We are ready to design Algorithm \ref{algorithm} to compute a $S_L$-Pareto-optimal strategy profile. For any iteration of Algorithm~\ref{algorithm}, the matching at the beginning of each iteration is in $S_L$. Therefore, according to Theorem \ref{CloSetEliminatable}, if a closed set of rotations $\mathcal{R}$ can be eliminated, we can always find a principle set $\mathcal{P}^\ast$ contained in $\mathcal{R}$ such that $\mathcal{P}^\ast$ can be eliminated. 
Since the number of principle sets equals the number of rotations, which is polynomial and can be efficiently computed \citep{gusfield1987three}, given a matching in $S_L$, we figure out an efficient way to find a weakly better matching in $S_L$. Using this method as a sub-routine, we are able to design an algorithm to find a $S_L$-Pareto-optimal strategy profile for permutation manipulations. 

\begin{algorithm2e} 
	\caption{Find a $S_L$-Pareto-optimal strategy profile}\label{algorithm}
	%\KwIn{Preference profile $P$, and a set of manipulators $L \subset W$}
	%\KwOut{An optimal strategy profile for $L$}
	Find the set of all rotations $\mathcal{R}$ and all principle sets $\mathscr{P}=\{CloSet(R)~|~R\in \mathcal{R}\}$;\\
	\While{True}{
		Construct $\mathscr C=\{\mathcal{P} \in \mathscr{P} ~|~ \mathcal{P} \text{ can be eliminated} \}$;\\
		\eIf{$\mathscr C=\emptyset$}{
			Construct $P(L)$ for $L$ and return;\\
		}{
			Arbitrarily choose a principle set $\mathcal{P}^\ast\in \mathscr C$ and eliminate $\mathcal{P}^\ast$;\\
		}
	}
\end{algorithm2e}

\subsubsection{Correctness of Algorithm~\ref{algorithm}}
To analyze the algorithm, we first consider the following lemma. 
\begin{lemma} \label{SubProblem}
	Given a set of manipulators $L \in W$, and the true preference profile $P=(P(M), P(W))$. Let $\mu$ be any matching in $S_L$ and $\mathcal{R}$ be the corresponding closed set of rotations. Then there exists a preference profile $P_{\mu}(L)$ for $L$ such that $\mu$ is the M-optimal stable matching of the preference profile $P_{\mu}=(P(M), P(N), P_{\mu}(L))$, and the reduced table of $P$ after eliminating $\mathcal{R}$ is exactly the reduced table of $P_{\mu}$ before eliminating any rotation.
\end{lemma}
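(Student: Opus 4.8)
The plan is to build the manipulator profile $P_\mu(L)$ explicitly from the reduced table of $P$ after eliminating $\mathcal{R}$, and then verify the two claimed properties using the suitor-graph characterization of Theorem~\ref{CharOfProfile}. First I would apply Theorem~\ref{Correspondence} to conclude that $\mu$ is a well-defined stable matching under the true profile $P$, and that eliminating the closed set $\mathcal{R}$ transforms the W-pessimal reduced table into a reduced table in which $\mu(w)$ is the first entry of each woman $w$'s list (equivalently, $\mu(m)$ is the first entry of each man $m$'s list, since the M-side of a reduced table is the W-optimal-within-what-remains matching). Call this reduced table $T$. The key point is that $T$ records exactly which man/woman pairs are still "mutually possible" after the rotations in $\mathcal{R}$ have been performed, and in $T$ the matching $\mu$ sits at the top of everyone's list — so $\mu$ looks like an M-optimal matching of the sub-instance defined by $T$.

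Next I would define $P_\mu(w)$ for each manipulator $w \in L$ by taking $w$'s reduced list in $T$ and extending it to a total order on all of $M$ in any way consistent with $P(w)$ on the remaining men (the tail never matters for stability since those men are already matched above $\mu(w)$ to better partners, or $w$ is their worst remaining option). For non-manipulators the list stays $P(N)$; for men the list stays $P(M)$. The claim is that $\mu$ is the M-optimal stable matching of $P_\mu = (P(M), P(N), P_\mu(L))$. To prove this I would invoke Theorem~\ref{CharOfProfile}: it suffices to show that in the suitor graph $G(P(M), P(N), \mu)$ every vertex is reachable from $s$. But the suitor graph associated with the current (post-elimination) reduced table $T$ is precisely $G(P(M), P(N), \mu)$ — this is the bookkeeping identity established in Section~\ref{subsec:pareto}, that the modification steps applied to the initial suitor graph during elimination of $\mathcal{R}$ yield exactly the suitor graph of $\mu$. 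Since $\mathcal{R}$ lies in $S_A$ (it is a closed set corresponding to a matching; and more to the point, in the intended application $\mathcal{R}$ \emph{can be eliminated}, meaning its maximal rotations are reachable from $s$), Lemma~\ref{Eliminatable} together with Lemma~\ref{ClosedSetsReachability} gives full reachability from $s$. Hence $P_\mu(L)$ is feasible and yields $\mu$ as the M-optimal matching.

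Finally, the second assertion — that the reduced table of $P$ after eliminating $\mathcal{R}$ equals the reduced table of $P_\mu$ before eliminating any rotation — follows because running Gale-Shapley on $P_\mu$ produces $\mu$, and the Gale-Shapley reduction of $P_\mu$ deletes from each list exactly the men (women) ranked below the current holder, which by construction of $P_\mu(L)$ coincides entry-by-entry with $T$: on the men's and non-manipulators' lists nothing was changed from $P$, and on each manipulator's list $P_\mu(w)$ was \emph{defined} to have $T$'s reduced list as its stable prefix, with the tail being exactly the men that the Gale-Shapley reduction on $P_\mu$ would delete. I expect the main obstacle to be the careful verification of this last entry-by-entry coincidence — in particular arguing that extending a manipulator's reduced list by the remaining men in their true order cannot reintroduce any pair into the reduced table (no previously-deleted pair becomes "possible" again), which amounts to checking that no blocking pair under $P_\mu$ involves a tail man; this is where I would lean on the stability of $\mu$ under $P$ and the fact that tail men are matched above $\mu(w)$ in $\mu$.
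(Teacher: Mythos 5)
There is a genuine gap, and it sits in the first half of your argument, not in the entry-by-entry bookkeeping you flag at the end. You define $P_\mu(w)$ as the true prefix of $w$'s list down to $\mu(w)$ followed by an \emph{arbitrary} tail, and then appeal to Theorem~\ref{CharOfProfile} to conclude that $\mu$ is the M-optimal matching under $P_\mu$. But Theorem~\ref{CharOfProfile} only asserts that \emph{some} manipulator profile achieves $\mu$ when every vertex of the suitor graph is reachable from $s$; it says nothing about the particular profile you constructed, and that profile in general does not induce $\mu$. The tail below $\mu(w)$ is precisely where the manipulation happens: it decides which of the men who propose to $w$ before $\mu(w)$ does are held or rejected, hence whether the rejection chain that eventually brings $\mu(w)$ to $w$ is ever triggered. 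Since "consistent with $P(w)$ on the remaining men" allows each manipulator to order her tail truthfully, your construction can simply reproduce the true preference lists, and the Gale--Shapley algorithm then returns the truthful M-optimal matching rather than $\mu$. Concretely, in the example of Appendix~\ref{appendix:example_non_coop} (Table~\ref{hardexample}) with $L=\{w_1,w_2\}$ and $\mu$ the matching $\{(m_1,w_4),(m_2,w_3),(m_3,w_1),(m_4,w_2)\}\in S_A$, we have $\mu(w_1)=m_3$, her top true choice, so your $P_\mu(w_1)$ may be her true list (and likewise for $w_2$), yet the induced matching then gives $w_1$ the partner $m_2\neq m_3$. Stability of $\mu$ under $P_\mu$, which your tail remark does give, is not the issue; being the men-proposing outcome is.

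The paper's proof avoids this by not building the profile from scratch: since $\mu\in S_A$ there is a witness profile $P'(L)$ whose induced matching is $\mu$, and it edits each $P'(w)$ only in places that provably never affect the execution under $P'$ --- the men truly preferred to $\mu(w)$ (who never proposed to $w$, else they would block $\mu$ under $P$) are moved to the top in true order, and $\mu(w)$ is moved to the position right after them, while the operative tail of $P'(w)$ is left intact. This keeps the induced matching equal to $\mu$ and simultaneously forces each woman's stated prefix down to $\mu(w)$ to coincide with her true prefix, which is all that the reduced-table identity needs (the verification in your final paragraph is then the easy part). To repair your proof you would have to start from such a witness profile, or else show that the profile produced by the constructive part of Theorem~\ref{CharOfProfile} has the required true-prefix structure, which it does not have in general.
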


Although from Theorem~\ref{Correspondence}, we know that only closed sets need to be considered,
%Although Claim~\ref{EliContainLiar} has already ruled out all closed sets that have a maximal rotation containing only non-manipulators, 
there are still exponentially many possibilities. However, Theorem~\ref{CloSetEliminatable} shows that every closed set that can be eliminated contains a principle set, which can also be eliminated. A natural idea is to iteratively grow the closed set by adding principle sets. The above lemma shows that after each iteration, we can construct a problem that has the current matching as its initial matching, and contains rotations that are not yet eliminated. If we find a principle set that can be eliminated in the constructed problem, it can also be eliminated in the original problem.

\subsubsection{Complexity of Algorithm~\ref{algorithm}}
    To analyze the time complexity of Algorithm \ref{algorithm}, we define a graph describing the precedence relation between rotations.

\begin{definition}[Precedence graph] \label{PrecedenceGraph}
	Given a set of rotations $\mathcal{R}$, let $D$ be a directed acyclic graph, where the vertices in $D$ are exactly $\mathcal{R}$, and there is an edge $(R_1, R_2)$ in $D$ if $R_1 \prec R_2$. Moreover, let $H$ be the transitive reduction of $D$ defined above, and $H_r$ be the graph $H$ with all edges reversed.
\end{definition}

Note that $H$ is exactly the directed version Hasse diagram of the precedence relation between rotations. 
For a rotation $R$, $CloSet(R)$ is the set of vertices that can be reached from $R$ through a directed path in $H_r$. We split the algorithm into the initialization part and iteration part, and assume $|M|=|W|=n$. 

In the initialization part, we first compute the initial matching using the Gale-Shapley algorithm, which can be computed in $O(n^2)$ time. Next we find all rotations with respect to preference profile $P$ and also find all the principle sets. These two operations depend on the graph $H_r$. However, the graph $H$ is the transitive reduction of $D$, and the construction of $H$ is somewhat complex. \citet{gusfield1987three} discusses how to find all rotations, whose number is $O(n^2)$, in $O(n^2)$ time. Instead of constructing $H$, Gusfield considered a sub-graph $H'$ of $D$, whose transitive closure is identical to $D$. Moreover, $H'$ can be constructed in $O(n^2)$ time. We will not discuss how to construct $H'$ in detail but only apply Gusfield's results here. Then for each rotation $R$, we only need to search $H'$ to find $CloSet(R)$, which takes $O(n^2)$ time. Thus, we finish the initialization step in $O(n^4)$ time since there are $O(n^2)$ rotations altogether.

The iteration part is the bottleneck of the algorithm. At least one rotation is eliminated for each iteration, and thus $O(n^2)$ iterations are needed. Inside each iteration, we need to construct the set $\mathscr C$. There are $O(n^2)$ principle sets and to determine whether a principle set can be eliminated, we need to simulate the Gale-Shapley algorithm and change the modified suitor graph accordingly. After the modification, we traverse the graph to see if each vertex is reachable. Both of the two operations takes $O(n^2)$ time. Thus, the construction of $\mathscr C$ takes $O(n^4)$ time. In the \emph{If-Else} statement, if we find a principle set that can be eliminated, we eliminate the principle set and change the graph in $O(n^2)$. Otherwise, we traverse the graph to construct the preference profile for $L$ according to Lemma \ref{CharOfProfile}. Thus, the \emph{If-Else} statement takes $O(n^2)$ time and the time complexity of the algorithm is $O(n^6)$ in total.

\subsection{Algorithmic Characterization}
\label{sec:algo_char}
Notice that at each iteration, the algorithm has multiple principle sets to select from. To prove our characterization result in Theorem \ref{theo:FindAllPareto}, we have already shown that for each $S_L$-Pareto-optimal matching $\mu$, there exists a way to select the principle sets to eliminate in each iteration such that the induced matching from the output of Algorithm \ref{algorithm} is $\mu$.
%We defer the proof to the full version due to space limitations.

\section{Inconspicuousness}
\label{sec:inconspicuousness}
In fact, if a stable matching with respect to the true preference lists can be obtained by permutation manipulations, the manipulators can also obtain this matching by an inconspicuous manipulation. We defer the proofs in this section to \ref{sec:proof_inconsp}.

\begin{definition}[Inconspicuous Strategy Profile]
    A strategy profile is inconspicuous if each manipulator permutes their preference lists by moving only one man to a higher rank. 
\end{definition}

For convenience, we introduce a new notation $Pro(w)$ for each $w \in W$. A proposal list $Pro(w)$ of a woman is a list of all men who have proposed to her in the Gale-Shapley algorithm, and the orderings of its entries are the same as her stated preference list. A reduced proposal list contains the top two entries (first entry if there is only one entry) of $Pro(w)$, denoted by $Pro_r(w)$. Clearly, each woman $w$ is matched to the first man of $Pro_r(w)$.

\begin{lemma} \label{ReducedPropList}
	Given all agents' true preference profile $(P(M), P(W))$, if a matching $\mu$ is in $S_L$ with corresponding preference profile $P = (P(M), P(N), P(L))$, then the induced matching is still $\mu$, if for each $w \in L$, we modify $w$'s preference list by moving $Pro_r(w)$ to the top and ordering other men arbitrarily.
\end{lemma}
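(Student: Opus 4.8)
The goal is to show that the men-proposing Gale--Shapley algorithm, run on the modified profile $P':=(P(M),P(N),P'(L))$, again outputs $\mu$. Since that algorithm outputs the M-optimal stable matching, it suffices to establish (a) $\mu$ is stable under $P'$, and (b) $\mu$ is the M-optimal stable matching of $P'$. The plan leans on the elementary fact that in men-proposing Gale--Shapley a man proposes exactly to the women he weakly prefers to his final partner; applied to the run on $P$ this gives $Pro(w)=\{m : w\succeq^P_m\mu(m)\}$ for every woman $w$, so for a manipulator $w$ we have $Pro(w)=\delta(w)\cup\{\mu(w)\}$, its first entry is $\mu(w)$ (forced by stability), and its second entry $m_2(w)$ is the $\succ^P_w$-best man in $\delta(w)$. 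Hence $P'(w)$ puts $\mu(w)$ first, $m_2(w)$ second, and the remaining men below them in an arbitrary order.

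Part (a) uses only that $\mu(w)$ heads $P'(w)$ for each manipulator: individual rationality is immediate, and a blocking pair $(m,w)$ of $\mu$ under $P'$ would require $m\succ^{P'}_w\mu(w)$, which is impossible when $w\in L$ (her top choice is $\mu(w)$) and, when $w\in N$, would make $(m,w)$ a blocking pair of $\mu$ under the true profile (the lists of $m$ and $w$ are unchanged), contradicting $\mu\in S_A$.

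For part (b), let $\nu$ be the M-optimal stable matching of $P'$. By part (a) $\mu$ is stable under $P'$, so $\nu(m)\succeq^{P'}_m\mu(m)$ for every man $m$, hence $\nu(m)\succeq^{P}_m\mu(m)$ since the men's lists coincide under $P$ and $P'$. I would then show $\nu$ is stable under $P$ as well: because $\mu$ is M-optimal under $P$, this forces $\mu(m)\succeq^{P}_m\nu(m)$ for all $m$, so $\nu=\mu$ and we are done. To check stability of $\nu$ under $P$, suppose $(m,w)$ blocks $\nu$ under $P$. It cannot have $w\in N$ (their lists agree under $P$ and $P'$, so $(m,w)$ would block $\nu$ under $P'$). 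So $w\in L$, with $w\succ^{P}_m\nu(m)$ and $m\succ^{P}_w\nu(w)$, and three of the four possibilities for $m$ close directly: if $m\notin Pro(w)$ then $\mu(m)\succ^P_m w\succ^P_m\nu(m)$, contradicting $\nu(m)\succeq^P_m\mu(m)$; if $m=\mu(w)$ then $\nu(m)\succeq^P_m\mu(m)=w$, contradicting $w\succ^P_m\nu(m)$; and if $m=m_2(w)$ then, since $\mu(w)$ and $m_2(w)$ are the two $\succ^P_w$-largest elements of $Pro(w)$, the inequality $m\succ^P_w\nu(w)$ forces $\nu(w)\notin\{\mu(w),m_2(w)\}$, so $m$ — sitting in the second slot of $P'(w)$ — beats $\nu(w)$ under $P'$ as well, making $(m,w)$ a blocking pair of $\nu$ under $P'$, a contradiction.

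The remaining case, $m\in\delta(w)\setminus\{m_2(w)\}$, is where the real work lies: $m$ then sits in the freely-ordered tail of $P'(w)$, so one cannot control its position relative to $\nu(w)$ and cannot manufacture a blocking pair under $P'$ outright. The plan here is to argue that such a blocking pair under $P$ forces an improvement cycle relative to $\mu$ — a set of men, among them $\mu(w)$ itself, each strictly preferring his $\nu$-partner to his $\mu$-partner — and that the existence of such a cycle contradicts $\mu\in S_A$: by Theorem~\ref{CharOfProfile}, $\mu\in S_A$ is equivalent to every vertex of the suitor graph $G(P(M),P(N),\mu)$ being reachable from $s$, a condition independent of the manipulators' lists, and one shows that an improvement cycle anchored at $w$ disconnects the cycle's vertices from $s$. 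An alternative route, and probably the cleaner one given the machinery already in place, is to track the reduced table and the induced suitor-graph modifications along the Gale--Shapley run on $P'$ using the rotation-elimination bookkeeping of Section~\ref{subsec:pareto}, and verify step by step that it mirrors the run on $P$, so the output is again $\mu$. I expect essentially all the remaining difficulty to be concentrated in this last case.
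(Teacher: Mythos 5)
Your reduction to ``show that the Gale--Shapley outcome $\nu$ of the modified profile $P'$ is stable under $P$'' is sound, and your cases $m\notin Pro(w)$, $m=\mu(w)$, $m=m_2(w)$ (and $w\in N$) are handled correctly, using exactly the right monotonicity fact $\nu(m)\succeq_m\mu(m)$. But the remaining case --- $w\in L$ and $m$ a proposer of $w$ lying strictly below the top two entries of $Pro(w)$ --- is precisely where the content of the lemma sits, and your proposal does not close it. Neither sketched escape route works as stated: the claim that such a blocking pair forces an ``improvement cycle'' whose vertices become unreachable from $s$ in the suitor graph is unsubstantiated, and is dubious on its face, because for a manipulator $w$ the suitor graph $G(P(M),P(N),\mu)$ contains an edge into $w$ from \emph{every} man in $\delta(w)$, so reachability from $s$ (which is all that $\mu\in S_A$ gives you via Theorem~\ref{CharOfProfile}) does not preclude such configurations; and ``track the reduced table and verify that the run on $P'$ mirrors the run on $P$'' is a restatement of the lemma, not an argument. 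So the proposal has a genuine, admitted gap, and it is not a routine one to fill by local blocking-pair reasoning, exactly because the tail of $P'(w)$ is arbitrary.

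The paper closes this difficulty by a different, global mechanism rather than by ruling out blocking pairs. It builds an auxiliary graph $T$ (a subgraph of the suitor graph) whose edges are $(w,\mu(w))$ for every woman and $(m_2(w),w)$ for the second entry $m_2(w)$ of $Pro_r(w)$, adds $s$ pointing to every woman with a single proposal (at least one such woman exists), and then proves $\nu(v)=\mu(v)$ for every vertex by induction on its height in the BFS tree rooted at $s$. The base case is a woman who received only $\mu(w)$'s proposal under $P$: every other man $m$ has $\mu(m)\succ_m w$, hence $\nu(m)\succeq_m\mu(m)\succ_m w$ and he never proposes to her under $P'$, so $\nu(w)=\mu(w)$. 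The inductive step for a woman $w$ uses that her parent $m_2(w)$ is already pinned, $\nu(m_2(w))=\mu(m_2(w))$, and since $w\succ_{m_2(w)}\mu(m_2(w))$ he must propose to $w$ in the run on $P'$; as $P'(w)$ ranks only $\mu(w)$ above $m_2(w)$, her $P'$-partner is one of these two, and it cannot be $m_2(w)$, so $\nu(w)=\mu(w)$. Once $\nu=\mu$ is propagated vertex by vertex, your troublesome case never arises. This propagation along second-proposer edges is the missing idea in your proposal; if you want to salvage your structure, you would need to import it (or an equivalent global argument) to pin down $\nu(w)$ for manipulators before any blocking-pair analysis can succeed.
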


\begin{theorem}[Inconspicuous manipulation]\label{TinyChange}
For any stable matching with respect to the true preference lists that can be obtained by permutation manipulations, there exists a preference profile for the manipulators, in which each manipulator only needs to move at most one man to some higher ranking, that yields the same matching.
\end{theorem}

Theorem~\ref{TinyChange} suggests that for each woman $w$, let $m_1$ and $m_2$ be the two men in $Pro_r(w)$\footnote{If woman $w$ only receives one proposal, she cannot implement any manipulation.}, then $w$ can modify her true preference list by moving $m_2$ to the place right after $m_1$ to generate the same induced matching (see Algorithm~\ref{algorithm2} for details). 

%We defer the proof to the full version due to space limitations.

\begin{algorithm2e}
	%\KwIn{Preference profile $P$, and a set of manipulators $L \subset W$}
	%\KwOut{An optimal inconspicuous strategy profile for $L$}
	\caption{Find a $S_L$-Pareto-optimal and inconspicuous preference profile}
	\label{algorithm2}
	Use Algorithm \ref{algorithm} to compute a strategy profile $P'(L)$ for $L$;\\
	Compute $Pro_r(w)$ for each $w\in L$ with respect to $P'(L)$;\\
	\For{$w$ in $L$}{
		Modify the true preference list $P(w)$ by moving the second man in $Pro_r(w)$ to the position right after the first man in $Pro_r(w)$;\\
	}
	\Return{ the modified preference profile $P$};
	
\end{algorithm2e}

\section{Incentive Properties}
\label{sec:incentive}
Although we only have been focusing on constructing $S_L$-Pareto-optimal strategy profiles, a $S_L$-Pareto-optimal strategy profile, which is also inconspicuous, actually forms a Nash equilibrium.

\begin{lemma}
\label{lem:opt_single_manip}
	Suppose there is only one manipulator $w$. Then the best matching $\mu'$ that $w$ can obtain via permutation manipulation is stable with respect to the true preference $P$.
\end{lemma}

%Our proof also relates to the following theorem.
%\begin{theorem}[\citet{gonczarowski2013sisterhood}]\label{thm:sisterhood}
%Given agents' strict preferences over agents of the other sex, and a set of manipulators $L \in W$ are allowed to use permutation manipulations, if no lying woman is worse off, then (1) No woman is worse off; (2) No man is better off.
%\end{theorem}

\begin{theorem}\label{thm:charOfNE}
    A strategy profile, that is $S_L$-Pareto-optimal and inconspicuous, forms a Nash equilibrium. 
\end{theorem}
\begin{proof}
    Denote by $P$ and $\mu$ the true preference profile of agents and the corresponding matching. Let $P_1$ be the preference profile returned by Algorithm~\ref{algorithm2} given $P$, and $\mu_1$ be the corresponding matching. It is clear that for each $w\in L$, Algorithm~\ref{algorithm2} only changes the order of the men ranked strictly lower than $\mu_1(w)$. For the sake of contradiction, assume there exists a manipulator $w'\in L$ such that $w'$ can get a strictly better partner $m$ ($m\succ^P_{w'}\mu_1(w')$) by misreporting a different preference list. Let $P_2$ and $\mu_2$ be the preference profile after misreporting and the corresponding matching. Without loss of generality, we assume that $m$ is the best partner (according to both $P$ and $P_1$) that $w'$ can obtain. Then we know from Lemma~\ref{lem:opt_single_manip} that, $\mu_2$ is stable with respect to $P_1$, and thus for each $w\in W$, we have that $\mu_2(w)\succeq^{P_1}_w \mu_1(w)$. It follows that $\mu_2(w)\succeq^{P}_w \mu_1(w)$, since Algorithm~\ref{algorithm2} does not change the order of the men who are ranked higher than $\mu_1(w)$. It follows that $\mu_2$ is also stable with respect to $P$, and $\mu_2$ Pareto-dominates (in the sense of $S_L$-Pareto-optimality) $\mu_1$. However, $\mu_2$ is not found by Algorithm~\ref{algorithm2}. A contradiction.
\end{proof}

Since all $S_L$-Pareto-optimal strategy profiles can be turned into an inconspicuous manipulation by Algorithm~\ref{algorithm2}, we have the following corollary.
\begin{corollary}
    For any $S_L$-Pareto-optimal matching, there exists a Nash equilibrium that can induce it.
\end{corollary}
Therefore, $S_L$-Pareto-optimal matchings exactly address both the cooperation and the competition among the women in the coalition.

\section{Manipulations in the many-to-one setting}
\label{sec:many}
We have already discussed the manipulation problem in the one-to-one setting. However, China's college admissions process is a many-to-one setting, since each university can be matched with multiple students. It is worth mentioning that there is almost no tie between students. If two students have the same total scores, the admissions process breaks ties by comparing the their scores of different subjects sequentially. In this section, we analyze the manipulation problem in the many-to-one setting. In fact, all main results presented in previous sections can be naturally extended to the many-to-one setting.

Let $q_i\ge 1$ be the quota of $w_i$, i.e., $w_i$ can be matched to at most $q_i$ men. In China, the quota of a university is almost fixed and are always publicly known (changes need to be approved by the Ministry of Education). Therefore, we assume that $q_i$ is known to every agent and do not consider manipulations by misreporting the quota.

Before we start our analysis, we need to emphasize that the women's preferences are slightly different in this setting. We need to define preferences between sets of men, since each woman has a quota more than 1. We first define the \emph{responsive preference relation}.

\begin{definition}[Responsive preference relation]
	Suppose that $w$ has strict preferences over individual men $\prec_w$. A preference relation over sets of men is responsive if, $w$ prefers set $S$ to $T$, whenever $S$ and $T$ satisfy:
	\begin{enumerate}
		\item $m\in S$, $m'\in T$ and $S = T\cup \{m\} \setminus\{m'\}$;
		\item $m\succ_w m'$.
	\end{enumerate}
\end{definition}

We assume that the women's preferences are the transitive closure of the above responsive preference relation. Or equivalently,

\begin{definition}[Set preferences of women]
$w$ prefers set $S$ to $T$ if there exists a one-to-one mapping $\pi:S\mapsto T$, such that $\forall m\in S$, $m\succ_w \pi(m)$, where $\succ_w$ is the individual preference.
\end{definition}

Note that this only defines relations over sets that have the same size. We focus on matchings that are stable with respect to true preferences and it is known that in all stable matchings, each woman is matched with the same number of men. Therefore such a definition is enough for our analysis.

We omit the formal definition of Nash equilibrium and $S_L$-Pareto-optimality, since they can be easily adapted with the above definition to suit the many-to-one setting. For ease of presentation, we will use \emph{set Nash equilibrium} and \emph{set $S_L$-Pareto-optimality} to mean the corresponding definitions in terms of the set preferences of women. To apply the Gale-Shapley algorithm to this setting, one can simply think of each woman $w_i$ to be $q_i$ copies of the same woman (with the same preference list), each with a quota of 1. We also need to change men's preference lists by replacing each $w_i$ with the $q_i$ copies $w_{i,1}, w_{i,2}, \dots, w_{i,q_i}$. For simplicity, we assume that the $q_i$ copies are always placed in this order. Since these copies correspond to the same $w_i$, this assumption is without loss of generality. We call the new instance \emph{the corresponding one-to-one instance}. 

We can then apply all our previous results to the corresponding one-to-one instance. Suppose that we run Algorithm \ref{algorithm} on it and get a $S_L$-Pareto-optimal (in terms of individual preferences) matching. It is easy to see that this matching is also set $S_L$-Pareto-optimal. But in the many-to-one setting, we have an additional constraint: all $q_i$ copies of a manipulator $w_i$ must have the same preference list, since they actually represent the same woman. %The rest of this section is devoted to showing that it is possible to construct such preference lists.

\begin{theorem}[Inconspicuous manipulation in the many-to-one setting]\label{thm:inc_many}
	Given a many-to-one instance, for any $S_L$-Pareto-optimal matching $\mu$ computed by Algorithm \ref{algorithm} on the corresponding one-to-one instance, the same matching $\mu$ can be achieved on the many-to-one instance and each manipulator $i$ moves at most $q_i$ men to some higher rankings.
\end{theorem}

The proof is deferred to \ref{sec:pf_opt_single_manip}.

\section{Strictly Better-off Outcomes}\label{sec:strict_better}

The above results show that the Gale-Shapley algorithm is vulnerable to coalition manipulation. However, under the setting where a manipulation is costly, every manipulator needs to be strictly better off after the manipulation to preserve individual rationality. In the example that demonstrates the conflicts between manipulators in Table~\ref{hardexample}, only one of $w_1$ and $w_2$ can manipulate or the $W$-optimal partner. Therefore, such an example provides a natural way to represent a binary variable. In fact, we show a hardness result in the costly environment:

    \begin{theorem}
    	\label{theo:hardness_better}
        It is NP-complete to find a strategy profile, the induced matching of which is strictly better off for all manipulators.
    \end{theorem} 
    
    Therefore, when the manipulation is costly, a manipulation coalition is unlikely to form and the Gale-Shapley algorithm is immune to coalition manipulations. According to Theorem \ref{theo:hardness_better}, one immediate corollary is that the number of $S_L$-Pareto-optimal matchings cannot be polynomial in the number of men and women. For otherwise, we can enumerate all such matchings by Algorithm~\ref{algorithm} to develop a polynomial time algorithm. 
	%Actually, there are $2^{\Theta(|M| + |W|)}$ different $S_L$-Pareto-optimal matchings, which is $S_L$-Pareto-optimal and strictly better off for all manipulators, in the constructed stable matching problem used in the reduction of Theorem \ref{theo:hardness_better} (see Appendix \ref{appendix:num_pareto_optimal_matching} for details). 
	Last but not least, we show that the problem to compute the number of $S_L$-Pareto-optimal matchings, which are strictly better off for all manipulators, is \#P-complete.	

Finally, we show that computing the number of $S_L$-Pareto-optimal matchings which are strictly better off for all manipulators  is \#P-Hard.
    \begin{theorem}\label{sharp_P}
        It is \#P-complete to compute the number of $S_L$-Pareto-optimal matchings, which are strictly better off for all manipulators.    
    \end{theorem}

\section{Conclusion}
Motivated by a real life phenomenon risen in recent years in the college admissions process in China, we consider manipulations by a subset of women in the Gale-Shapley algorithm. We show that a Nash equilibrium with $S_L$-Pareto-optimal matching can be efficiently computed in general. These results confirm that the leagues of universities can benefit from forming coalitions. On the contrary, we show that it is NP-complete to find a strictly better off matching for all the manipulators, implying that Gale-Shapley algorithm is immune from permutation manipulations when the manipulations are costly.

\section{Acknowledgements}
Weiran Shen is with Beijing Key Laboratory of Big Data Management and Analysis Methods, Gaoling School of Artificial Intelligence, Renmin University of China. This work was supported by Beijing Outstanding Young Scientist Program NO. BJJWZYJH012019100020098 and Intelligent Social Governance Interdisciplinary Platform, Major Innovation \& Planning Interdisciplinary Platform for the ``Double-First Class'' Initiative, Renmin University of China.

\bibliographystyle{plainnat}
\bibliography{reference}
\clearpage

\appendix
\section*{APPENDIX}
\setcounter{section}{0}

\section{Differences in the Definitions of Suitor Graphs}
\label{app:suitor_graph}
The definition of suitor graphs was initially proposed in \cite{kobayashi2009successful}, but the name ``suitor graph'' is used in their subsequent paper \cite{kobayashi2010cheating}. However, the setting of \citet{kobayashi2009successful} does not consider the problem of manipulations, hence there is no manipulators or non-manipulators. Therefore, compared with the original suitor graph (See Definition \ref{def:suitor_graph_2009} below), our definition adds edges for $w \in L$  instead of for all women and contains an additional Step 4. Our definition also includes the virtual vertex, which is only introduced later to prove their main result in \cite{kobayashi2009successful}.
\begin{definition}[Suitor Graph \cite{kobayashi2009successful}]
\label{def:suitor_graph_2009}
Given all men's preference profile $P(M)$ and a matching $\mu$, the corresponding suitor graph consists of:
\begin{enumerate}
	\item a set of vertices: $M\cup W$;
	\item a set of directed edges:
	\begin{align*}
		&\left\{ (w, \mu(w))\in W\times M \mid w\in W \right\} \\
		\cup &\left\{(m, w)\in M\times W \mid w=\mu(w) \text{ or } w\succ_m \mu(m)  \right\}
	\end{align*}
\end{enumerate}
\end{definition}

\citeauthor{kobayashi2009successful} later defined the suitor graph in a slightly different way in their subsequent paper \cite{kobayashi2010cheating}, where they remove the edges $(\mu(w), w)$ for all $w$. They also define the \emph{rooted suitor graph}, which includes the same virtual vertex as in our definition. 

\begin{definition}[Suitor Graph \cite{kobayashi2010cheating}]
	\label{def:suitor_graph_2010}
	Given all men's preference profile $P(M)$ and a partial matching $\mu$ that only specifies the partners of a subset of the agents, the corresponding suitor graph consists of:
	\begin{enumerate}
		\item a set of vertices: $M\cup W$;
		\item a set of directed edges:
		\begin{align*}
			&\left\{ (w, \mu(w))\in W\times M \mid w \text{ is matched in } \mu \right\} \\
			\cup &\left\{(m, w)\in M\times W \mid m \text{ is matched in } \mu \text{ and } w\succ_m \mu(m)  \right\}
		\end{align*}
	\end{enumerate}
\end{definition}

\section{Omitted Proofs in Section \ref{sec:pareto}.}
\label{app:proof_algo}
\subsection{Proof of Lemma \ref{RotationReachability}.}
\begin{replemma}{RotationReachability}
After eliminating a rotation $R$,
\begin{enumerate}
    \item all agents in $R$ are in the same strongly connected component;
    \item vertices formerly reachable from a vertex in $R$ remain reachable from $R$;
    \item vertices overtaken during the elimination of $R$ are reachable from $R$.
\end{enumerate}
\end{replemma}

To prove Lemma \ref{RotationReachability}, we first show the following claim.
\begin{claim} \label{FirstAcceptance}
	For each man $m_i$ in $R$, in the procedure of eliminating the rotation $R$, $w_{i+1}$ (the subscript is taken modulo $r$) is the first woman to accept him, and each woman in $R$ accepts only one proposal during the procedure.
\end{claim}
\begin{proof}
According to the definition of rotations, $w_{i+1}$ is the second in $m_i$'s reduced list. If there are other women between $w_i$ and $w_{i+1}$ in $m_i$'s preference list, they are absent from the reduced list because these women already hold proposals from better men. Henceforth, even though $m_i$ proposes to these women, they reject him. But $m_i$ is in $w_{i+1}$'s reduced list since $w_{i+1}$ is in $m_i$'s. Therefore, $m_i$ is a better choice for $w_{i+1}$ and $w_{i+1}$ accepts him. 

After the elimination, each man $m_i$ in $R$ proposes to $w_{i+1}$ and each man is accepted only once. Also each woman $w_{i+1}$ holds a new proposal from $m_i$ and thus accepts at least once. The conclusion is immediate since the total number of accepted
men is equal to the total number of women who accept a new partner.
\end{proof}

\begin{proof}[Proof of Lemma \ref{RotationReachability}.]
For each $m_i$ in $R$, $R$ moves $m_i$ from $w_i$ to $w_{i+1}$. As a result, there exists an edge from $w_{i+1}$ to $m_i$. We now prove that each $m_i$ has an outgoing edge pointing to $w_i$, and all agents in $R$ then form a cycle, and thus in the same strongly connected component. Before the elimination, $w_i$ is the partner of $m_i$, so there is an edge from $m_i$ to $w_i$. If $w_i$ is a manipulator, the edge $(m_i, w_i)$ is not removed during the elimination according to the steps described above. If $w_i$ is not a manipulator, then only two incoming edges are remained after the elimination and these edges are from the best two men among those who propose to her. According to Claim~\ref{FirstAcceptance}, only one man, namely $m_{i-1}$, is accepted. Thus, $m_{i-1}$ is the best suitor of $w_i$. We claim that $m_i$ is the second best and the edge from $m_i$ is still in the modified suitor graph. Otherwise, suppose $m'$ is a better choice than $m_i$ to $w_i$. Then $m'$ is also in $R$. We let $m'$ propose first, and $w_i$ accepts $m'$, which makes $w_i$ accepts at least twice. A contradiction.

Since each woman can be reached from her partner before the elimination, it is without loss of generality to assume that a vertex $v$ can be reached from a man $m$ in $R$ through a path $p$. Let $u$ be the last vertex in $p$ such that $u$ is in $R$ or is overtaken by a vertex in $R$. If $u$ is in $R$, then after the elimination, $m$ can reach $u$ since they are in the same strongly connected component. If $u$ is overtaken by some vertex $m'$, then during the elimination, an edge $(m', u)$ is added to the graph. Thus, $m$ can reach $u$ through $m'$. Henceforth, in any case, $u$ is reachable. Since in $p$ the vertices between $u$ and $v$ are neither in $R$ nor overtaken by some vertex in $R$, the path from $u$ to $v$ remains in the modified graph. Therefore $v$ is reachable from $m$ and also from any vertex in $R$ for they are in the same strongly connected component after the elimination.
\end{proof}

\subsection{Proof of Lemma \ref{ClosedSetsReachability}}

\begin{replemma}{ClosedSetsReachability}
	After eliminating a closed set of rotations $\mathcal{R}$, each $v$ in $\mathcal{R}$ is reachable from at least one vertex in $Max(\mathcal{R})$, i.e., there exists a path to $v$ from a vertex in $Max(\mathcal{R})$.
\end{replemma}

\begin{proof}
We eliminate the rotations in $\mathcal{R}$ one by one and generate a sequence of rotations $q=(R_1, R_2, \ldots, R_n)$. $R_i$ is the $i$-th rotation to eliminate. After eliminating $R_n$, all rotations in $\mathcal{R}$ are eliminated. Denote $q_i = \bigcup_{j=1}^i R_j$. For each $i$, $q_i$ is a closed set. We call $i$ the sequence number of $q_i$ and we prove by induction on the sequence number that after eliminating $q_i$, all vertices in $q_i$ can be reached from a vertex in $Max(q_i)$. For $i=1$, $q_i=\{R_1\}$, the case is trivial from Lemma~\ref{RotationReachability}. Assume the statement is true for $i=k$, then for $i=k+1$, we only eliminate one more rotation $R_{k+1}$ than in the case with $i=k$. $R_{k+1}$ is in $Max(q_{k+1})$ otherwise there exists another rotation $R'$ in $q_k$ such that $R_{k+1} \prec R'$ and then $q_k$ is not a closed set. Let $D=Max(q_k)\setminus Max(q_{k+1})$. Rotations in $D$ are no longer maximal rotations because $R_{k+1}$ is eliminated, which indicates that rotations in $D$ explicitly precede $R_{k+1}$. Henceforth, every rotation $R$ in $D$ has a common agent with $R_{k+1}$ and each vertex $u$ reachable from $R$ is reachable from that common agent. According to Lemma~\ref{RotationReachability}, $u$ can be reached from $R_{k+1}$. For each vertex $u'$ that is not reachable from rotations in $D$, it must be reachable from another rotation $R'$ in $Max(q_k)$ through path $p$ and $R'$ is still in $Max(q_{k+1})$. If $p$ is still in the graph, then we are done. Otherwise, some vertices in $p$ must be in $R_{k+1}$ or overtaken by a man in $R_{k+1}$. Let $z$ be the last vertex in $p$ such that $z$ is in $R_{k+1}$ or overtaken. $z$ can be reached from $R_{k+1}$ and the path from $z$ to $u'$ is not affected by the elimination. Therefore, $u'$ is reachable from $R_{k+1}$.
\end{proof}

\subsection{Proof of Lemma \ref{Eliminatable}.}
\begin{replemma}{Eliminatable}
	A closed set of rotations $\mathcal{R}$ can be eliminated if and only if after eliminating $\mathcal{R}$, every vertex in $Max(\mathcal{R})$ can be reached from $s$.
\end{replemma}
\begin{proof}
If a closed set of rotations $\mathcal{R}$ can be eliminated, then every vertex is reachable after $\mathcal{R}$ is eliminated. As a result, any member of $Max(\mathcal{R})$ is reachable.

If after eliminating $\mathcal{R}$, any member of $Max(\mathcal{R})$ can be reached from $s$, then we need to show that all other vertices are also reachable from $s$. We split all vertices into two parts. Let $V$ denote the set of all the vertices that can be reached from members of $Max(\mathcal{R})$. If a vertex $v$ is in $V$, then $v$ is reachable from $s$ through $Max(\mathcal{R})$. If $v$ is not in $V$, then in the initial graph, there is a path $p$ from $s$ to $v$. We claim that no vertex in path $p$ is either in any of the rotations in $\mathcal{R}$ or overtaken when eliminating a rotation. Otherwise,  according to Lemma~\ref{ClosedSetsReachability}, $v$ is reachable from $Max(\mathcal{R})$. Thus, the path $p$ is still in the graph after eliminating all the rotations in $\mathcal{R}$.
\end{proof}

\subsection{Proof of Theorem \ref{CloSetEliminatable}.}
\begin{reptheorem}{CloSetEliminatable}
	Given a closed set of rotations $\mathcal{R}$, if $\mathcal{R}$ can be eliminated, then there exists a rotation $R \in \mathcal{R}$ such that $CloSet(R)$ can be eliminated.
\end{reptheorem}
In order to prove Theorem \ref{CloSetEliminatable}, we first show the following claim about the maximal rotations of a closed set that can be eliminated.
\begin{claim} \label{EliContainLiar}
	If a closed set $\mathcal{R}$ can be eliminated, then every rotation in $Max(\mathcal{R})$ must contain a manipulator.
\end{claim}
\begin{proof}
Assume there exists a rotation $R\in Max(\mathcal{R})$ such that $R$ contains no manipulator. We can change the order of elimination to make $R$ the last to eliminate. We prove that after eliminating $R$, any vertex in $R$ is not reachable from $s$. From the proof of Lemma~\ref{RotationReachability}, we know that all vertices in $R$ form a cycle after eliminating $R$. Each man in $R$ has only one incoming edge from his current partner who is also in $R$. Each woman has two incoming edges, one from her partner in $R$ and another from her former partner which is also in $R$. Thus, every vertex in $R$ has no incoming edges from outside the cycle and thus is not reachable from $s$.
\end{proof}
\begin{proof}[Proof of Theorem \ref{CloSetEliminatable}.]
Let $V$ be the set of all vertices in $\mathcal{R}$. After eliminating $\mathcal{R}$, we arbitrarily choose a vertex $v$ in $V$. In the corresponding modified suitor graph, there is a path $p=(v_0=s, v_1, v_2, \ldots, v_n=v)$ from $s$ to $v$ since $\mathcal{R}$ can be eliminated. Let $u$ be the first vertex in $p$ such that $u$ is in $V$. $u$ is obviously not $v_1$, or otherwise the edge $(s, u)$ will be deleted. Moreover, $u$ must be in $L$, since any non-manipulator can only be reached from a node in $V$ if she is overtaken during the elimination. Assume $u=v_l$ and $l > 1$. Then the sub-path $p'=(v_0, v_1, \ldots, v_l=u)$ is not affected (no vertices in $V$ or overtaken) during the elimination. Henceforth, $p'$ is in the original graph before eliminating $\mathcal{R}$. Now we consider the set $\mathcal{R}'=\{R \in \mathcal{R} | u \in R\}$. For any $R$ in $\mathcal{R}'$, if we eliminate $CloSet(R)$, the sub-path is also not affected. Therefore $CloSet(R)$ can be eliminated according to Lemma~\ref{Eliminatable}.
\end{proof}

\subsection{Proof of Lemma \ref{SubProblem}.}

\begin{replemma}{SubProblem}
	Given a set of manipulators $L \in W$, and the true preference profile $P=(P(M), P(W))$. Let $\mu$ be any matching in $S_L$ and $\mathcal{R}$ be the corresponding closed set of rotations. Then there exists a preference profile $P_{\mu}(L)$ for $L$ such that $\mu$ is the M-optimal stable matching of the preference profile $P_{\mu}=(P(M), P(N), P_{\mu}(L))$, and the reduced table of $P$ after eliminating $\mathcal{R}$ is exactly the reduced table of $P_{\mu}$ before eliminating any rotation.
\end{replemma}
\begin{proof}
Since $\mu$ is in $S_L$, there exists $P' = (P(M), P(N), P'(L))$ such that $\mu$ is the induced matching for $P'$. For each $w \in L$, we modify $P'(w)$ as follows:
\begin{enumerate}
	\item delete all men $m$ such that $m \succ_w^P \mu(w)$;
	\item reinsert them at the beginning according to their order in $w$'s true preference list;
	\item move $\mu(w)$ to the position right after all men $m$ such that $m \succ_w^P \mu(w)$.
\end{enumerate}
Denote the modified preference profile by $P'_\mu$. In fact, $P'_\mu$ is the $P_{\mu}$ we are looking for.

We first prove that $\mu$ is the M-optimal matching under $P'_\mu$. After the first two steps of modifications, the M-optimal matching is still $\mu$, since for each $w$, we only change the position of men ranked higher than $\mu(w)$ in her true preference list, who must have not proposed to $w$ under $P'$, and thus do not change the output of the Gale-Shapley algorithm. Otherwise, if a man $m$ with $m \succ_w^P \mu(w)$ has proposed to $w$, then we must have $w \succ_m^{P'} \mu(m)$, which is equivalent to $w \succ_m^{P} \mu(m)$. Thus $(m, w)$ forms a blocking pair in $\mu$ under the true preference profile $P$, contradicting to the stability of $\mu$ under $P$. In the third step, we move $\mu(w)$ to the position right after all men ranked higher than $\mu(w)$ in the true preference list $P(w)$. Consider all the men $m'$ with $m' \succ_w^{P'} \mu(w)$ but $\mu(w) \succ_w^{P'_\mu} m'$. $m'$ must have not proposed to $w$ under $P'$, or otherwise $\mu(w)$ cannot be the partner of $w$. Therefore, the positions of the men in $P'_\mu$ do not affect the output of the Gale-Shapley algorithm.

Let $T_{P_\mu}$ be the reduced table of $P$ after eliminating $\mathcal{R}$ and $T_{P'_\mu}$ be the reduced tables of $P'_\mu$. We already know that for each woman, her partners in the two reduced tables are the same, which is $\mu(w)$. In fact, a change of reduced table happens if and only if a woman accepts a proposal from a man $m$ and removes everyone ranked below $m$ in her preference list. So in the reduced list of each woman, no man is ranked below her current partner. Therefore, to prove that $T_{P_\mu}$ is the same as $T_{P'_\mu}$, it suffices to show that for each woman, $P$ and $P'_\mu$ are the same after removing all men ranked below her current partner, which is clear from the construction of $P'_\mu$.
\end{proof}

\subsection{Proof of Theorem \ref{theo:FindAllPareto}.}
\begin{reptheorem}{theo:FindAllPareto}
	A matching is $S_L$-Pareto-optimal if and only if it is an induced matching of a strategy profile found by Algorithm \ref{algorithm}.
\end{reptheorem}
\begin{proof}
Assume the $P(L)$ is a $S_L$-Pareto-optimal strategy profile for the manipulators. Let $\mu$ be the matching produced by $P(L)$ and $\mathcal{R}_{\mu}$ the corresponding set of rotations. $\mu$ can be forced to be the induced matching by always choosing the principle set that is a subset of $\mathcal{R}_{\mu}$. Let $\mathcal{R}_k$ be the rotations eliminated so far at the end of the $k$-th iteration and $\mu_k$ be the corresponding matching . We prove by induction on the iterations that at the end of each iteration, $\mathcal{R}_k$ is a subset of $\mathcal{R}_{\mu}$. In the first iteration, $\mathcal{R}_{\mu}$ is in $S_L$, so there exists a principle set $\mathcal{P} \subset \mathcal{R}_{\mu}$ that can be eliminated. Assume the statement holds for the $k$-th iteration. At the beginning of the $(k+1)$-th iteration, $\mu_k$ is the induced matching, and $\mathcal{R}_k$ is a subset of $\mathcal{R}_{\mu}$ by the inductive hypothesis, then there exists at least one principle set $\mathcal{P}_{k+1} \subset \mathcal{R}_{\mu} \setminus \mathcal{R}_k$ that can be eliminated. Therefore, at the end of the $(k+1)$-th iteration, $\mathcal{R}_{k+1} = \mathcal{R}_k \cup \mathcal{P}_{k+1}$ is also a subset of $\mathcal{R}_{\mu}$. When the algorithm terminates, the set of all eliminated rotations $\mathcal{R}$ is also a subset of $\mathcal{R}_{\mu}$. Assume $\mathcal{R} \ne \mathcal{R}_{\mu}$, then we can find some principle set to eliminate, which contradicts to the termination condition of the algorithm. Therefore the $S_L$-Pareto-optimal strategy profile can be found by the algorithm.
\end{proof}

\section{Omitted Proofs in Section \ref{sec:inconspicuousness}.}
\label{sec:proof_inconsp}

\subsection{Proof of Lemma \ref{ReducedPropList}.}
\label{sec:proof_reduced_proplist}
\begin{replemma}{ReducedPropList}
	Given all agents' true preference profile $(P(M), P(W))$, if a matching $\mu$ is in $S_L$ with corresponding preference profile $P = (P(M), P(N), P(L))$, then the induced matching is still $\mu$, if for each $w \in L$, we modify $w$'s preference list by moving $Pro_r(w)$ to the top and ordering other men arbitrarily.
\end{replemma}

\begin{proof}
Suppose the corresponding matching to the modified preference profile is $\mu'$. We show that $\mu' = \mu$. 

Let $P$ and $P'$ be the original profile and the modified profile. All the partial orders we used in this proof is defined in $P$. We construct a graph $T$, which is a sub-graph of the modified suitor graph $G(P(M), P(N), \mu)$, according to the set of all reduced proposal lists in $P$. The set of vertices is just $M \cup W$, and the edges are $E = \{(w, \mu(w))~|~w\in W\} \cup \{(m, w)~|~w \succ_m \mu(m), m \in Pro_r(w) \}$. We also add a virtual vertex $s$, and add edges from $s$ to each woman who has no incoming edges. Note that every woman has an outgoing edge pointing to her mate in $\mu$, and at most one incoming edge from her second entry in her proposal list. It is easy to prove that at least one woman has only one entry in her proposal list, and thus this woman has no incoming edge except the one from $s$. %Therefore, the graph $T$ is in fact a tree rooted at $s$. 

It is straightforward to check that $\mu$ is also stable under $P'$. Then we have $\mu'(m) \succeq_m \mu(m)$, which indicates that if $m$ proposes to some woman $w$ in $P'$, then he also proposes to her in $P$. Now we can prove the lemma by induction on the height of the breadth-first search tree on graph $T$ rooted at $s$. Denote the height of a vertex as $h(v)$. For each vertex with $h(v) = 1$, it must be a woman and has no incoming edge from vertices of $M$. Therefore, she gets only one proposal from $\mu(w)$ in $P$. Therefore each man $m$ other than $\mu(w)$ must be matched to a better woman, i.e., $\mu(m) \succ_m w$. Also, as proved above $\mu'(m) \succeq_m \mu(m)$. Then we have $\mu'(m) \succ w$, which means $m$ does not propose to $w$ in $P'$. The only possible partner for $w$ is $\mu(w)$. Thus, we can conclude that she is matched with $\mu(w)$ in $\mu'$, or $\mu'(w)=\mu(w)$. 

Assume $\mu'(v)=\mu(v)$ for each $v$ with $h(v) = k$, then for a vertex $v'$ with $h(v')=k+1$, we prove that we still have $\mu'(v')=\mu(v')$. If $k+1$ is even, then $v'$ is a man and we consider $v'$'s parent $v=Prt(v')$. From the construction of the graph, there is an edge from $v$ to $\mu(v)$. Henceforth, according to the inductive hypothesis, $\mu'(v)=\mu(v) = v'$, and $\mu(v') = v = \mu'(\mu'(v)) = \mu'(v')$. If $k+1$ is odd, then $v'$ is a woman and there is an edge pointing to her from $v$ who is the second entry in her received proposal list. On the one hand, each man in $\{m | m \succ_v \mu(v)\}$ is matched with someone who is better than $v$ in $\mu$. As a result, $\mu(m) \succ_m v$. And still $\mu'(m) \succeq_m \mu(m)$, we have $\mu'(m) \succ v$. Therefore $m$ does not propose to her in $P'$. On the other hand, $\mu(v)$ proposes to $v$ in $P'$ since $\mu(v)$ proposes to her in $P$. Combining the two sides, we know that $\mu(v)$ is the best man among all those who propose to her. Thus, $\mu'(v) = \mu(v)$.
\end{proof}

\subsection{Proof of Theorem~\ref{TinyChange}}
\begin{reptheorem}{TinyChange}[Inconspicuous manipulation]
    For any stable matching with respect to the true preference lists that can be obtained by permutation manipulations, there exists a preference profile for the manipulators, in which each manipulator only needs to move at most one man to some higher ranking, that yields the same matching.
\end{reptheorem}
\begin{proof}
We first construct the modified suitor graph using $\mu$ and compute the corresponding $P(L)$ according to Lemma \ref{CharOfProfile}. After that, we can compute $Pro(w)$ and $Pro_r(w)$ for each woman $w$ according to $P(L)$. Then we just move the second entry (if exists) of $Pro_r(w)$ to the position right after $\mu(w)$ in each manipulator $w$'s original preference list. Notice that in the modified preference list, no man who is ranked higher than $\mu(w)$ in $w$'s preference list proposes to $w$, or otherwise the induced matching is unstable with respect to true preference lists. Thus, the orderings of these men is irrelevant to the matching result and we can move $Pro_r(w)$ to the top without affecting the induced matching $\mu'$ for the modified lists. According to Lemma \ref{ReducedPropList}, we can conclude that $\mu' = \mu$.
\end{proof}

\section{Omitted Proofs in Section \ref{sec:incentive}.}
\subsection{Proof of Lemma \ref{lem:opt_single_manip}.}\label{sec:pf_opt_single_manip}
\begin{replemma}{lem:opt_single_manip}
	Suppose there is only one manipulator $w$. Then the best matching $\mu'$ that $w$ can obtain via permutation manipulation is stable with respect to the true preference $P$.
\end{replemma}
\begin{proof}
Let $P'$ be the preference profile corresponding to $\mu'$. Assume on the contrary that $\mu'$ is not stable with respect to $P$. Then there must be a blocking pair. However, any pair $(m, w')$ with $w'\ne w$ cannot block $\mu'$ under $P$, since they have the same preferences in both $P$ and $P'$. It follows that the woman in the blocking pair must be $w$. Let $(m, w)$ be the blocking pair. We move $m$ to the top of $P'(w)$. If we run the Gale-Shapley algorithm with the new preference profile, $m$ will still propose to $w$ and will finally be matched to $w$ since $m$ is now the favorite man of $w$. But $m\succ_w\mu'(w)$, which contradicts to the fact that $\mu'$ is the best matching that $w$ can obtain.
\end{proof}

\section{Omitted Proofs in Section~\ref{sec:many}}

\subsection{Proof of Theorem~\ref{thm:inc_many}}
\begin{reptheorem}{thm:inc_many}[Inconspicuous manipulation in the many-to-one setting]
    Given a many-to-one instance, for any $S_L$-Pareto-optimal matching $\mu$ computed by Algorithm \ref{algorithm} on the corresponding one-to-one instance, the same matching $\mu$ can be achieved on the many-to-one instance and each manipulator $i$ moves at most $q_i$ men to some higher rankings.
\end{reptheorem}

We will first prove the following lemma.

\begin{lemma}
	During the execution of Algorithm \ref{algorithm}, if a rotation $R$ contains a copy of a woman $w_i$, then it contains all $q_i$ copies of $w_i$.
\end{lemma}
\begin{proof}
	Consider the corresponding reduced lists of all men. Suppose the woman $w_{i,k}$ contained in $R$ is a copy of $w_i$. We assume, without loss of generality, that for all $1\le j\le q_i$, $w_{i,j}$ is matched to $m_j$ currently (i.e., $w_{i,j}$ is the first woman in $m_i$'s reduced list).
	
	For each $1\le j\le q_i-1$, we claim that $m_j\succ_{w_{i,j}}m_{j+1}$. This is because that $m_{j+1}$ has already been rejected by $w_{i,j}$ since he is now matched with $w_{i,j+1}$, and if $m_{j+1}\succ_{w_{i,j}}m_j$, $w_{i,j}$ cannot be matched with $m_j$ now since she once had a better partner $m_{j+1}$. Recall that all copies of $w_i$ has the same preference list. Thus $m_1\succ_{w_i}m_2\succ_{w_i}\dots\succ_{w_i}m_{q_i}$. This implies that $w_{i,j+1}$ through $w_{i,q_i}$ are still in the reduced list of $m_j$, and are ordered accordingly right after $w_{i,j}$. 
	
	Consider $m_k$ in rotation $R$. Since $w_{i,k}$ is the first woman in $m_k$'s reduced list, by definition, the next man in $R$ should have $w_{i,k}$ as the second woman in his reduced list, which is exactly $m_{k-i}$. Continuing with similar arguments, we know that $R$ contains a sequence $m_{q_i},\dots,m_2,m_1$, which indicates that all their matched women $w_{i,q_i}, \dots,w_{i,2},w_{i,1}$ are all contained in $R$.
\end{proof}

\begin{proof}[Proof of Theorem~\ref{thm:inc_many}.]
	To prove the theorem, we first focus on a specific way of constructing the $S_L$-Pareto-optimal strategy profile on the corresponding one-to-one instance. Then we show that based on this construction, we can construct an inconspicuous strategy profile on the original many-to-one instance (subject to the constraint that all copies of a woman have the same preference list) that gives the same matching.
	
	According to Lemma \ref{ReducedPropList}, to construct a $S_L$-Pareto-optimal strategy profile that yields $\mu$, we only need to construct $Pro_r(w_{i,j})$ for each manipulator $w_{i,j}$. Note that the proof of Lemma \ref{ReducedPropList} depends on the breadth-first search tree $T$ of the modified suitor graph, and $Pro_r(w_{i,j})$ contains exactly the two men who are the child ($\mu(w_{i,j}$) and the parent of the $w_{i,j}$ (it is easy to see that each woman only has one child and one parent in $T$).
	
	Now we construct a different tree $T'$ for the many-to-one setting such that Lemma \ref{ReducedPropList} still applies. Let $G$ be the modified suitor graph corresponding to the matching $\mu$. For any manipulator $w_{i,j}$ in $T$, there must be an edge $(m, w_{i,j})$ in $T$. $w_{i,j}$ cannot be matched to $m$ in $\mu$, since each woman has only 1 outgoing edge, and the previous edge in the path must come from $\mu(m)$. Let $D_k$ be the ordered two edges $(w_{i,k}, \mu(w_{i,k}))$ and $(\mu(w_{i,k}), w_{i,k-1})$. We show that we can replace edge $(m, w_{i,j})$ with a series of edges: $(m, w_{i, q_i}), D_{q_i}, D_{q_i-1},\dots, D_{j+1}$. 
	
	First, edge $(m, w_{i, q_i})$ is in graph $G$. To show this, note that edge $(m, w_{i,j})$ is in $G$ and $\mu(m)\ne w_{i,j}$, which indicates $w_{i,j}\succ_m \mu(m)$. So we have $w_{i,q_i} \succ_m \mu(m)$. According to the construction of the graph, we know that edge $(m, w_{i, q_i})$ is in $G$.
	
	Second, for each $D_k, j+1\le k\le q_i$, let $m'=\mu(w_{i,k})$. The first edge $(w_{i,k}, m')$ is in $G$ by definition. For the second edge $(m', w_{i,k-1})$, recall that we assume $w_{i,1}\succ_{m'} w_{i,2} \succ_{m'} \dots, \succ_{m'} w_{i,q_i}$, according to the definition of the modified suitor graph, $G$ contains the edge $(m', w_{i,k-1})$.
	
	Now we can construct the preference lists for all copies of manipulators with $T'$ according to Algorithm \ref{algorithm2}. In the resulting lists $P$, for copy $w_{i,j}, 1\le j\le q_i-1$ of a manipulator $w_i$, $w_{i,j}$ promotes man $\mu(w_{i,j+1})$ to the place right after man $\mu(w_{i,j})$. And $w_{i,q_i}$ promotes man $\mu(w_{i',1})$ to the place right after man $\mu(w_{i,q_i})$, where $w_{i'}$ is another woman. So we must have $\mu(w_{i,2})\succ_{w_i} \mu(w_{i,3})\succ_{w_i} \dots\succ_{w_i} \mu(w_{i,q_i})\succ_{w_i} \mu(w_{i',1})$ where $\succ_{w_i}$ is $w_i$'s true preference list. Now we re-construct a common preference list for all copies of $w_{i}$ by promoting $q_i$ men $\mu(w_{i,2}), \mu(w_{i,3}), \dots, \mu(w_{i,q_i}), \mu(w_{i',1})$ right after $\mu(w_{i,1})$. We claim that with this common preference list, the resulting matching is still $\mu$. To prove this, note that for $w_{i.j}$, the new common preference list can also be obtained by applying the following steps to the preference list $P(w_{i,j})$:
	\begin{enumerate}
		\item change the orders of men ranked above $\mu(w_{i,j})$;
		\item move the two men in $Pro_r(w_{i,j})$ to a higher rank;
		\item change the orders of men ranked below $\mu(w_{i,j})$ ($\mu(w_{i',1})$ if $j=q_i$).
	\end{enumerate}
	Clearly, the resulting matching does not change after the first step, since the men ranked above $\mu(w_{i,j})$ do not even propose to $w_{i,j}$. The other two steps also do not change the resulting matching since the proof of Theorem \ref{TinyChange} still holds.
\end{proof}

\section{Omitted Proofs in Section \ref{sec:strict_better}.}
\subsection{Proof of Theorem \ref{theo:hardness_better}.}
\begin{reptheorem}{theo:hardness_better}
	It is NP-complete to find a strategy profile, the induced matching of which is strictly better off for all manipulators.
\end{reptheorem} 

Clearly, this problem is in the NP class since given a preference profile, we can apply Gale-Shapley algorithm to generate the induced matching and verify the solution. In order to show the NP-completeness, we reduce {\sc 3-SAT} to this problem. Given an instance of {\sc 3-SAT} $\phi$, suppose the variable set is $V = \{x_1, \dots, x_n\}$, the corresponding literal set is $L = \{+x_i, -x_i ~|~ 1 \leq i \leq n\}$, and the clause set is $\{c_1, \dots, c_m\}$, where $c_j = (l_j^1, l_j^2, l_j^3)$. We construct an instance of our problem $G(\phi)$ with $N = 6 n + 2 m$ and
\begin{align*}
M = & ~ \{m_{x_i}^{+_1}, m_{x_i}^{+_2}, m_{x_i}^{+_3} ~|~ \forall 1 \leq i \leq n\} \cup \{m_{x_i}^{-_1}, m_{x_i}^{-_2}, m_{x_i}^{-_3} ~|~ \forall 1 \leq i \leq n\} \\ 
& \cup \{m_{c_j}^l  ~|~ \forall 1 \leq j \leq m\} \cup \{m_{c_j}^r  ~|~ \forall 1 \leq j \leq m\}, \\
W = & ~ \{w_{x_i}^{+_1}, w_{x_i}^{+_2}, w_{x_i}^{+_3} ~|~ \forall 1 \leq i \leq n\} \cup \{w_{x_i}^{-_1}, w_{x_i}^{-_2}, w_{x_i}^{-_3} ~|~ \forall 1 \leq i \leq n\} \\ 
& \cup \{w_{c_j}^l  ~|~ \forall 1 \leq j \leq m\} \cup \{w_{c_j}^r  ~|~ \forall 1 \leq j \leq m\}.
\end{align*}
The set of manipulators is
\[
L = \{w_{x_i}^{+_2} ~|~ \forall 1 \leq i \leq n\} \cup \{w_{x_i}^{-_2} ~|~ \forall 1 \leq i \leq n\} \cup \{w_{c_j}^r  ~|~ \forall 1 \leq j \leq m\}. \\ 
\]
The preference lists of each agent is specified as follows (the ``$\cdots$'' part at the end can be anything). For all $1 \leq i \leq n$ and each $x_i$, in the positive side (with superscript ``$+$''),
\begin{align*}
& P(m_{x_i}^{+_1}) = w_{x_i}^{+_1} \succ w_{x_i}^{+_2} \succ w_{x_i}^{-_3} \succ \cdots \\
& P(m_{x_i}^{+_2}) = w_{x_i}^{+_2} \succ w_{x_i}^{+_1} \succ \cdots \\
& P(w_{x_i}^{+_1}) = m_{x_i}^{+_2} \succ m_{x_i}^{+_1} \succ \cdots \\
& P(w_{x_i}^{+_2}) = m_{x_i}^{-_3} \succ m_{x_i}^{+_1} \succ m_{x_i}^{+_2} \succ m_{x_i}^{+_3} \succ \cdots \\
& P(w_{x_i}^{+_3}) = m_{x_i}^{-_1} \succ m_{x_i}^{+_3} \succ \cdots 
\end{align*}
In the negative side (with superscript $-$), similarly, 
\begin{align*}
& P(m_{x_i}^{-_1}) = w_{x_i}^{-_1} \succ w_{x_i}^{-_2} \succ w_{x_i}^{+_3} \succ \cdots \\
& P(m_{x_i}^{-_2}) = w_{x_i}^{-_2} \succ w_{x_i}^{-_1} \succ \cdots \\
& P(w_{x_i}^{-_1}) = m_{x_i}^{-_2} \succ m_{x_i}^{-_1} \succ \cdots \\
& P(w_{x_i}^{-_2}) = m_{x_i}^{+_3} \succ m_{x_i}^{-_1} \succ m_{x_i}^{-_2} \succ m_{x_i}^{-_3} \succ \cdots \\
& P(w_{x_i}^{-_3}) = m_{x_i}^{+_1} \succ m_{x_i}^{-_3} \succ \cdots 
\end{align*}
Suppose $+x_i \in c_{k_j}$ for all $1 \leq j \leq K_i^+$. The preference list of $m_{x_i}^{+_3}$ is
\[
P(m_{x_i}^{+_3}) = w_{x_i}^{+_2} \succ w_{x_i}^{+_3} \succ w_{c_{k_1}}^l \succ w_{c_{k_2}}^l \succ \cdots \succ w_{c_{k_{K_i^+}}}^l \succ w_{x_i}^{-_2} \succ \cdots
\]
Similarly, Suppose $-x_i \in c_{k_j}$ for all $1 \leq j \leq K_i^-$. The preference list of $m_{x_i}^{r_3}$ is
\[
P(m_{x_i}^{-_3}) = w_{x_i}^{-_2} \succ w_{x_i}^{-_3} \succ w_{c_{k_1}}^l \succ w_{c_{k_2}}^l \succ \cdots \succ w_{c_{k_{K_i^-}}}^l \succ w_{x_i}^{+_2} \succ \cdots
\]
Finally, we specify the preference lists for the agent with subscript $c_j$. For all $1 \leq j \leq m$,
\begin{align*}
& P(m_{c_j}^l) = w_{c_j}^l \succ w_{c_j}^r \succ \cdots \\
& P(m_{c_j}^r) = w_{c_j}^r \succ w_{c_j}^l \succ \cdots \\
& P(w_{c_j}^r) = m_{c_j}^l \succ m_{c_j}^r \succ \cdots
\end{align*}
Suppose $c_j = (s^1 ~ x_{j_1}) \lor (s^2 ~ x_{j_2}) \lor (s^3 ~ x_{i_3})$, where $s^1, s^2, s^3 \in \{-,+\}$. The preference list of $w_{c_j}^l$ is \footnote{If $s^k = +$, then $s^k_3 = +_3$; otherwise, if $s^k = -$, $s^k_3 = -_3$.}
\[
P(w_{c_j}^l) = m_{c_j}^r \succ m_{x_{j_1}}^{s_3^1} \succ m_{x_{j_2}}^{s_3^2} \succ m_{x_{j_3}}^{s_3^3} \succ m_{c_j}^l \succ \cdots
\]

To complete the reduction, we prove that $\phi$ is satisfiable if and only if $G(\phi)$ has a solution, i.e., there exists a strategy profile, whose induced matching is stable and strictly better off for all manipulators. 

First, notice the stable matching  $\mu$ generated by true preference lists is $\mu(m_{x_i}^{+_k}) = w_{x_i}^{+_k}$, $\mu(m_{x_i}^{-_k}) = w_{x_i}^{-_k}$ for all $1 \leq i \leq n$, $1 \leq k \leq 3$ and $\mu(m_{c_j}^l) = w_{c_j}^l$, $\mu(m_{c_j}^r) = w_{c_j}^r$ for all $1 \leq j \leq m$. Before providing proofs for both directions, we prove following lemmas first to establish intuitions.
\begin{lemma}\label{lem:better1}
	For all $i \in [n]$,  $w_{x_i}^{+_2}$ can perform a single-agent manipulation to be matched with $m_{x_i}^{+_1}$.
\end{lemma}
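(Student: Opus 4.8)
The plan is to exhibit an explicit permutation for $w_{x_i}^{+_2}$ and trace the men-proposing Gale--Shapley execution on the resulting profile, with all other agents reporting truthfully. First one records that on the true profile the stable matching is the stated $\mu$; in particular $w_{x_i}^{+_2}$ is matched to $m_{x_i}^{+_2}$, whereas under her true list she strictly prefers $m_{x_i}^{+_1}$. The key observation is that $w_{x_i}^{+_2}$ is also the top choice of $m_{x_i}^{+_3}$, so $m_{x_i}^{+_3}$ can serve as a lever: if $w_{x_i}^{+_2}$ reports $m_{x_i}^{+_1} \succ m_{x_i}^{+_3} \succ m_{x_i}^{+_2} \succ \cdots$ (remaining men in any order), then during the run she is forced to reject $m_{x_i}^{+_2}$ in favour of $m_{x_i}^{+_3}$, which triggers the rejection chain $m_{x_i}^{+_2}\to w_{x_i}^{+_1}$ (who prefers $m_{x_i}^{+_2}$ and hence rejects $m_{x_i}^{+_1}$) $\to m_{x_i}^{+_1}\to w_{x_i}^{+_2}$; since $w_{x_i}^{+_2}$ now ranks $m_{x_i}^{+_1}$ first she keeps him, while $m_{x_i}^{+_3}$ is displaced to his second choice $w_{x_i}^{+_3}$.

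Carrying this out, I would verify: (i) the chain resolves with $w_{x_i}^{+_1}$ matched to $m_{x_i}^{+_2}$, $w_{x_i}^{+_2}$ to $m_{x_i}^{+_1}$, and $w_{x_i}^{+_3}$ to $m_{x_i}^{+_3}$ --- the last because $w_{x_i}^{+_3}$'s only other admirer $m_{x_i}^{-_1}$ is matched to his top choice $w_{x_i}^{-_1}$ and never proposes to her; (ii) $w_{x_i}^{+_2}$ cannot end up with a man she likes even more than $m_{x_i}^{+_1}$, since the only such man in her true list is $m_{x_i}^{-_3}$, who stays at $w_{x_i}^{-_3}$ and never proposes to her; (iii) the rest of the instance is untouched, because none of $m_{x_i}^{+_1}, m_{x_i}^{+_2}, m_{x_i}^{+_3}$ ever proposes below $w_{x_i}^{+_3}$ (in particular $m_{x_i}^{+_3}$ never reaches the clause women $w_{c_j}^l$), so the negative side of $x_i$ and every clause gadget evolve exactly as in the truthful run. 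Hence the induced matching assigns $m_{x_i}^{+_1}$ to $w_{x_i}^{+_2}$, which proves the lemma. (Equivalently, $(m_{x_i}^{+_1}, m_{x_i}^{+_2})$ is a rotation exposed at $\mu$ and this manipulation eliminates exactly that rotation, so the induced matching is moreover stable with respect to the true preferences --- which is what Theorem~\ref{theo:hardness_stable} needs.)

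The step requiring the most care is (iii): confirming that the rejection chain is self-contained and does not leak into the clause gadgets or the negative side. This is precisely what the gadget is engineered to do --- the second choices $w_{x_i}^{+_3}$ of $m_{x_i}^{+_3}$ and $w_{x_i}^{+_2}$ of $m_{x_i}^{+_1}$ act as sinks that absorb the displaced men before any further propagation --- so the verification reduces to reading off the relevant preference entries and checking that no additional displacements are ever generated.
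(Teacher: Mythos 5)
Your proof is correct and takes essentially the same route as the paper: the paper also exhibits a single reported list that promotes $m_{x_i}^{+_3}$ above $m_{x_i}^{+_2}$ (namely $m_{x_i}^{-_3} \succ m_{x_i}^{+_1} \succ m_{x_i}^{+_3} \succ m_{x_i}^{+_2} \succ \cdots$), so that the rejection of $m_{x_i}^{+_2}$ triggers the chain through $w_{x_i}^{+_1}$ that delivers $m_{x_i}^{+_1}$ to $w_{x_i}^{+_2}$. The only difference is that the paper keeps $m_{x_i}^{-_3}$ at the top of the reported list whereas you place him arbitrarily among the remaining men; this is immaterial here since $m_{x_i}^{-_3}$ never proposes to $w_{x_i}^{+_2}$ in this run, and your explicit trace of the Gale--Shapley execution simply spells out the verification the paper leaves implicit.
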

\begin{proof}
$w_{x_i}^{+_2}$ can manipulate her preference list to $P(w_{x_i}^{+_2}) = m_{x_i}^{-_3} \succ m_{x_i}^{+_1} \succ m_{x_i}^{+_3} \succ m_{x_i}^{+_2} \succ \cdots$.
\end{proof}

\begin{lemma}\label{lem:better2}
	For all $i \in [n]$,  $w_{x_i}^{+_2}$ can perform a single-agent manipulation to be matched with $m_{x_i}^{-_3}$.
\end{lemma}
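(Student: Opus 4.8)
The plan is to exhibit an explicit single-agent permutation manipulation for $w_{x_i}^{+_2}$ and trace the resulting execution of the (one proposal at a time) Gale-Shapley algorithm; since its output is independent of the order in which proposals are processed, describing one convenient execution suffices. Have $w_{x_i}^{+_2}$ report
\[
P'(w_{x_i}^{+_2}) = m_{x_i}^{-_3} \succ m_{x_i}^{+_3} \succ m_{x_i}^{+_1} \succ m_{x_i}^{+_2} \succ \cdots,
\]
the remaining men in any order, and let $\mu'$ be the induced matching; the goal is to show $\mu'(w_{x_i}^{+_2}) = m_{x_i}^{-_3}$. We first let $m_{x_i}^{+_3}$ propose to $w_{x_i}^{+_2}$ (his favourite), so she provisionally holds $m_{x_i}^{+_3}$. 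Note that, in contrast with the manipulation of Lemma~\ref{lem:better1}, the man she provisionally holds here is $m_{x_i}^{+_3}$ rather than $m_{x_i}^{+_1}$; this is precisely what keeps the rejection chain below from being short-circuited and lets $m_{x_i}^{-_3}$ travel all the way to her.

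Next I would drive the rejection chain inside the gadget of $x_i$. Man $m_{x_i}^{+_2}$ proposes to $w_{x_i}^{+_2}$ and is rejected, since she holds $m_{x_i}^{+_3}$ and $m_{x_i}^{+_3}\succ m_{x_i}^{+_2}$ in $P'(w_{x_i}^{+_2})$; he then proposes to $w_{x_i}^{+_1}$, who truthfully prefers $m_{x_i}^{+_2}$ to her partner $m_{x_i}^{+_1}$ and so rejects $m_{x_i}^{+_1}$. Now $m_{x_i}^{+_1}$ proposes to $w_{x_i}^{+_2}$ (his second choice) and is again rejected, because she still holds $m_{x_i}^{+_3}$, whom she ranks above $m_{x_i}^{+_1}$; hence $m_{x_i}^{+_1}$ proposes to $w_{x_i}^{-_3}$ (his third choice). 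At that moment $w_{x_i}^{-_3}$ holds $m_{x_i}^{-_3}$ — who, undisturbed so far, was rejected by $w_{x_i}^{-_2}$ and accepted by $w_{x_i}^{-_3}$ exactly as in the truthful run — and since she truthfully prefers $m_{x_i}^{+_1}$ to $m_{x_i}^{-_3}$ she now rejects $m_{x_i}^{-_3}$.

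The crux is to follow $m_{x_i}^{-_3}$ down the rest of his list. After $w_{x_i}^{-_2}$ and $w_{x_i}^{-_3}$ his list reads $w_{c_{k_1}}^l \succ \cdots \succ w_{c_{k_{K_i^-}}}^l \succ w_{x_i}^{+_2} \succ \cdots$, the $w_{c_{k_j}}^l$ being the clauses containing $-x_i$. When $m_{x_i}^{-_3}$ proposes to such a $w_{c_j}^l$, she prefers him to her current partner $m_{c_j}^l$ (every literal-man precedes $m_{c_j}^l$ in $P(w_{c_j}^l)$) and drops $m_{c_j}^l$; then $m_{c_j}^l$ proposes to $w_{c_j}^r$, who truthfully prefers $m_{c_j}^l$ to $m_{c_j}^r$ and drops $m_{c_j}^r$; then $m_{c_j}^r$ proposes to $w_{c_j}^l$, whom he ranks second and who ranks $m_{c_j}^r$ first, so she takes $m_{c_j}^r$ and rejects $m_{x_i}^{-_3}$. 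Hence every clause-woman on $m_{x_i}^{-_3}$'s list rejects him, and his next proposal is to $w_{x_i}^{+_2}$; since she ranks $m_{x_i}^{-_3}$ first in $P'(w_{x_i}^{+_2})$, she accepts him and keeps him for the rest of the run, so $\mu'(w_{x_i}^{+_2}) = m_{x_i}^{-_3}$.

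The main obstacle is the bookkeeping in the middle: one must check that $w_{x_i}^{+_2}$ rejects \emph{both} $m_{x_i}^{+_1}$ and $m_{x_i}^{+_2}$ (so the chain reaches $w_{x_i}^{-_3}$ via $w_{x_i}^{+_1}$), that her provisionally holding $m_{x_i}^{+_3}$ does no harm because the chain never routes through $m_{x_i}^{+_3}$, and that $m_{x_i}^{-_3}$ is genuinely rejected by each clause-woman rather than getting stuck — which is exactly the role of having $m_{c_j}^r$ at the top of $P(w_{c_j}^l)$, forcing each incident clause-gadget to ``collapse'' ($m_{c_j}^l$ ending at $w_{c_j}^r$, $m_{c_j}^r$ at $w_{c_j}^l$) as soon as $m_{c_j}^l$ is displaced. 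The only remaining point is that the disturbance stays confined to the gadget of $x_i$ together with its incident clause-gadgets — inside which no man except $m_{x_i}^{-_3}$ ever proposes past his second-ranked woman — so $w_{x_i}^{+_2}$ only ever receives proposals from $m_{x_i}^{+_1},m_{x_i}^{+_2},m_{x_i}^{+_3}$ and $m_{x_i}^{-_3}$, and nothing can dislodge $m_{x_i}^{-_3}$ once she has him; this containment is routine from the shape of the preference lists.
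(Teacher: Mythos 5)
Your proposal is correct and takes essentially the same approach as the paper: the paper's proof consists of exhibiting exactly the same reported list $m_{x_i}^{-_3} \succ m_{x_i}^{+_3} \succ m_{x_i}^{+_1} \succ m_{x_i}^{+_2} \succ \cdots$ and leaves the verification to the reader. Your trace of the rejection chain through $w_{x_i}^{+_1}$, $w_{x_i}^{-_3}$ and the incident clause gadgets, ending with $m_{x_i}^{-_3}$ proposing to $w_{x_i}^{+_2}$, accurately spells out that omitted verification.
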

\begin{proof}
$w_{x_i}^{+_2}$ can manipulate her preference list to $P(w_{x_i}^{+_2}) = m_{x_i}^{-_3} \succ m_{x_i}^{+_3} \succ m_{x_i}^{+_1} \succ m_{x_i}^{+_2} \succ \cdots$.
\end{proof}
By symmetry of construction, we have for each $1 \leq i \leq n$, woman $w_{x_i}^{-_2}$ can perform a single-agent manipulation to be matched with $m_{x_i}^{-_1}$ or $m_{x_i}^{+_3}$. 

\begin{lemma}\label{lem:combine_better}
	$w_{x_i}^{+_2}$ and $w_{x_i}^{-_2}$ cannot manipulate such that they are matched with $m_{x_i}^{-_3}$ and $m_{x_i}^{+_3}$ respectively at the same time, in any feasible permutation manipulation, while it is possible for them to manipulate to be matched with $m_{x_i}^{+_1}$ and $m_{x_i}^{-_1}$, $m_{x_i}^{-_3}$ and $m_{x_i}^{-_1}$, or, $m_{x_i}^{+_1}$ and $m_{x_i}^{+_3}$, respectively.
\end{lemma}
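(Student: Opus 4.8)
The plan is to prove the two halves by quite different means: the impossibility via a ``feasibility pins down the whole gadget'' argument combined with the suitor-graph reachability criterion, and the three possibilities by exhibiting explicit permutation reports that extend Lemmas~\ref{lem:better1} and~\ref{lem:better2}. For the impossibility, suppose toward a contradiction that some feasible permutation manipulation induces a matching $\mu'$ with $\mu'(w_{x_i}^{+_2})=m_{x_i}^{-_3}$ and $\mu'(w_{x_i}^{-_2})=m_{x_i}^{+_3}$. Since $\mu'$ is stable under the true profile, it satisfies $\mu'\succeq_W\mu$ (the printed $\mu$ is the W-pessimal stable matching) and, by Theorem~\ref{lem:samesingle}, leaves everyone matched. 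I would then chase these two invariants around the twelve agents of the $x_i$-gadget: $m_{x_i}^{-_3}$ being taken forces $\mu'(w_{x_i}^{-_3})=m_{x_i}^{+_1}$ (the only man she ranks above $m_{x_i}^{-_3}$), which forces $\mu'(w_{x_i}^{+_1})=m_{x_i}^{+_2}$; symmetrically $m_{x_i}^{+_3}$ being taken forces $\mu'(w_{x_i}^{+_3})=m_{x_i}^{-_1}$ and then $\mu'(w_{x_i}^{-_1})=m_{x_i}^{-_2}$. Hence $\mu'$ restricted to the gadget is completely determined: it is the matching giving all six gadget-women their first choice, with $m_{x_i}^{+_3}$ parked at $w_{x_i}^{-_2}$ and $m_{x_i}^{-_3}$ parked at $w_{x_i}^{+_2}$, both of which lie strictly below every clause-woman on their lists.

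The crux is to show this pinned-down $\mu'$ cannot be induced by any manipulation, via Theorem~\ref{CharOfProfile}: if it were induced, every vertex of $G(P(M),P(N),\mu')$ would be reachable from $s$, whereas I claim the twelve-vertex gadget block is a source-free set disjoint from $s$. Going through the rules of Definition~\ref{SuitorGraph}: an in-edge of a man is only the matching edge from his current partner, and for every gadget-man that partner is again a gadget-woman; an in-edge of a gadget-woman $w$ is either the matching edge from her gadget-partner or an edge $(m,w)$ with $m\in\delta(w)$, and $\delta(w)$ contains only gadget-men, since any outside man has $w$ buried in the unspecified ``$\cdots$'' tail of his list, below his $\mu'$-partner; finally each of the six gadget-women has nonempty $\delta$ (for instance $m_{x_i}^{+_1}\in\delta(w_{x_i}^{+_1})$, as $w_{x_i}^{+_1}$ is his top woman while $\mu'(m_{x_i}^{+_1})=w_{x_i}^{-_3}$ is only third, and likewise for the others), so the rule adding $(s,w)$ for $\delta(w)=\emptyset$ creates no edge into the block. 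The one boundary-crossing edge one must worry about comes from a clause-woman's favorite in her $\delta$-set, because $m_{x_i}^{+_3}$ and $m_{x_i}^{-_3}$ do lie in the $\delta$-sets of the clause-women containing $x_i$; but any such edge is oriented $(m_{x_i}^{+_3},w_{c_k}^l)$, i.e.\ it leaves the gadget. Therefore $s$ cannot reach, say, $m_{x_i}^{+_1}$, contradicting Theorem~\ref{CharOfProfile}.

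For the possibility half, for each of the three listed pairs I would give an explicit permutation profile (all agents outside $\{w_{x_i}^{+_2},w_{x_i}^{-_2}\}$ truthful) and verify by running deferred acceptance on the gadget that it realizes the pair and a matching stable under the true profile. For $(m_{x_i}^{+_1},m_{x_i}^{-_1})$, both $w_{x_i}^{+_2}$ and $w_{x_i}^{-_2}$ report their Lemma~\ref{lem:better1}-type lists simultaneously; tracing the proposals, $m_{x_i}^{+_3}$ settles at its second choice $w_{x_i}^{+_3}$ and $m_{x_i}^{-_3}$ at $w_{x_i}^{-_3}$, both above all clause-women, so nothing outside the gadget moves and the outcome is $\mu$ with the gadget relabelled, which is stable. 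For $(m_{x_i}^{-_3},m_{x_i}^{-_1})$, $w_{x_i}^{+_2}$ uses her Lemma~\ref{lem:better2}-type list (which, exactly as in that lemma, pushes $m_{x_i}^{-_3}$ past the $-x_i$-clause-women, each of whom gets switched onto $m_{c_k}^r$, so no blocking pair arises) while $w_{x_i}^{-_2}$ uses a mirror Lemma~\ref{lem:better1}-type list; the symmetric choice handles $(m_{x_i}^{+_1},m_{x_i}^{+_3})$.

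The main obstacle is the verification in the second paragraph that no edge enters the gadget block of the suitor graph: the delicate points are (a) controlling the unspecified ``$\cdots$'' tails so that no outside woman sits above a gadget-man's $\mu'$-partner, whence no outside man enters a gadget-woman's $\delta$-set, and (b) the orientation book-keeping for $m_{x_i}^{+_3}$ and $m_{x_i}^{-_3}$, the only men touching the gadget boundary, where one must check the induced edge points outward. Everything else reduces to mechanical propagation of the ``women weakly improve'' and ``everyone stays matched'' invariants together with routine deferred-acceptance traces on the twelve-agent gadget.
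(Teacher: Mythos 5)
Your proposal reaches the right conclusion and its positive half coincides with the paper's (both simply deploy the Lemma~\ref{lem:better1}/\ref{lem:better2}-type lists simultaneously and check the deferred-acceptance trace), but your impossibility argument takes a genuinely different route. The paper argues dynamically: it observes that matching $w_{x_i}^{+_2}$ with $m_{x_i}^{-_3}$ and $w_{x_i}^{-_2}$ with $m_{x_i}^{+_3}$ forces the elimination of rotation closed sets that make \emph{every} woman in $W_i$ receive at least two proposals, and then invokes Lemma~\ref{lem:propose_woman} (no man outside $M_i$ ever proposes to a woman in $W_i$ in a feasible manipulation) to conclude that the block $M_i\cup W_i$ cannot be connected to $s$ in the suitor graph. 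You instead argue statically: stability under the true lists together with $\mu'\succeq_W\mu$ and Theorem~\ref{lem:samesingle} pins down the entire gadget matching, and then you inspect the suitor graph $G(P(M),P(N),\mu')$ of that pinned matching and show the twelve gadget vertices admit no in-edge from $s$ or from outside, contradicting Theorem~\ref{CharOfProfile}. Your route avoids rotations altogether, which makes it more elementary and self-contained; the paper's route avoids having to pin down the matching explicitly and reuses Lemma~\ref{lem:propose_woman}, which it needs elsewhere anyway. Both ultimately rest on the same structural fact that the gadget is sealed off from~$s$.

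One step of yours needs to be repaired rather than merely flagged: the claim that for each $w\in W_i$ the set $\delta(w)$ contains only gadget men. You propose to handle this by ``controlling'' the unspecified tails, but the tails are arbitrary in the construction and ordering them cannot by itself rule out an outside man being matched into his tail below a $W_i$-woman. The correct justification is a stability argument: any outside man matched into his ``$\cdots$'' tail would form a blocking pair with a woman in his specified prefix (e.g.\ $w_{c_j}^l$ ranks $m_{c_j}^r$ first, $w_{x_j}^{-_2}$ ranks $m_{x_j}^{+_3}$ first, etc.), so in any matching stable w.r.t.\ the true lists every outside man is matched within his specified prefix, and hence does not prefer any $W_i$-woman to his partner. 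This is exactly the static content of the paper's Lemma~\ref{lem:propose_woman}, which you could cite (or reprove in this form) to close the gap; with that substitution your argument goes through.
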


Before proving Lemma \ref{lem:combine_better}, we first prove the following lemma,
\begin{lemma}\label{lem:propose_woman}
	If the induced matching of a permutation manipulation on $G(\phi)$ is stable with respect to true preference lists, then 
	\begin{enumerate}
		\item For all $i \in [n]$, $m_{x_i}^{s_3}$, he cannot make proposals to any woman ranked below $w_{x_i}^{-s_2}$ in his true preference list; Moreover, he cannot be matched with any $w_{c_j}^l$;
		\item For all $i \in [n]$, $m_{x_i}^{s_1}$ and $m_{x_i}^{s_1}$ with $s \in \{+, -\}$, he can only make proposals to woman $w_{x_i}^{s'_k}$ with $s' \in \{+, -\}$ and $k \in \{1,2,3\}$;
		\item For all $j \in [m]$, both $m_{c_j}^l$ and $m_{c_j}^r$, he can only make proposals to $w_{c_j}^l$ and $w_{c_j}^r$.
	\end{enumerate}
\end{lemma}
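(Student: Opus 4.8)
The plan is to exploit a single observation, applied gadget by gadget. Assume $\mu'$, the induced matching (the $M$-optimal matching of the reported profile), is stable with respect to the true preference lists; by Theorem~\ref{lem:samesingle} every agent is then matched in $\mu'$, since everyone is matched in the true-preference stable matching $\mu$. In the men-proposing Gale--Shapley run each man's final partner is the last woman he proposes to, and since men do not manipulate, the women a man $m$ proposes to form exactly the initial segment of his (true) list ending at $\mu'(m)$; hence it suffices to bound the rank of $\mu'(m)$ on $m$'s list. The recurring tool is: if a woman $w$ ranks a man $m$ first on her \emph{true} list and $m$ strictly prefers $w$ to $\mu'(m)$, then $(m,w)$ blocks $\mu'$ under the true preferences; contrapositively, $\mu'(m)$ is ranked weakly above $w$ on $m$'s list. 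The construction is rigged so that immediately after the last ``gadget-internal'' woman on each relevant man's list there sits a woman who ranks that man first, which pins the man inside his gadget.

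I would first handle the clause gadgets (item~3, and the ``moreover'' part of item~1). Fix $c_j$. Since $w_{c_j}^l$ ranks $m_{c_j}^r$ first, the tool with $w=w_{c_j}^l$ forces $\mu'(m_{c_j}^r)\in\{w_{c_j}^r,w_{c_j}^l\}$, his top two choices; likewise $w_{c_j}^r$ ranks $m_{c_j}^l$ first, so $\mu'(m_{c_j}^l)\in\{w_{c_j}^l,w_{c_j}^r\}$. Two distinct men each matched inside the two-element set $\{w_{c_j}^l,w_{c_j}^r\}$ must occupy it entirely, so $\{\mu'(m_{c_j}^l),\mu'(m_{c_j}^r)\}=\{w_{c_j}^l,w_{c_j}^r\}$. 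In particular $m_{c_j}^l$ and $m_{c_j}^r$ propose only to $w_{c_j}^l$ and $w_{c_j}^r$, which is item~3, and $w_{c_j}^l$ is matched to a clause-man, hence to no literal man $m_{x_i}^{s_3}$.

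Next I would handle each variable gadget. Fix $i$ and $s\in\{+,-\}$. Since $w_{x_i}^{-s_3}$ ranks $m_{x_i}^{s_1}$ first, the tool gives $\mu'(m_{x_i}^{s_1})\in\{w_{x_i}^{s_1},w_{x_i}^{s_2},w_{x_i}^{-s_3}\}$, the initial segment of his list through $w_{x_i}^{-s_3}$; since $w_{x_i}^{s_1}$ ranks $m_{x_i}^{s_2}$ first, $\mu'(m_{x_i}^{s_2})\in\{w_{x_i}^{s_2},w_{x_i}^{s_1}\}$. In both cases the man proposes only to women $w_{x_i}^{s'_k}$ of his own variable gadget, which is item~2. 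Finally, for $m_{x_i}^{s_3}$, whose true list begins $w_{x_i}^{s_2}\succ w_{x_i}^{s_3}$, then lists the women $w_c^l$ for the clauses $c$ containing the literal $s\,x_i$, then $w_{x_i}^{-s_2}$: since $w_{x_i}^{-s_2}$ ranks $m_{x_i}^{s_3}$ first, the tool shows $\mu'(m_{x_i}^{s_3})$ is ranked weakly above $w_{x_i}^{-s_2}$, so he never proposes below $w_{x_i}^{-s_2}$; and by the clause-gadget step he is matched to none of the intermediate women $w_c^l$, so in fact $\mu'(m_{x_i}^{s_3})\in\{w_{x_i}^{s_2},w_{x_i}^{s_3},w_{x_i}^{-s_2}\}$. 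This yields item~1.

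The only delicate point is in the last step: a literal man $m_{x_i}^{s_3}$ may genuinely \emph{propose} into several clause gadgets (to various $w_c^l$) before being accepted back inside his own variable gadget, so one must keep ``proposes to'' and ``is matched to'' separate and invoke the clause-gadget step to rule out the latter. One should also make explicit the appeal to Theorem~\ref{lem:samesingle} guaranteeing that $\mu'(m)$ exists for every man, since the description of a man's proposals as the initial segment of his list ending at $\mu'(m)$ is otherwise ill-posed. Beyond that, the lemma is a routine, if somewhat repetitive, application of the blocking-pair tool, one gadget at a time.
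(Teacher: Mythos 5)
Your proof is correct and follows essentially the same route as the paper's: both rest on the observation that each relevant man has, right after the gadget-internal prefix of his true list, a woman who truly ranks him first, so stability with respect to the true preference lists caps how far down his list he can propose. The only minor divergence is the ``moreover'' clause of item 1: the paper rules out a literal man ending up matched to $w_{c_j}^l$ by simulating the proposal chain $m_{x_i}^{s_3}\to w_{c_j}^l$, then $m_{c_j}^l\to w_{c_j}^r$, then $m_{c_j}^r\to w_{c_j}^l$ inside the clause gadget, whereas you obtain it as a corollary of item 3 (the two clause men must jointly occupy $\{w_{c_j}^l, w_{c_j}^r\}$), an equally valid and arguably cleaner matching-level argument that also correctly keeps ``proposes to'' and ``is matched to'' separate.
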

\begin{proof}
Let
\[
W_i = \{w_{x_i}^{+_1}, w_{x_i}^{+_2}, w_{x_i}^{+_3}, w_{x_i}^{-_1}, w_{x_i}^{-_2}, w_{x_i}^{-_3}\}
\] and 
\[
M_i = \{m_{x_i}^{+_1}, m_{x_i}^{+_2}, m_{x_i}^{+_3}, m_{x_i}^{-_1}, m_{x_i}^{-_2}, m_{x_i}^{-_3}\}.
\]

First, for $m_{x_i}^{s_3}$ with $j \neq i$, $s \in \{+, -\}$, since $w_{x_i}^{-s_2}$ puts $m_{x_j}^{s_3}$ as the favorite candidate, if $m_{x_i}^{s_3}$ proposes to any woman ranked below $w_{x_i}^{-s_2}$ in his true preference list, the induced matching is unstable with respect to true preference lists. Moreover, if $m_{x_i}^{s_3}$ proposes to some $w_{c_j}^l$, then $w_{c_j}^l$ accepts $m_{x_{j_1}}^{s_3}$ and rejects $m_{c_j}^l$, next, $w_{c_j}^r$ accepts $m_{c_j}^l$ and rejects $m_{c_j}^r$, and finally, $w_{c_j}^l$ accepts $m_{c_j}^r$ and rejects $m_{x_{j_1}}^{s_3}$.

Second, except $m_{x_i}^{+_3}$ and $m_{x_i}^{-_3}$, all men in $M_i$ only propose to women in $W_i$ before they propose to the woman ranking him as the highest. Therefore, with similar arguments, we conclude that $m_{x_i}^{s_1}$ and $m_{x_i}^{s_1}$ with $s \in \{+, -\}$, he can only make proposals to woman $w_{x_i}^{s'_k}$ with $s' \in \{+, -\}$ and $k \in \{1,2,3\}$

Third, since $w_{c_j}^l$ ranks $m_{c_j}^r$ as favorite and $w_{c_j}^r$ ranks $m_{c_j}^l$ as favorite, according to the preference lists of $m_{c_j}^l$ and $m_{c_j}^r$, we can conclude they can only make proposals to $w_{c_j}^l$ and $w_{c_j}^r$;
\end{proof}

\begin{proof}[Proof of Lemma \ref{lem:combine_better}]
To achieve other combinations, $w_{x_i}^{+_2}$ and $w_{x_i}^{-_2}$ can manipulate their preference lists by following the manipulations in Lemma \ref{lem:better1} and Lemma \ref{lem:better2} according to their target partners. 

We prove the remaining case by contradiction. Suppose $w_{x_i}^{+_2}$ and $w_{x_i}^{-_2}$ can manipulate to a matching $\mu$ such that they are matched with $m_{x_i}^{-_3}$ and $m_{x_i}^{+_3}$. Then, since $w_{x_i}^{+_2}$ is matched with $m_{x_i}^{-_3}$, the closed set of rotations
\[
\left(\{m_{x_i}^{+_1}, m_{x_i}^{-_3}\}, \{w_{x_i}^{+_2}, w_{x_i}^{-_3}\}, \{w_{x_i}^{-_3}, w_{x_i}^{+_2}\}\right)
\] must be eliminated, which contains rotation 
\[
\left(\{m_{x_i}^{+_2}, m_{x_i}^{+_1}\}, \{w_{x_i}^{+_2}, w_{x_i}^{+_1}\}, \{w_{x_i}^{+_1}, w_{x_i}^{+_2}\}\right).
\] Similarly, since $w_{x_i}^{-_2}$ is matched with $m_{x_i}^{+_3}$, the closed set of rotations
\[
\left(\{m_{x_i}^{-_1}, m_{x_i}^{+_3}\}, \{w_{x_i}^{-_2}, w_{x_i}^{+_3}\}, \{w_{x_i}^{+_3}, w_{x_i}^{-_2}\}\right)
\] must be eliminated, which contains rotation 
\[
\left(\{m_{x_i}^{-_2}, m_{x_i}^{-_1}\}, \{w_{x_i}^{-_2}, w_{x_i}^{-_1}\}, \{w_{x_i}^{-_1}, w_{x_i}^{-_2}\}\right).
\] Therefore, all of $W_i = \{w_{x_i}^{+_1}, w_{x_i}^{+_2}, w_{x_i}^{+_3}, w_{x_i}^{-_1}, w_{x_i}^{-_2}, w_{x_i}^{-_3}\}$ have received more than one proposal. Moreover, according to Lemma \ref{lem:propose_woman}, they are matched with one of $M_i = \{m_{x_i}^{+_1}, m_{x_i}^{+_2}, m_{x_i}^{+_3}, m_{x_i}^{-_1}, m_{x_i}^{-_2}, m_{x_i}^{-_3}\}$. 

Henceforth, by Lemma \ref{CharOfProfile}, $\mu \in S_L$ only if there is some man $m \notin M_i$ having made a proposal to some $w \in W_i$ in order to create connections from $s$. However, according to Lemma \ref{lem:propose_woman}, if $\mu \in S_L$, no other man $m \notin M_i$ can make proposals to any $w \in W_i$.     
\end{proof}

%We point out that in this lemma, our construction contains the example in Table \ref{counterexample} as a gadget. 
According to this lemma, given an outcome of a manipulation, we construct the assignment as follows. $+x_i$ is assigned {\em true} if and only if $w_{x_i}^{+_2}$ is matched with $m_{x_i}^{-_3}$; otherwise, $-x_i$ is assigned {\em true}. Next lemma guarantees that such assignment is a satisfiable assignment for $\phi$.

\begin{lemma}\label{lem:betterif}
	For all $j \in [m]$, suppose $c_j = (s^1 ~ x_{j_1}) \lor (s^2 ~ x_{j_2}) \lor (s^3 ~ x_{j_3})$. Then, after manipulation, woman $w_{c_j}^r$ can be better off if and only if at least one $w_{x_{j_k}}^{s_2^k}$ is matched with $m_{x_{j_k}}^{-s_3^k}$ for $k \in \{1,2,3\}$.
\end{lemma}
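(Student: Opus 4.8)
The plan is to translate ``$w_{c_j}^r$ can be better off'' into a statement about rotations and then follow the cascade of proposals inside the $j$-th clause gadget. Recall that $\mu(w_{c_j}^r)=m_{c_j}^r$ and that $P(w_{c_j}^r)=m_{c_j}^l\succ m_{c_j}^r\succ\cdots$, so $m_{c_j}^l$ is the only man $w_{c_j}^r$ ranks above her true partner; hence ``$w_{c_j}^r$ can be better off'' is equivalent to: some feasible permutation manipulation induces a matching $\mu'$ with $\mu'(w_{c_j}^r)=m_{c_j}^l$. By Theorem~\ref{Correspondence} this means that the rotation $R_j$ that moves $w_{c_j}^r$ from $m_{c_j}^r$ to $m_{c_j}^l$ (equivalently $m_{c_j}^l$ from $w_{c_j}^l$ to $w_{c_j}^r$ and $m_{c_j}^r$ from $w_{c_j}^r$ to $w_{c_j}^l$) lies in some closed set that can be eliminated. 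Throughout I would use Lemma~\ref{lem:propose_woman} (parts~1 and~3), which guarantees that in any feasible manipulation the only men that ever propose to $w_{c_j}^l$ or $w_{c_j}^r$ are $m_{c_j}^l,m_{c_j}^r$ and the three literal-men $m_{x_{j_1}}^{s_3^1},m_{x_{j_2}}^{s_3^2},m_{x_{j_3}}^{s_3^3}$; this is exactly what confines the whole argument to the clause gadget and keeps the case analysis finite.

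The first step is to show that $R_j$ is never eliminable by itself. Using the suitor-graph bookkeeping from Section~\ref{subsec:pareto}, after eliminating only $R_j$ the four vertices $m_{c_j}^l,m_{c_j}^r,w_{c_j}^l,w_{c_j}^r$ form a directed $4$-cycle; moreover $\delta(w_{c_j}^l)$ and $\delta(w_{c_j}^r)$ are both nonempty (they contain $m_{c_j}^l$ and $m_{c_j}^r$ respectively), so there is no edge from $s$ into the cycle, and by Lemma~\ref{lem:propose_woman} no man from outside can reach it either. This is precisely the gadget of Table~\ref{counterexample}, and by Lemma~\ref{Eliminatable} the set $CloSet(R_j)$ cannot be eliminated (the argument mirrors the one in Lemma~\ref{EliContainLiar}). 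Consequently $R_j$ can be eliminated only inside a larger closed set in which some preceding rotation has pushed a literal-man $m_{x_{j_k}}^{s_3^k}$ strictly below $w_{c_j}^l$ on its own list --- the unique way to create an incoming edge into the $4$-cycle (again by Lemma~\ref{lem:propose_woman}, no other man can play this role). This reduces the lemma to the claim: in a feasible manipulation $m_{x_{j_k}}^{s_3^k}$ is pushed below $w_{c_j}^l$ if and only if the variable gadget of $x_{j_k}$ settles in the named configuration $w_{x_{j_k}}^{s_2^k}$ matched with $m_{x_{j_k}}^{-s_3^k}$.

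For that equivalence I would trace, separately for $s^k=+$ and $s^k=-$, the chain of rejections inside variable $x_{j_k}$'s gadget. For the ``only if'' direction: $m_{x_{j_k}}^{s_3^k}$ being below $w_{c_j}^l$ forces it to have been rejected by its first two women; chasing this backwards through $w_{x_{j_k}}^{s_3^k}$ and the opposite-side men and using feasibility repeatedly --- whenever a woman of the gadget would end up with a man ranked (truly) below a man she rejected, a blocking pair with respect to the true preferences appears, contradicting feasibility --- one is forced into exactly the named configuration, which is moreover consistent by Lemma~\ref{lem:combine_better}. For the ``if'' direction I would start from the named configuration (achievable by Lemma~\ref{lem:combine_better}, e.g.\ via the single-agent moves of Lemmas~\ref{lem:better1}--\ref{lem:better2}) and describe the induced run of the Gale--Shapley algorithm, equivalently the sequence of rotation eliminations via Lemma~\ref{SubProblem}: the cascade that produces that configuration necessarily drives $m_{x_{j_k}}^{s_3^k}$ through all the clause-women on its list, in particular through $w_{c_j}^l$; when it reaches $w_{c_j}^l$ she drops $m_{c_j}^l$, and then $m_{c_j}^l$ must end up with $w_{c_j}^r$, since otherwise $(m_{c_j}^l,w_{c_j}^r)$ blocks the outcome under the true preferences. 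Hence $R_j$ is eliminated and $w_{c_j}^r$ is better off; the literal-man merely passes through $w_{c_j}^l$ (she strictly prefers $m_{c_j}^r$, so does not retain him) and settles with a woman he strictly prefers to every clause-woman he visited, so the whole matching stays feasible. (By Theorem~\ref{TinyChange} only a tiny change to $w_{c_j}^r$'s list --- promoting $m_{c_j}^l$ --- is needed on her side.)

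The main obstacle I expect is the ``only if'' half of this last equivalence: one has to rule out every alternative way for $R_j$ to become exposed, i.e.\ every alternative fixed point of the cascading proposals in the variable gadget, and argue that stability with respect to the true preferences forces the single named configuration. This is exactly where the Table~\ref{counterexample} phenomenon is being exploited, and where Lemma~\ref{lem:propose_woman} (to localize) and Lemma~\ref{lem:combine_better} (to pin down the only feasible gadget states) carry the weight; once those are in hand the backward rejection-chasing is mechanical, and combining the three clauses' equivalences with Theorem~\ref{CloSetEliminatable} yields the stated biconditional.
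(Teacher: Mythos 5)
Your overall route coincides with the paper's. The ``if'' half is the same three-step cascade inside the clause gadget (literal man displaces $m_{c_j}^l$, who moves to $w_{c_j}^r$, whose $m_{c_j}^r$ then reclaims $w_{c_j}^l$), and the ``only if'' half rests on the same two pillars the paper uses: Lemma~\ref{lem:propose_woman} to localize, plus the observation that if $w_{c_j}^r$ were matched with $m_{c_j}^l$ without any literal man entering the gadget, the four clause vertices would form a strongly connected component with no edge from $s$ and no incoming edge at all, contradicting membership in $S_A$. The rotation/closed-set wrapper you add ($CloSet(R_j)$, Theorem~\ref{Correspondence}, Lemma~\ref{Eliminatable}, the analogy with Lemma~\ref{EliContainLiar}) is extra machinery: the paper argues directly at the level of the proposal sequence and the suitor graph, which is lighter but does the same work.

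The genuine gap is the step you yourself flag and defer: the equivalence between ``a literal man reaches $w_{c_j}^l$'' and the named variable-gadget configuration. First, the paper closes this with a short \emph{forward} argument --- if that man ever proposes to $w_{c_j}^l$, then after $m_{c_j}^r$ bounces him he continues down his list and next reaches the $2$-subscript woman on the far side of his variable gadget, who truthfully ranks him first, so feasibility pins the gadget to the stated configuration; the contrapositive is exactly the ``only if'' needed. Your plan to instead rule out ``every alternative fixed point'' by backward rejection-chasing is heavier and is never carried out, so the crux of the lemma remains an assertion. Second, your two halves name different men: your localization (read off $P(w_{c_j}^l)$) makes $m_{x_{j_k}}^{s_3^k}$ the only possible entrant into the clause gadget, while the configuration in the statement, and the cascade in the paper's proof, is driven by $m_{x_{j_k}}^{-s_3^k}$, traversing $w_{x_{j_k}}^{-s_2^k}\succ w_{x_{j_k}}^{-s_3^k}\succ\cdots\succ w_{c_j}^l\succ\cdots\succ w_{x_{j_k}}^{s_2^k}$; under your identification the named configuration leaves $m_{x_{j_k}}^{s_3^k}$ matched inside his own variable gadget and never sends him toward $w_{c_j}^l$, so the ``mechanical'' chase would not land on the claimed biconditional until you reconcile the superscript conventions and fix one consistent choice of which literal man carries the proposal. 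Until both points are settled, the pivotal equivalence --- which is the actual content of the lemma --- is missing from your argument.
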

\begin{proof}
{\bfseries The ``if'' direction}: Without loss of generality, suppose $w_{x_{j_1}}^{s_2}$ with $s = s^1$ is matched with $m_{x_{j_1}}^{-s_3}$ and thus, $m_{x_{j_1}}^{-s_3}$ has made proposal to $w_{x_{j_1}}^{-s_2}, w_{x_{j_1}}^{-s_3} \succ \cdots \succ w_{c_j}^l \succ \cdots \succ w_{x_{j_1}}^{s_2} \succ \cdots$. Thus, $w_{c_j}^l$ accepts $m_{x_{j_1}}^{-s_3}$ and rejects $m_{c_j}^l$, next, $w_{c_j}^r$ accepts $m_{c_j}^l$ and rejects $m_{c_j}^r$, and finally, $w_{c_j}^l$ accepts $m_{c_j}^r$ and rejects $m_{x_{j_1}}^{-s_3}$. Therefore, $w_{c_j}^r$ is better off. Moreover, if more than one $w_{x_{j_k}}^{s_2^k}$ is matched with $m_{x_{j_k}}^{-s_3^k}$, it does not change the matching of $w_{c_j}^r$ since she is already matched with her favorite one.

{\bfseries The ``only if'' direction}: If no $w_{x_{j_k}}^{s_2^k}$ is matched with $m_{x_{j_k}}^{-s_3^k}$, notice that no $m_{x_{j_k}}^{-s_3^k}$ makes proposal to $w_{c_j}^l$ since from argument in ``{\em if direction}'', we can see that if $m_{x_{j_k}}^{-s_3^k}$ makes proposal to $w_{c_j}^l$, $w_{x_{j_k}}^{s_2^k}$ is matched with $m_{x_{j_k}}^{-s_3^k}$. Therefore, if $w_{c_j}^r$ is better off, then $w_{c_j}^r$ is matched with $m_{c_j}^l$ and $w_{c_j}^l$ is matched with $m_{c_j}^r$, and notice that, $w_{c_j}^l, w_{c_j}^r$ have received more than one proposals. Henceforth, the matching after manipulation is in $S_L$ only if there is some man outside $m_{c_j}^l, m_{c_j}^r$ having made proposal to one of $w_{c_j}^l, w_{c_j}^r$ in order to create an edge pointing to the strongly connected component. However, according to Lemma \ref{lem:propose_woman}, we can conclude that no man outside $m_{c_j}^l, m_{c_j}^r$ having made proposal to one of $w_{c_j}^l, w_{c_j}^r$.
\end{proof}

With Lemma \ref{lem:better1}, \ref{lem:better2}, \ref{lem:combine_better} and \ref{lem:betterif}, we are ready to complete our reduction by showing $\phi$ is satisfiable if and only if $G(\phi)$ has a solution.

\begin{lemma}\label{lem:hard_onlyif}
	$\phi$ is satisfiable only if $G(\phi)$ has a solution.
\end{lemma}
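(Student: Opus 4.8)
The plan is to turn a satisfying assignment $\alpha$ of $\phi$ into an explicit solution of $G(\phi)$. For each variable $x_i$, let the two manipulators $w_{x_i}^{+_2}$ and $w_{x_i}^{-_2}$ carry out the joint single-gadget manipulation supplied by Lemma~\ref{lem:combine_better} that matches $\alpha(x_i)$: if $\alpha(x_i)$ is true, aim for the partner pair $\big(m_{x_i}^{-_3},\,m_{x_i}^{-_1}\big)$ (so that, by the assignment rule of the construction, $+x_i$ is declared ``true''), and if $\alpha(x_i)$ is false, aim for $\big(m_{x_i}^{+_1},\,m_{x_i}^{+_3}\big)$ (so that the symmetric condition $w_{x_i}^{-_2}\!\leftrightarrow m_{x_i}^{+_3}$ holds and $-x_i$ is ``true''); the underlying permutations are the ones exhibited in Lemma~\ref{lem:better1}, Lemma~\ref{lem:better2} and their mirror images. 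Every clause manipulator $w_{c_j}^r$ reports her true list unchanged. Call the resulting reported profile $P'$ and its induced matching $\mu'$; I claim $\mu'\in S_A$ and $\mu'(l)\succ_l\mu(l)$ for every $l\in L$, which is precisely what a solution of $G(\phi)$ demands.

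The first step is to pin down $\mu'$ and record the strict improvements. The structural fact that makes this manageable is that each variable gadget is nearly self-contained: the only men who can propose outside their own gadget are the ``type-$3$'' men $m_{x_i}^{\pm_3}$, which can only reach clause women $w_c^l$ whose clause $c$ contains the corresponding literal, and the clause men $m_{c_j}^l,m_{c_j}^r$ never leave clause $c_j$; moreover every reported list is a permutation of the truthful one. Hence the Gale--Shapley run on $P'$ decomposes into the local resolutions inside the variable gadgets (which by Lemma~\ref{lem:combine_better} realise the intended partners and are feasible in isolation) and the short rejection chains at the clauses, whose net effect, by Lemma~\ref{lem:betterif}, is to replace $m_{c_j}^r$ by $m_{c_j}^l$ at $w_{c_j}^r$ exactly when some literal of $c_j$ is satisfied. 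Since $\alpha$ satisfies every clause $c_j$, at least one literal $s^k x_{j_k}$ of $c_j$ is true, which by the assignment rule means $w_{x_{j_k}}^{s_2^k}$ is matched in $\mu'$ with $m_{x_{j_k}}^{-s_3^k}$, so Lemma~\ref{lem:betterif} gives $\mu'(w_{c_j}^r)=m_{c_j}^l\succ_{w_{c_j}^r}m_{c_j}^r=\mu(w_{c_j}^r)$. Likewise Lemma~\ref{lem:combine_better} gives that each $w_{x_i}^{+_2}$ (resp.\ $w_{x_i}^{-_2}$) is matched above her true partner, since both $m_{x_i}^{-_3}$ and $m_{x_i}^{+_1}$ rank above $m_{x_i}^{+_2}=\mu(w_{x_i}^{+_2})$ in her true list, and symmetrically on the negative side. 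Thus every manipulator is strictly better off in $\mu'$ than in $\mu$.

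It remains to verify feasibility, i.e.\ $\mu'\in S_A$: there is no blocking pair under the true profile $P$. Individual rationality is immediate because $\mu'$ matches only true-list pairs. Each combo of Lemma~\ref{lem:combine_better} is realised by a feasible manipulation of $(w_{x_i}^{+_2},w_{x_i}^{-_2})$ in isolation, distinct gadgets share no agents, and the clause chains of Lemma~\ref{lem:betterif} are likewise feasible in isolation, so the only way $\mu'$ could fail to be stable under $P$ is through a blocking pair spanning two gadgets, which by the list structure must consist of a type-$3$ man $m_{x_i}^{s_3}$ and a clause woman $w_c^l$ with $sx_i\in c$. For such a pair one checks, literal by literal, that $w_c^l$ strictly prefers $m_{x_i}^{s_3}$ to $\mu'(w_c^l)$ only in the scenario where the $x_i$-gadget has already driven $m_{x_i}^{s_3}$ down past $w_c^l$, in which case $m_{x_i}^{s_3}$ weakly prefers $\mu'(m_{x_i}^{s_3})$ to $w_c^l$ and the pair does not block; conversely when $w_c^l$ retains a partner it ranks above $m_{x_i}^{s_3}$ there is nothing to check. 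Hence $\mu'\in S_A$, $P'$ is a valid solution of $G(\phi)$, and $\phi$ satisfiable implies $G(\phi)$ has a solution. The main obstacle is exactly this cross-gadget bookkeeping: a fired clause chain bounces the triggering type-$3$ man back onto his list, so one must make sure these ricochets stay confined to the clause blocks of the same literal and that the order in which Gale--Shapley processes proposals is immaterial (confluence); I expect both to reduce, after the decomposition above, to the finite per-gadget and per-clause case analyses already isolated in Lemmas~\ref{lem:propose_woman}, \ref{lem:combine_better} and \ref{lem:betterif}.
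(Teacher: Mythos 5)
Your proposal is correct and follows essentially the same route as the paper: it converts the satisfying assignment into exactly the same per-variable target pairs ($(m_{x_i}^{-_3},m_{x_i}^{-_1})$ when $x_i$ is true, $(m_{x_i}^{+_1},m_{x_i}^{+_3})$ when false), then cites Lemma~\ref{lem:combine_better} for feasibility and Lemma~\ref{lem:betterif} for the strict improvement of each $w_{c_j}^r$. The extra cross-gadget blocking-pair discussion you sketch is more detail than the paper records (the paper relies on Lemma~\ref{lem:propose_woman} and Lemma~\ref{lem:combine_better} for this), but it is not a different approach.
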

\begin{proof}
Suppose $(l'_1, \dots, l'_n)$ is a satisfiable assignment. For all $i \in [n]$, 
\begin{enumerate}
	\item if $l'_i = +x_i$: $w_{x_i}^{+_2}$ manipulates to $m_{x_i}^{-_3}$ and $w_{x_i}^{-_2}$ manipulates to $m_{x_i}^{-_1}$;
	\item if $l'_i = -x_i$: $w_{x_i}^{+_2}$ manipulates to $m_{x_i}^{+_1}$ and $w_{x_i}^{-_2}$ manipulates to $m_{x_i}^{+_3}$.
\end{enumerate}
According to Lemma \ref{lem:combine_better}, the matching induced by this manipulation is in $S_L$. Moreover, since $(l'_1, \dots, l'_n)$ is a satisfiable assignment, from Lemma \ref{lem:betterif}, for all $j \in [m]$, $w_{c_j}^r$ is better off.
\end{proof}

\begin{lemma}\label{lem:hard_if}
	$\phi$ is satisfiable if $G(\phi)$ has a solution.    
\end{lemma}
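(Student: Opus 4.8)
The plan is to take an arbitrary solution strategy profile for $G(\phi)$ (inducing a matching $\mu'$ that is stable with respect to the true preferences and strictly better off for every manipulator), read off, for each variable gadget, which of two ``good'' partners the gadget women receive, define a truth assignment from this, and then use the clause gadgets to certify that every clause is satisfied.

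First I would pin down what ``strictly better off'' forces. The manipulator $w_{x_i}^{+_2}$ has true partner $m_{x_i}^{+_2}$ and list $P(w_{x_i}^{+_2}) = m_{x_i}^{-_3} \succ m_{x_i}^{+_1} \succ m_{x_i}^{+_2} \succ \cdots$, so being strictly better off means $\mu'(w_{x_i}^{+_2}) \in \{m_{x_i}^{-_3}, m_{x_i}^{+_1}\}$; by Lemma~\ref{lem:propose_woman} she can only be matched inside $M_i$, so exactly one of these two cases occurs. Symmetrically $\mu'(w_{x_i}^{-_2}) \in \{m_{x_i}^{+_3}, m_{x_i}^{-_1}\}$, and $w_{c_j}^r$ (true partner $m_{c_j}^r$, list $m_{c_j}^l \succ m_{c_j}^r \succ \cdots$) must satisfy $\mu'(w_{c_j}^r) = m_{c_j}^l$. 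I then define the assignment: $x_i$ is \emph{true} iff $\mu'(w_{x_i}^{+_2}) = m_{x_i}^{-_3}$, and \emph{false} (i.e. $-x_i$ true) otherwise; by the previous sentence this is a well-defined total assignment.

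Next, fix a clause $c_j = (s^1\,x_{j_1}) \lor (s^2\,x_{j_2}) \lor (s^3\,x_{j_3})$. Since $w_{c_j}^r$ is strictly better off, Lemma~\ref{lem:betterif} gives some $k \in \{1,2,3\}$ with $\mu'(w_{x_{j_k}}^{s_2^k}) = m_{x_{j_k}}^{-s_3^k}$. I split on the sign $s^k$. If $s^k = +$, this reads $\mu'(w_{x_{j_k}}^{+_2}) = m_{x_{j_k}}^{-_3}$, so by definition $x_{j_k}$ is true and the literal $+x_{j_k}$ of $c_j$ is satisfied. If $s^k = -$, it reads $\mu'(w_{x_{j_k}}^{-_2}) = m_{x_{j_k}}^{+_3}$; by Lemma~\ref{lem:combine_better} the configuration in which $w_{x_{j_k}}^{+_2}$ is matched with $m_{x_{j_k}}^{-_3}$ \emph{and} $w_{x_{j_k}}^{-_2}$ is matched with $m_{x_{j_k}}^{+_3}$ cannot occur in any feasible manipulation, so $\mu'(w_{x_{j_k}}^{+_2}) \neq m_{x_{j_k}}^{-_3}$, hence $\mu'(w_{x_{j_k}}^{+_2}) = m_{x_{j_k}}^{+_1}$, so by definition $x_{j_k}$ is false and the literal $-x_{j_k}$ of $c_j$ is satisfied. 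In either case $c_j$ is satisfied, and since $j$ was arbitrary the assignment satisfies $\phi$.

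The only genuine obstacle is the sign bookkeeping: one must carefully track how the superscripts $+_2, -_2, +_3, -_3$ and the ``opposite-sign'' map interact with the literal signs $s^k$, and invoke Lemma~\ref{lem:combine_better} in exactly the form that rules out the conflicting configuration in the negative-literal case; once that translation dictionary is fixed, the argument is routine. Together with Lemma~\ref{lem:hard_onlyif} this establishes Theorem~\ref{theo:hardness_stable}, which in turn yields Theorem~\ref{theo:hardness_better}.
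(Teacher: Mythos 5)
Your proof is correct and follows essentially the same route as the paper: define the assignment by whether $w_{x_i}^{+_2}$ is matched with $m_{x_i}^{-_3}$, use Lemma~\ref{lem:betterif} to extract a witnessing index $k$ for each clause, and use Lemma~\ref{lem:combine_better} to rule out the conflicting configuration in the negative-literal case. The only difference is that you spell out the sign case analysis on $s^k$ explicitly, which the paper's proof leaves implicit.
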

\begin{proof}
From Lemma \ref{lem:combine_better}, for each $1 \leq i \leq n$, $w_{x_i}^{+_2}$ and $w_{x_i}^{-_2}$ cannot manipulate to be matched with $m_{x_i}^{-_3}$ and $m_{x_i}^{+_3}$ respectively. Therefore, we create the assignment as follows:
\begin{enumerate}
	\item $+x_i$ is assigned {\em true} if and only if $w_{x_i}^{+_2}$ is matched with $m_{x_i}^{-_3}$; 
	\item otherwise, $-x_i$ is assigned {\em true}.
\end{enumerate}
Moreover, from Lemma \ref{lem:betterif}, since for all $1 \leq j \leq m$ with $c_j = (s^1 ~ x_{j_1}) \lor (s^2 ~ x_{j_2}) \lor (s^3 ~ x_{j_3})$, $w_{c_j}^r$ is better off, at least one $w_{x_{j_k}}^{s_2^k}$ is matched with $m_{x_{j_k}}^{-s_3^k}$ for $k \in \{1,2,3\}$. Thus, the assignment we create must be a satisfiable assignment for $\phi$.
\end{proof}

\begin{lemma}\label{lem:NPhard_unStable}
	In our construction, if all manipulators are better off in a matching, the matching must be stable.
\end{lemma}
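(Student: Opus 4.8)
Recall that in this reduction the true stable matching is $\mu$ (each agent matched to its "twin"), the manipulators are the women $w_{x_i}^{+_2}$, $w_{x_i}^{-_2}$ and $w_{c_j}^r$, and "better off" means strictly better off than in $\mu$. The plan is to show directly that the induced matching $\mu'$ of any permutation manipulation in which every manipulator is strictly better off has no blocking pair under the true profile $P$; individual rationality is automatic because in permutation manipulations all reported lists are complete. First I would use that $\mu'$ is the M-optimal stable matching of the reported profile $P'=(P(M),P(N),P'(L))$, hence has no $P'$-blocking pair. Since every man and every non-manipulator reports truthfully, a pair $(m,w)$ that blocks $\mu'$ under $P$ but not under $P'$ must satisfy $w\in L$, $m\succ_w^{P}\mu'(w)$, and $w$ demoted $m$ below $\mu'(w)$ in $P'(w)$. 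So it suffices to rule out a $P$-blocking pair of the form $(m,w)$ for each manipulator $w$.

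Second I would dispatch the easy manipulators. Each $w_{c_j}^r$ truly ranks $m_{c_j}^l$ first and her $\mu$-partner $m_{c_j}^r$ second, so being strictly better off forces $\mu'(w_{c_j}^r)=m_{c_j}^l$; then no man is truly preferred to her partner and no $P$-blocking pair contains her. For $w_{x_i}^{+_2}$, whose true list begins $m_{x_i}^{-_3}\succ m_{x_i}^{+_1}\succ m_{x_i}^{+_2}$ with $\mu(w_{x_i}^{+_2})=m_{x_i}^{+_2}$, strict improvement forces $\mu'(w_{x_i}^{+_2})\in\{m_{x_i}^{-_3},m_{x_i}^{+_1}\}$; if it is $m_{x_i}^{-_3}$ (her top man) we conclude as above, so the only remaining case is $\mu'(w_{x_i}^{+_2})=m_{x_i}^{+_1}$, in which the unique candidate blocking partner is $m_{x_i}^{-_3}$. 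The situation for $w_{x_i}^{-_2}$ is symmetric, with $m_{x_i}^{+_3}$ the only candidate.

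Third I would resolve the remaining case by tracing the men-proposing Gale--Shapley run on $P'$. If $\mu'(w_{x_i}^{+_2})=m_{x_i}^{+_1}$, then since $P(m_{x_i}^{+_1})=w_{x_i}^{+_1}\succ w_{x_i}^{+_2}\succ w_{x_i}^{-_3}\succ\cdots$ and $m_{x_i}^{+_1}$ terminates engaged to his second choice, he never proposes to $w_{x_i}^{-_3}$. But $w_{x_i}^{-_3}$ truly ranks $m_{x_i}^{+_1}$ first and $m_{x_i}^{-_3}$ second, so the only man for whom she would ever reject $m_{x_i}^{-_3}$ is $m_{x_i}^{+_1}$, who never reaches her; hence $m_{x_i}^{-_3}$ is never rejected below his second choice and $\mu'(m_{x_i}^{-_3})\in\{w_{x_i}^{-_2},w_{x_i}^{-_3}\}$, both of which he truly prefers to $w_{x_i}^{+_2}$ (the latter lies below $w_{x_i}^{-_3}$ on $P(m_{x_i}^{-_3})$). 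Therefore $(m_{x_i}^{-_3},w_{x_i}^{+_2})$ is not a $P$-blocking pair; the argument for $(m_{x_i}^{+_3},w_{x_i}^{-_2})$ is identical after exchanging the positive and negative sides. Combining the three steps, $\mu'$ has no $P$-blocking pair and so is stable under $P$.

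The main obstacle is the last step: one must verify that a manipulator's favourable outcome still leaves the "dangerous" man ($m_{x_i}^{-_3}$ or $m_{x_i}^{+_3}$) trapped among the top two entries of his true list, which relies on the precise interlocking of the variable-gadget preferences rather than on any general stable-matching fact. In particular one cannot simply invoke Lemma~\ref{lem:propose_woman} here, since its hypothesis---that the induced matching is stable under the true preferences---is exactly what we are trying to prove.
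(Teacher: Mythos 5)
Your proof is correct and follows essentially the same route as the paper's: reduce any potential true-preference blocking pair to one involving a manipulator, dispatch $w_{c_j}^r$ and the case where $w_{x_i}^{\pm_2}$ obtains her true favorite, and in the remaining case use the key observation that $m_{x_i}^{\pm_1}$ stops proposing at his second choice, so the truthful woman $w_{x_i}^{\mp_3}$ (whose only better option is $m_{x_i}^{\pm_1}$) never rejects $m_{x_i}^{\mp_3}$, who therefore never falls below his top two choices. The only difference is cosmetic: the paper phrases instability as a manipulator rejecting the best proposal she ever receives (concluding $m_{x_i}^{\mp_3}$ never even proposes to $w_{x_i}^{\pm_2}$), whereas you phrase it as the candidate blocking pair failing on the man's side; you are also right that Lemma~\ref{lem:propose_woman} cannot be invoked here, and the paper indeed does not use it in this argument.
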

\begin{proof}
First, we point out that if a manipulation induces an unstable matching, then some woman must reject the best proposal she could have in the entire process. Henceforth, she must be a manipulator, while $L = \{w_{x_i}^{+_2} ~|~ \forall 1 \leq i \leq n\} \cup \{w_{x_i}^{-_2} ~|~ \forall 1 \leq i \leq n\} \cup \{w_{c_j}^r  ~|~ \forall 1 \leq j \leq m\}$ in our construction.

Notice that for all $1 \leq j \leq m$, $w_{c_j}^r$ can only be matched with $m_{c_j}^l$ if she is better off, and thus she cannot reject her best received proposal. 

The remaining manipulators are $w_{x_i}^{+_2}$ and $w_{x_i}^{-_2}$ for $1 \leq i \leq n$. Consider $w_{x_i}^{+_2}$ and the argument for $w_{x_i}^{-_2}$ is similar due to symmetry of construction. Since $w_{x_i}^{+_2}$ is better off, $w_{x_i}^{+_2}$ must be matched with either $m_{x_i}^{-_3}$ or $m_{x_i}^{+_1}$. In the case that she rejects her best received proposal, $w_{x_i}^{+_2}$ must be matched with $m_{x_i}^{+_1}$ and reject $m_{x_i}^{-_3}$. However, if $w_{x_i}^{+_2}$ is matched with $m_{x_i}^{+_1}$, $m_{x_i}^{+_1}$ stops proposing after meeting $w_{x_i}^{+_2}$, and thus, $w_{x_i}^{-_3}$ cannot reject $m_{x_i}^{-_3}$ since $w_{x_i}^{-_3}$ is a non-manipulator and she does not receive her favorite man $m_{x_i}^{+_1}$ to reject her second favorite man $m_{x_i}^{-_3}$. Therefore, $m_{x_i}^{-_3}$ has no chance to propose to $w_{x_i}^{+_2}$ and get rejected. 
\end{proof}

Theorem \ref{theo:hardness_better} follows from combining all the above results.  

\subsection{Proof of Theorem \ref{sharp_P}.}
\begin{reptheorem}{sharp_P}
	It is \#P-complete to compute the number of $S_L$-Pareto-optimal matchings, which are strictly better off for all manipulators.    
\end{reptheorem}

\begin{proof}
First of all, it is easy to check whether a matching is strictly better off and by using Algorithm~\ref{algorithm}, we can efficiently check whether a matching is $S_L$-Pareto-optimal. Therefore, this problem is in \#P.

Since computing the number of satisfiable assignment for {\sc 3-SAT} problem is \#P-complete, we only need to show that our reduction is {\em parsimonious}, i.e., the numbers of solutions in each problem are the same. 

Denoted by {\sc PARETO-BETTER} the problem of finding  $S_L$-Pareto-optimal matchings which are strictly better off for all manipulators. First, we show that given one satisfiable assignment for {\sc 3-SAT} problem, we can construct a solution to {\sc PARETO-BETTER}. According to Lemma \ref{lem:hard_onlyif}, we can construct a solution that makes all manipulators better off. Thus, it is sufficient to show that the constructed solution is also $S_L$-Pareto-optimal. In fact, for all $1 \leq i \leq n$, either $w_{x_i}^{+_2}$ or $w_{x_i}^{+_2}$ is matched with her favorite partner, but it is impossible for them to be matched with their favorite partners simultaneously. Moreover, for all $1 \leq j \leq m$, $w_{c_j}^r$ is matched with her favorite partner. Thus, such a solution must be $S_L$-Pareto-optimal.

Second, we show that given a solution to {\sc PARETO-BETTER}, we can construct a satisfiable assignment for {\sc 3-SAT} problem. From Lemma \ref{lem:hard_if}, we have shown that given a matching that makes all manipulators better off, we can construct a satisfiable assignment. Thus, given a solution to {\sc PARETO-BETTER}, we are also able to construct a satisfiable assignment for {\sc 3-SAT} problem.
\end{proof}

\end{document}